\newcommand{\vol}{\mathrm{volume}}
\newcommand{\area}{\mathrm{area}}
\newcommand{\Reals}{\mathbb{R}}
\newcommand{\Integers}{\mathbb{Z}}
\newcommand{\unittile}{\ensuremath{U}\xspace}
\newcommand{\prefix}{\mathrm{pre}}
\newcommand{\postfix}{\mathrm{post}}
\newcommand{\plmin}{\pm}
\newcommand{\degr}{\mathrm{degr}}
\newcommand{\tdegr}{\mathrm{tiles}}
\newcommand{\doors}{\mathrm{ends}}
\newcommand{\angl}{\mathrm{angle}}
\newcommand{\turn}{\mathrm{turn}}
\newcommand{\Angl}{\mathrm{Angle}}
\newcommand{\Turn}{\mathrm{Turn}}
\newcommand{\ARRWW}{$AR^2W^2$\xspace}
\def\andfrac#1/#2{%
   \leavevmode\kern.1em
   \raise.5ex\hbox{\the\scriptfont0 #1}\kern-.1em
   /\kern-.15em\lower.25ex\hbox{\the\scriptfont0 #2}}
\newtheorem{theorem}{Theorem}
\newtheorem{lemma}{Lemma}
\newenvironment{proof}{Proof:}{\qed}
\def\squareforqed{\hbox{\rlap{$\sqcap$}$\sqcup$}}
\def\qed{\ifmmode\squareforqed\else{\unskip\nobreak\hfil
\penalty50\hskip1em\null\nobreak\hfil\squareforqed
\parfillskip=0pt\finalhyphendemerits=0\endgraf}\fi}
\newtheorem{observation}{Observation}
\begin{document}

\title{Recursive tilings and space-filling curves with little fragmentation}

\author{%
Herman~Haverkort\thanks{Dept.\ of Computer Science, Eindhoven University of Technology, the Netherlands, cs.herman@haverkort.net}
}
\maketitle

\begin{abstract}
This paper defines the Arrwwid number of a recursive tiling (or space-filling curve) as the smallest number $a$ such that any ball $Q$ can be covered by $a$ tiles (or curve sections) with total volume $O(\vol(Q))$. Recursive tilings and space-filling curves with low Arrwwid numbers can be applied to optimise disk, memory or server access patterns when processing sets of points in $\Reals^d$. This paper presents recursive tilings and space-filling curves with optimal Arrwwid numbers. For $d \geq 3$, we see that regular cube tilings and space-filling curves cannot have optimal Arrwwid number, and we see how to construct alternatives with better Arrwwid numbers.
\end{abstract}

\section{Introduction}
\subsection{The problem}
Consider a set of data points in a bounded region \unittile of $\Reals^2$, stored on disk. A standard operation on such point sets is to retrieve all points that lie inside a certain query range, for example a circle or a square. To prevent large delays because of disk head movements while answering such queries, it is desirable that the points are stored on disk in a clustered way~\cite{Asano1997,Jagadish1990,Jagadish1997,Kumar1994,Moon2001}. Similar considerations arise when storing spatial data in certain types of distributed networks~\cite{Scholl2009} or when scanning spatial objects to render them as a raster image; in the latter case it is desirable that the pixels that cover any particular object are scanned in a clustered way, so that the object does not have to be brought into cache too often~\cite{Voorhies1991}. For ease of explanation, we focus on the application of clustering to storing points on disks.

\begin{figure}
\centering
\includegraphics[height=7cm]{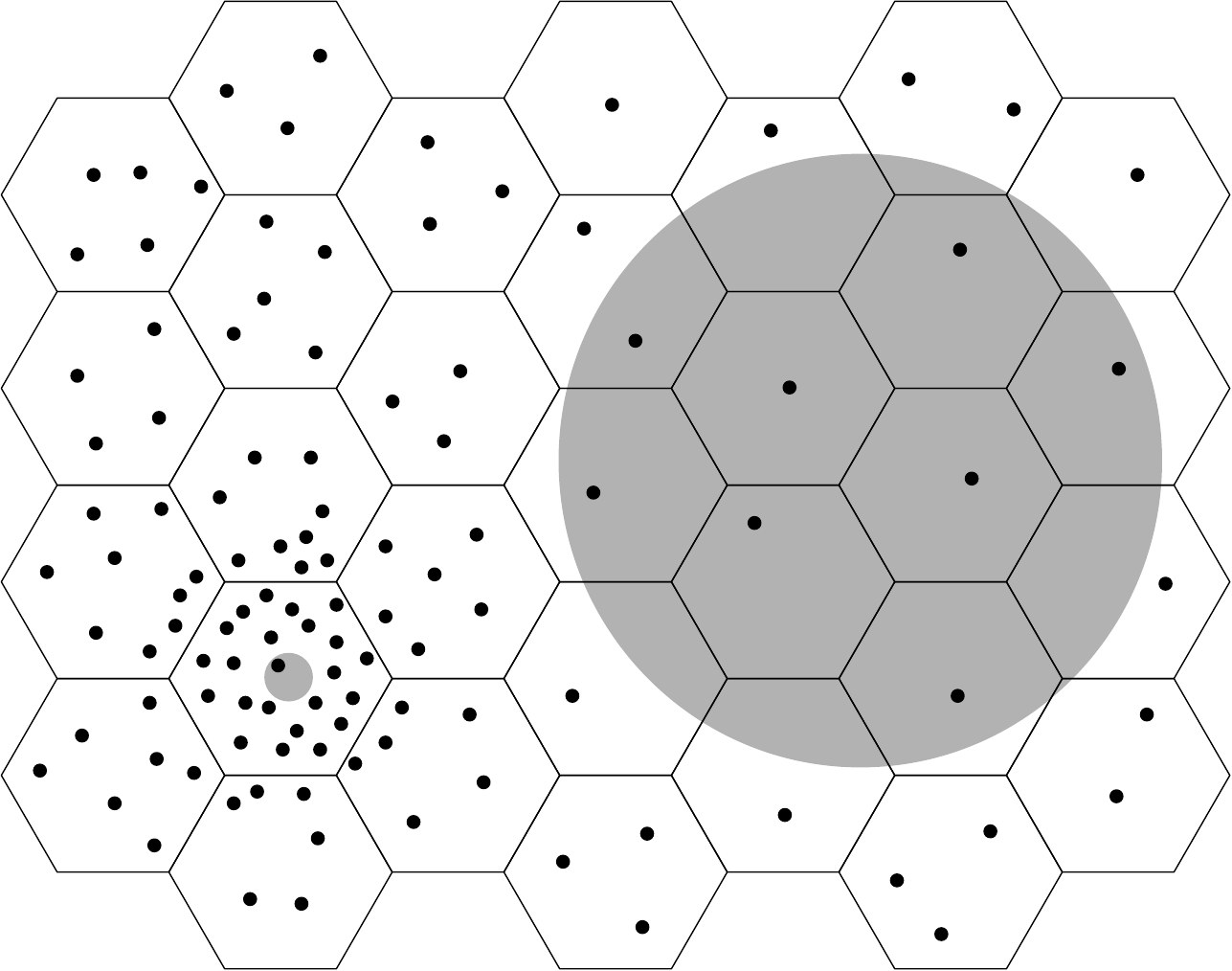}
\caption{Sorting points into (non-)recursive tiles. Queries with small query ranges (shaded disk on the left) may necessitate scanning the full contents of a crowded tile. Queries with large ranges (shaded disk on the right) may necessitate looking up many tiles.}
\label{fig:queries}
\end{figure}

We could try to achieve a good clustering in the following way. We divide \unittile into \emph{tiles}. The tiles could, for example, form a regular grid of hexagons (see Figure~\ref{fig:queries}). Now we store the points in each tile as a contiguous block on disk. To answer a query, say with a region $Q$ bounded by a circle, we compute which tiles intersect~$Q$. For each intersecting tile, we move the disk read head to the position where the first point in that tile is stored, and then we retrieve all points in the tile by just scanning them sequentially without further delays from disk head movements. Since some of the tiles that intersect $Q$ may lie partially outside~$Q$, some of the points thus retrieved may be \emph{false answers}: they are no answers to our query and need to be filtered out in post-processing.

The approach sketched above may work pretty well, provided the following conditions are met:\begin{enumerate}
\item We can compute efficiently which tiles intersect~$Q$.
\item We can figure out efficiently where the contents of these tiles are stored on disk (for example, using a small index to be kept in main memory).
\item A couple of tiles suffice to cover $Q$ (so that we do not have to move the disk read head to the starting point of another tile too often);
\item The tiles that cover $Q$ are not much larger than $Q$ (so that they do not contain too many false answers).
\end{enumerate}
These conditions are subject to a trade-off. On one extreme, we could use only one tile that covers all of \unittile (making the first three conditions trivial), bring the disk head into position only once, and scan all data points---including lots of false answers. On the other extreme, we could make the tiles so small that each of them contains very few points and we get very few false answers, but then we may get large delays because of disk head movements when going from tile to tile; the first and second condition may also be harder to satisfy in this case. The best balance is to be found somewhere between these extremes, but where?

If the data points are not uniformly distributed in \unittile and we choose any fixed tile size, then some of the tiles may contain many more data points than others. When the query range $Q$ is a small region inside such a crowded tile, we would need to retrieve all points in the tile, including many false answers (see Figure~\ref{fig:queries}). At the same time, there may be regions with many tiles with very few data points each. Queries that cover many such tiles (see Figure~\ref{fig:queries}) may incur an overhead for retrieving many separate tiles that is disproportionally large relative to the number of data points found.

\begin{figure}
\centering
\includegraphics[height=7cm]{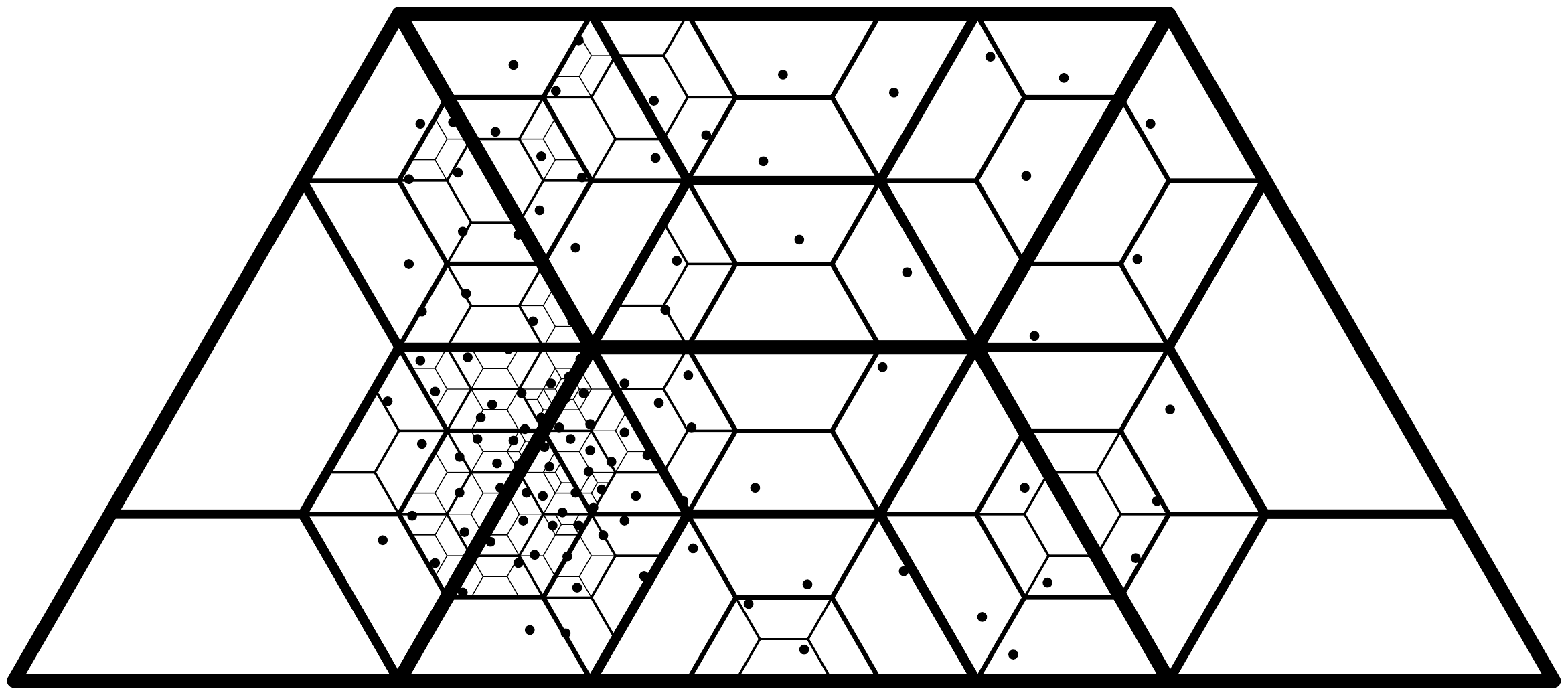}
\caption{A recursive tiling into trapezoids.}
\label{fig:trapezoids}
\end{figure}

This brings us to the topic of this paper: \emph{recursive} tilings. A recursive tiling is a subdivision of \unittile into tiles, that are each subdivided into tiles recursively. For example, Figure~\ref{fig:trapezoids} shows a recursive tiling based on trapezoids, expanded down to the level where each tile contains at most one data point. We store the data points in such a way that for each tile, on any level of recursion, the data points within that tile are stored as a contiguous sequence on the disk. Hopefully, if we get it right, we can now cover the region within any circle $Q$ with a small set of tiles from the level of recursion where the tiles have size proportional to $Q$. Thus we would satisfy condition three and four at the same time, and because we need to retrieve only few tiles, we could also hope to satisfy condition one and two. Considering that moving the disk head once may easily cost as much time as scanning ten thousands of points from disk, it is really important to keep the number of tiles small. Hence the topic of this study: what recursive tilings make it possible to cover any disk-shaped query region $Q$ with the smallest possible number of tiles, while still making sure that the total size of the tiles is at most some constant times the size of~$Q$?

In this paper, we study this question for two- and three-dimensional settings, and we give some basic results for higher dimensions. We also study what can be achieved by controlling the order in which tiles are stored. Note that until now, we only required that the data points within each tile are stored contiguously, but we did not assume anything about the order in which the subtiles of any tile are stored with respect to each other. A well-chosen order may have the result that some of the tiles used to cover the query range are stored consecutively on disk, thus eliminating the need to move the disk head when going from one tile to the next.

\subsection{The Arrwwid number}
To be able to make our problem statement more precise, we define the Arrwwid number of a recursive tiling of a region \unittile as follows:\begin{quote}
The Arrwwid number is the smallest number $a$ such that there is a constant $c$ such that any disk $Q$ that lies entirely in \unittile can be covered by $a$ tiles with total area at most $c \cdot \area(Q)$.
\end{quote}
Less formally: the Arrwwid number is the smallest number $a$ such that any disk is covered by at most $a$ relatively small tiles. We call $c$ the \emph{cover ratio}. The definition is essentially from Asano, Ranjan, Roos, Welzl and Widmayer~\cite{Asano1997}, and we name it \emph{Arrwwid number} in their honour.

When a recursive tiling is enhanced with a recursive definition of how the subtiles within each tile are ordered relative to each other, the result constitutes the definition of a \emph{recursive scanning order} or a \emph{recursive space-filling curve} (this paper uses these two terms interchangeably; they are explained in more detail in Section~\ref{sec:2DSFC-definition}). The Arrwwid number of a space-filling curve that fills a region \unittile is defined as follows:\begin{quote}
The Arrwwid number is the smallest number $a$ such that there is a constant $c$ such that any disk $Q$ that lies entirely in \unittile can be covered by $a$ curve fragments (that is, sets of consecutive tiles) with total area at most $c \cdot \area(Q)$.
\end{quote}

In the above definitions we could exchange disks for squares: this would only affect the cover ratios $c$ but not the Arrwwid numbers $a$. To see this, observe that if the Arrwwid number is $a$ with respect to disks, with cover ratio $c$, then any square $Q$ can be covered by a disk of area $\frac\pi2 \cdot \area(Q)$ which can subsequently be covered by at most $a$ tiles of total area at most $c \cdot \frac\pi2 \cdot \area(Q)$. Therefore the Arrwwid number with respect to squares is at most the Arrwwid number with respect to disks. Furthermore, if the Arrwwid number is $a$ with respect to squares, with cover ratio $c$, then any disk $Q$ can be covered by a square of area $\frac4\pi \cdot \area(Q)$, which can subsequently be covered by at most $a$ tiles of total area at most $c \cdot \frac4\pi \cdot \area(Q)$. Therefore the Arrwwid number with respect to disks is at most the Arrwwid number with respect to squares. It follows that the Arrwwid number is the same regardless of whether disk or square ranges are considered.

The above definitions naturally extend to higher dimensions by replacing disks by balls, area by volume, and squares by (hyper)cubes.

\subsection{Related work}
Jagadish, Kumar and Moon et al.\ studied how well space-filling curves succeed in keeping the number of fragments needed to cover a query range low~\cite{Jagadish1990,Jagadish1997,Kumar1994,Moon2001}. The curve quality measures in their work are based on the number of fragments needed to cover a query range, averaged over a selection of query ranges that depends on the underlying tiling. As a result their results can only be used to analyse space-filling curves with the same underlying tiling; in particular they assume a tiling that subdivides squares into \emph{four} smaller squares, expanded down to a fixed level of recursion. This class of curves includes well-known curves such as the Hilbert curve~\cite{Hilbert1891} and Z-order, also known as Morton order or Lebesgue order~\cite{Lebesgue1904}. However, it does not include, for example, Peano's curve~\cite{Peano1890}, which is based on subdividing squares into \emph{nine} squares and seems to be the curve of choice in certain applications~\cite{Bader2006a,Voorhies1991}. The work by Jagadish, Kumar and Moon et al.\ does not enable a comparison between, for example, Hilbert's curve and Peano's curve.

The Arrwwid number, as defined above, does not have this limitation: it admits a comparison between curves with different underlying tilings. Nevertheless Asano et al., too, only studied curves based on the tiling with four squares per square~\cite{Asano1997}. The Arrwwid number of such a tiling is four. Asano et al.\ studied what can be achieved by controlling the order in which the tiles are stored: they presented an ordering scheme, the \ARRWW space-filling curve, that guarantees that whenever four tiles are needed, at least two of them are consecutive on disk. Thus these four tiles can be divided into at most three sets such that the tiles within each set are consecutive, and thus the Arrwwid number of the \ARRWW space-filling curve is three. Asano et al.\ also proved that one cannot do better: no ordering scheme of this particular tiling has Arrwwid number less than three.

In practice, the effectiveness of a curve in optimising disk access time will depend on the cost of ``jumping'' to another fragment relative to the cost of scanning a false answer. None of the curve quality measures in the papers cited above take this into account. Bugnion et al.~\cite{Bugnion1997} studied average disk access times as a function of the cost of jumping relative to scanning. Their work considers scanning orders that follow square grid tilings in such a way that each tile touches the previous tile in the scanning order. Note that all curves mentioned so far, except Z-order, are like that. In such tilings, consecutive tiles may be \emph{horizontally connected} (sharing a vertical edge), \emph{vertically} connected (sharing a horizontal edge) or they may be \emph{diagonally connected} (touching only in a vertex). Bugnion et al.\ analysed the performance of scanning orders under the assumption that they resemble random walks with a given proportion of horizontal, vertical and diagonal connections (and some further simplifying assumptions). Their results suggest that diagonal connections (as in the \ARRWW curve) are harmful for random walks, regardless of the cost of jumping relative to scanning.

\subsection{Results}
In this paper, we extend the scope of our knowledge on Arrwwid numbers to different tilings, which do not necessarily have four squares per square, and to higher dimensions. The results are the following: in two dimensions, no recursive tiling and no recursive space-filling curve has Arrwwid number better than three if the tiles are simply connected regions in the plane. This paper presents recursive tilings with Arrwwid number three (regardless of the ordering), as well as an alternative square-based space-filling curve with Arrwwid number three but without the diagonal connections that seem to hurt the performance of the \ARRWW curve.

In $d$ dimensions, no recursive tiling has Arrwwid number better than $d+1$. We prove that in three dimensions, putting the tiles in a certain order will not make it possible to get below this bound if the tiles are convex polyhedra. There are recursive tilings with fractal-shaped tiles that have Arrwwid number $d+1$, and tilings with rectangular blocks that have Arrwwid number $\frac34 \cdot 2^d$. In two dimensions, space-filling curves with optimal Arrwwid number can be constructed by defining a good order on a regular square-based tiling, but this does not generalise to three (or more) dimensions. Regular (Hyper)cube-based tilings have Arrwwid number $2^d$, and any space-filling curve based on such a tiling has Arrwwid number at least $2^d - 1$. For $d \geq 3$ this is more than what can be achieved with rectangular blocks.

\section{Two-dimensional tilings}

\subsection{Defining recursive tilings}
A recursive tiling of a region \unittile (called the \emph{unit tile}) in the plane is defined by a finite set of recursive tiling rules. Each rule specifies (i) the shape of the region to be tiled; (ii) how this region is tiled with a fixed number of tiles; (iii) which rules should be applied to subdivide each of these tiles recursively. Figure~\ref{fig:tilings} shows a number of examples. Each rule is identified by a letter, and depicted by drawing the shape of its region, the tiles, and within the tiles, the letters of the rules to be applied to them; each letter is rotated and/or mirrored to reflect the transformations that should be applied to the subtiles of the tile. Next to each set of rules we see the tiling that is produced after expanding the recursion down to tiles of a few millimeters.

\begin{figure}
\centering
\includegraphics[width=\hsize]{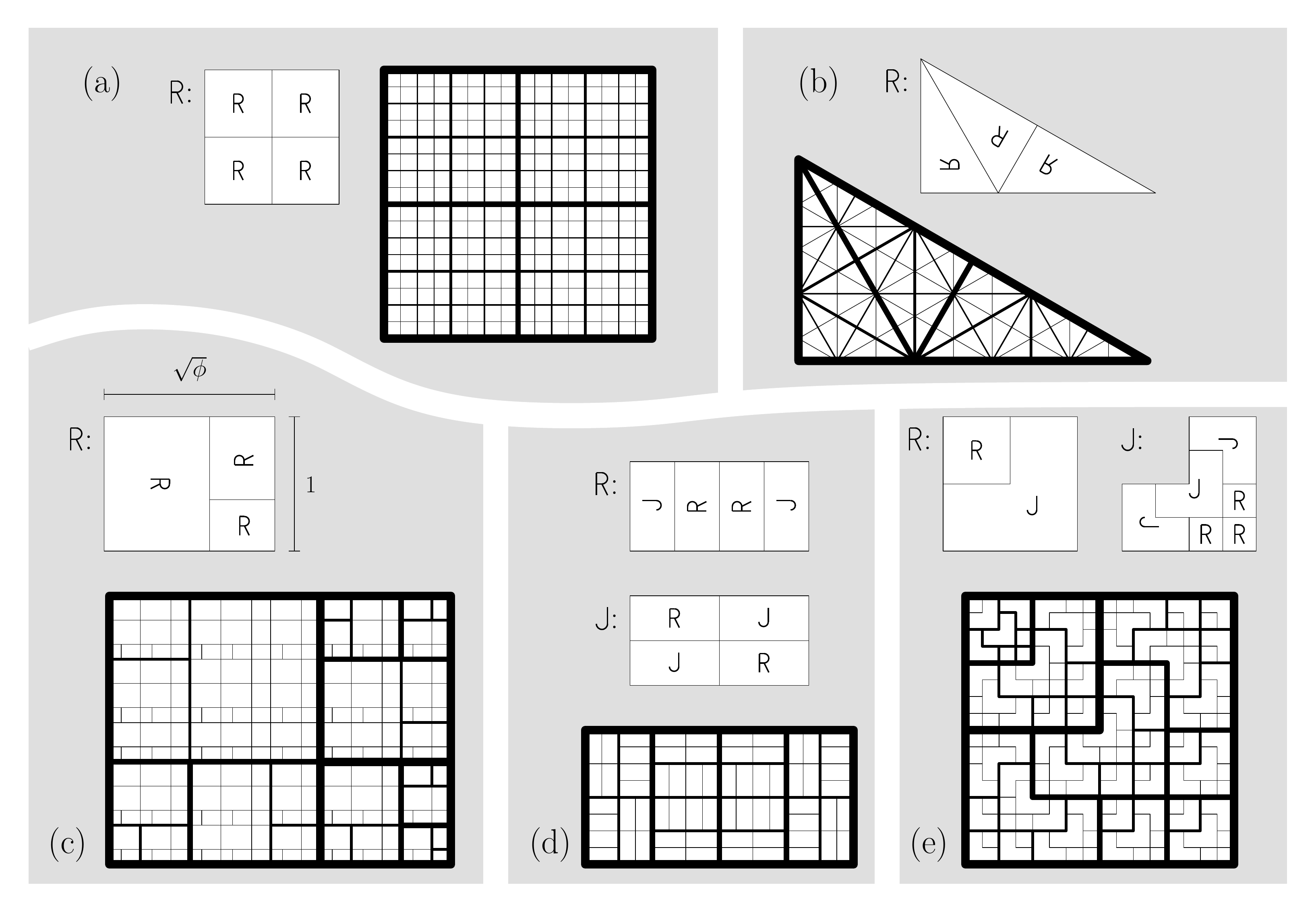}
\caption{Examples of recursive tilings. (a) A simple, uniform, regular recursive tiling. (b) A simple, uniform recursive tiling. (c) A simple, non-uniform recursive tiling based on the golden ratio $\phi = \frac12 \sqrt{5} + \frac12$. (d) Composite and uniform. (e) Composite, non-uniform.}
\label{fig:tilings}
\end{figure}

\emph{Simple} tilings are recursive tilings that only use one rule (Figure~\ref{fig:tilings}(a,b,c)). Using a term coined by Mandelbrot~\cite{Mandelbrot1983}, we could also call them \emph{pertilings}. In contrast, \emph{composite} tilings are recursive tilings that use multiple rules (Figure~\ref{fig:tilings}(d,e)).
We define \emph{uniform} tilings as recursive tilings in which all tiles have the same shape, and each rule subdivides such a shape into an equal number of tiles of equal size (Figure~\ref{fig:tilings}(a,b,d)).
By \emph{square}, \emph{rectangular}, \emph{triangular} and \emph{fractal} tilings we mean tilings whose tiles are square, rectangular, triangular, or fractal-shaped.
\emph{Regular} recursive tilings are recursive tilings that form a fully regular grid or tiling when the recursion is expanded to any fixed depth (Figure~\ref{fig:tilings}(a)).

The \emph{size} of a uniform tiling is the number of tiles in each rule.
For any given (recursive or non-recursive) tiling, the \emph{degree} of a point $p$ in \unittile is the maximum number of interior-disjoint tiles that meet in $p$.
The \emph{vertex degree} of a (recursive or non-recursive) tiling is the maximum degree of $p$ over all points $p \in \unittile$.

The results in this paper apply to simple as well as composite tilings, and uniform as well as non-uniform tilings. We only require that the number of recursion rules is finite, that each rule that defines the tiling starts with a finite region, and that the transformations that may be applied to the rules preserve similarity.

\subsection{The Arrwwid number of a tiling: basic bounds}
\begin{observation}\label{ob:squaretiling}
The uniform square tiling of Figure~\ref{fig:tilings}(a) has Arrwwid number four.
\end{observation}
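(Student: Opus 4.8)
The plan is to prove matching upper and lower bounds: that four tiles of total area $O(\area(Q))$ always suffice to cover any disk, and that three tiles never do.

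For the upper bound, I would use that the tiling of Figure~\ref{fig:tilings}(a), expanded to depth $k$, is a regular grid of axis-parallel squares of side $2^{-k}$. Given a disk $Q$ of radius $r$ contained in $\unittile$, I would choose the depth $k$ for which the grid squares have side $s = 2^{-k}$ with $2r \le s < 4r$; such a depth exists because the side lengths form the geometric sequence $1, \tfrac12, \tfrac14, \dots$. Since $Q$ fits inside an axis-parallel bounding square of side $2r \le s$, it meets at most two grid columns and at most two grid rows, hence at most $2 \times 2 = 4$ tiles of this level. These four tiles cover $Q$ and have total area $4 s^2 < 4 (4r)^2 = 64 r^2 = (64/\pi)\,\area(Q)$, so taking cover ratio $c = 64/\pi$ shows the Arrwwid number is at most four.

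For the lower bound, I would exploit that the tiling has a vertex of degree four, for instance the centre $v$ of $\unittile$, which is the common corner of four tiles at every recursion depth $k \ge 1$. The key structural claim is that, near $v$, every tile of depth $k \ge 1$ has $v$ as a corner and therefore covers points of only one of the four quadrants around $v$, whereas the unique tile containing $v$ in its interior, and hence covering all four quadrants, is $\unittile$ itself, of area $1$. I would then argue by contradiction: suppose three tiles of total area at most $c \cdot \area(Q)$ cover every disk $Q \subseteq \unittile$, and apply this to a disk $Q_\varepsilon$ of small radius $\varepsilon$ centred at $v$. To cover the part of $Q_\varepsilon$ in each of the four quadrants, the covering must, for each quadrant, use either $\unittile$ or a depth-$k$ tile ($k \ge 1$) lying in that quadrant; since each such small tile serves a single quadrant, three small tiles leave one quadrant uncovered. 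Hence the covering must include $\unittile$, of area $1$, but for $\varepsilon$ small enough we have $c \cdot \area(Q_\varepsilon) = c \pi \varepsilon^2 < 1$, a contradiction. Therefore four tiles are necessary and the Arrwwid number is at least four.

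The main obstacle is the bookkeeping in the lower bound: making precise that a tile not having $v$ in its interior cannot cover a full neighbourhood of $v$, and in particular cannot contribute to more than one quadrant, so that the only way to reach all four quadrants with fewer than four tiles is to include a tile so coarse (ultimately $\unittile$) that the area budget $c \cdot \area(Q_\varepsilon)$ is violated once $\varepsilon$ is small. Once this quadrant-counting is nailed down, combining it with the upper bound yields that the Arrwwid number equals exactly four.
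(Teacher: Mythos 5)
Your proposal is correct and follows essentially the same route as the paper: the upper bound (choosing the recursion level where tiles have side between $2r$ and $4r$, so that $Q$ crosses at most one horizontal and one vertical grid line) is identical, and the lower bound is the same degree-four-vertex argument, only instantiated at the centre of \unittile with a shrinking disk rather than, as in the paper, at the centre of a sufficiently deep tile $T$ with a fixed-radius disk. Both variants force any sub-four-tile covering to include a single tile whose area exceeds the budget $c \cdot \area(Q)$, so the difference is purely cosmetic.
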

\begin{proof}
We first prove that the Arrwwid number of the uniform square tiling is at most four.
Let $r$ be the radius of any disk $Q$. Since the tiles differ in width by a factor two from one level of recursion to the next, there is a level of recursion where the tiles have side lengths between $2r$ and $4r$, and hence, area between $4r^2$ and $16r^2$. The distance between any pair of horizontal lines of the grid formed by these tiles is at least $2r$, so $Q$ is crossed by at most one horizontal grid line. By the same argument, $Q$ is crossed by at most one vertical grid line. It follows that at most four tiles of this grid intersect $Q$, so $Q$ can always be covered by at most four tiles with total area at most $4 \cdot 16r^2 = 64r^2$. Hence the Arrwwid number of the square tiling is at most four (with cover ratio at most $64r^2 / \pi r^2 = 64/\pi$).

It remains to show that the Arrwwid number of the uniform square tiling cannot be three or less. Suppose the Arrwwid number would be at most three, so there is a constant $c$ such that any disk of radius $r$ can be covered by at most three tiles with total area at most $cr^2$. Consider a fixed value $r$, a tile $T$ with area more than $c r^2$, and a disk $Q$ of radius $r$ placed in the middle of this tile~$T$, that is, centered on the spot where the four subtiles of $T$ meet. Thus $Q$ intersects each of the four subtiles of $T$, and the only way to cover it with less than four tiles is by covering it with $T$ itself---but the area of $T$ is larger than $c r^2$. This contradicts our assumptions and thus proves that the Arrwwid number of the uniform square tiling is at least four.
\end{proof}

The above arguments illustrate that there is a relation between the vertex degree of a tiling and its Arrwwid number:
\begin{observation}
The Arrwwid number of a recursive tiling cannot be lower than its vertex degree.
\end{observation}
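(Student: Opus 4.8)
The plan is to generalise the lower-bound half of Observation~\ref{ob:squaretiling}. Let $k$ be the vertex degree. I may assume $k \ge 2$: for $k = 1$ the claim is immediate, since every nonempty disk requires at least one tile. By definition there is a point $p \in \unittile$ and $k$ interior-disjoint tiles $T_1,\dots,T_k$ whose closures all contain $p$. Because $k \ge 2$ and interior-disjoint tiles cannot include \unittile itself (the interior of any subtile meets the interior of \unittile), each $T_i$ is a proper subtile and therefore has a parent tile in the recursion tree; let $P_{\min}>0$ be the smallest area among these parents. I will argue that, for small $r$, a disk of radius $r$ centred at $p$ cannot be covered by fewer than $k$ tiles without using one tile of area at least $P_{\min}$.

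Suppose, for contradiction, that the Arrwwid number is at most $k-1$, witnessed by a cover ratio $c$. I would first fix $r$ so small that $c r^2 < P_{\min}$ and that the open disk $Q$ of radius $r$ around $p$ meets the interior of every $T_i$; the latter is possible because $p$ lies in the closure of $\mathrm{int}(T_i)$, so interior points of $T_i$ lie arbitrarily close to $p$. (If $p$ happens to lie on the boundary of \unittile, I would instead apply the argument to one of the self-similar interior copies of this configuration produced by the recursion, so that $Q$ still fits inside \unittile.) By assumption $Q$ is then covered by at most $k-1$ tiles of total area at most $c r^2 < P_{\min}$.

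The heart of the argument is that the $k$ regions $Q \cap \mathrm{int}(T_i)$ are nonempty, open and pairwise disjoint, yet are covered by fewer than $k$ tiles, so by the pigeonhole principle some covering tile $C$ overlaps the interiors of two distinct tiles $T_i$ and $T_j$ in a set of positive area. Here I would invoke the key structural property of a recursive tiling: any two of its tiles are either nested or interior-disjoint, since each tile is contained in a unique ancestor at every shallower level. Consequently $C$, having interior overlap with both $T_i$ and $T_j$, must contain each of them; as it cannot be contained in either (their interiors are disjoint), $C$ is a proper ancestor of $T_i$. Hence $\area(C) \ge P_{\min} > c r^2$, contradicting that the covering tiles---of which $C$ is one---have total area at most $c r^2$. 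This contradiction shows the Arrwwid number is at least $k$.

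I expect the main obstacle to be the step that turns ``$C$ covers part of two sectors'' into ``$C$ contains two whole neighbouring tiles'': this relies on the nested-or-disjoint structure together with the mild fact, implicitly assumed throughout the paper, that tile boundaries have zero area, so that positive-area overlap forces genuine interior overlap. Everything else is bookkeeping: the self-similarity of the tiling lets me assume the meeting point is interior, and the freedom to shrink $r$ makes the fixed lower bound $P_{\min}$ on the forced ancestor's area dominate the shrinking budget $c r^2$.
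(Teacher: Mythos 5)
Your argument is correct and is essentially the paper's intended one: the paper justifies this observation with the single sentence that a sufficiently small disk centred at a point of maximum degree yields a contradiction (generalising the lower-bound half of Observation~\ref{ob:squaretiling}), and your pigeonhole-plus-nested-or-disjoint elaboration is exactly how that sentence is meant to be unpacked, worked out in more detail than the paper itself provides. The only soft spot is your patch for the case where the maximum-degree point lies on the boundary of \unittile---the claimed ``self-similar interior copy'' of the configuration would need an actual argument, since the rule applied at the root need not recur---but the paper's own one-line proof does not address that case either.
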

This is because any sufficiently small circle centered on a point of maximum degree would yield a contradiction to the assumption that the Arrwwid number is lower.

The subdivision of the unit tile \unittile into its subtiles creates at least one curve $c$ in the interior of~\unittile. Since tile sizes decrease by at least a constant factor with each level of recursion, further down in the recursion $c$ must eventually be subdivided by crossings or T-junctions with curves that divide the tiles on each side of $c$. Thus we get points on $c$ of degree at least three. Hence we get the following:

\begin{theorem}\label{th:2dtilinglb}
Each recursive tiling of a two-dimensional region \unittile has Arrwwid number at least three.
\end{theorem}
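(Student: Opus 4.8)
The plan is to reduce the statement to a claim about vertex degree and then to exhibit a single point of degree at least three. By the observation that the Arrwwid number of a recursive tiling cannot be lower than its vertex degree, it suffices to prove that every recursive tiling of \unittile has vertex degree at least three, i.e.\ that there is a point in \unittile where at least three pairwise interior-disjoint tiles meet.

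First I would make precise the curve $c$ mentioned in the paragraph preceding the theorem. Every tiling rule must split its region into at least two tiles, since otherwise tile sizes would not decrease from one level of recursion to the next, contradicting the stated property that tile sizes shrink by at least a constant factor. Hence the first subdivision of \unittile produces (at least) two tiles $A$ and $B$ that share a boundary arc $c$ of positive diameter, lying, except possibly for its endpoints, in the interior of \unittile.

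Next I would exploit the shrinking-diameter property. Expanding the recursion inside $A$ to depth $k$ partitions $A$ into finitely many subtiles, each of diameter at most some $\delta_k$ with $\delta_k \to 0$. Choosing $k$ large enough that $\delta_k$ is smaller than the diameter of $c$, I would pick two points $p_1, p_2$ in the relative interior of $c$ with $|p_1 - p_2| > \delta_k$; they then necessarily lie in the closures of two \emph{different} subtiles of $A$, since the closure of a subtile has diameter at most $\delta_k$. Parametrising $c$ from $p_1$ to $p_2$ and letting $v$ be the last point still lying in the closure of the subtile $A_i$ that contains $p_1$, closedness of $\overline{A_i}$ together with finiteness of the subtile set forces $v$ to lie also in the closure of a second subtile $A_j \neq A_i$. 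As $v$ is an interior point of $c$, it lies in $\overline{B}$ as well. Thus $A_i$, $A_j$ and $B$ are three pairwise interior-disjoint tiles meeting at $v$, so $v$ has degree at least three and the tiling has vertex degree at least three, which is what I set out to show.

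The main obstacle is making precise that refining $A$ genuinely creates a branch point \emph{on} $c$, rather than internal subtile boundaries merely running alongside $c$ without ever touching it. The transition-point argument above is designed to sidestep exactly this worry: it makes no assumption about how the interior boundaries of $A$ are routed, relying only on the fact that a single small-diameter subtile cannot cover all of $c$, so that some transition between subtiles must occur at a point of $c$. A secondary technical point to handle carefully is the degenerate possibility that $A$ and $B$ meet only in isolated points rather than along an arc; ensuring, as argued above, that the first-level rule yields a genuine separating curve of positive diameter rules this out.
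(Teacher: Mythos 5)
Your proposal follows essentially the same route as the paper: it reduces the theorem to the observation that the Arrwwid number is at least the vertex degree, and then argues that the curve $c$ created in the interior of \unittile{} by the first subdivision must acquire a point of degree three once the recursion on one side of $c$ is expanded far enough. Your transition-point argument (the last point of $c$ lying in the closure of the subtile containing $p_1$) is in fact a more careful elaboration of the paper's two-sentence sketch, which merely asserts that $c$ must eventually be subdivided by crossings or T-junctions; both versions rest on the same tacit assumption that two first-level tiles share a parametrisable boundary arc of positive diameter rather than, say, a totally disconnected common boundary.
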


We now ask: are there actually any recursive tilings with Arrwwid number three? In the next subsections, we answer this question with ``yes''.

\subsection{Recursifying non-recursive tilings}
As observed above, the Arrwwid number of a tiling is at least the vertex degree of the tiling. Therefore, to find a tiling with Arrwwid number three, we need to find a tiling with vertex degree at most three. A regular hexagonal tiling has this property, but that tiling is not recursive: a hexagon cannot be subdivided into similar hexagons. We will now see how a recursive tiling can be obtained by recursively approximating a hexagon by smaller hexagons; this procedure will turn the boundaries of the hexagons into fractals. The technique itself is not new; however, I am currently not aware of any published general description such as the one given below.

The procedure is as follows. We start with a \emph{coarse tiling} which is a regular tiling of large hexagonal tiles. We overlay this with a \emph{fine tiling} of small hexagonal tiles, such that the fine tiling looks the same around each large tile (Figure~\ref{fig:recursification}(a)). Now we assign each small tile to a large tile, in such a way that the union of the small tiles assigned to each large tile approximates the shape of the large tile well. To accomplish this we assign each small tile that is completely contained in a single large tile $T$ to $T$. The remaining small tiles are assigned according to a tie-breaking rule that ensures that for each large tile the union of the small tiles assigned to it has the same size and shape. In Figure~\ref{fig:recursification}(b) we do this as follows: when going clockwise around a large tile $T$, starting at three o'clock, we alternate between giving a small tile to a neighbour of $T$ and assigning a small tile to $T$. All large tiles of the coarse grid are now replaced by the union of the small tiles assigned to them (Figure~\ref{fig:recursification}(c)), thus adding detail to the boundaries of the large tiles. We now replace the small tiles by scaled copies of the large tiles (which are no longer hexagons), and again replace the large tiles by the unions of the small tiles assigned to them. This adds even more detail to the boundaries of the large tiles (Figure~\ref{fig:recursification}(d)). When we repeat this process ad infinitum, the boundaries of the tiles will converge to fractal shapes such that each large tile is tiled by nine scaled-down copies of itself (Figure~\ref{fig:recursification}(e)).

\begin{figure}
\centering
\hbox to \hsize{\hss\includegraphics[width=1.1\hsize]{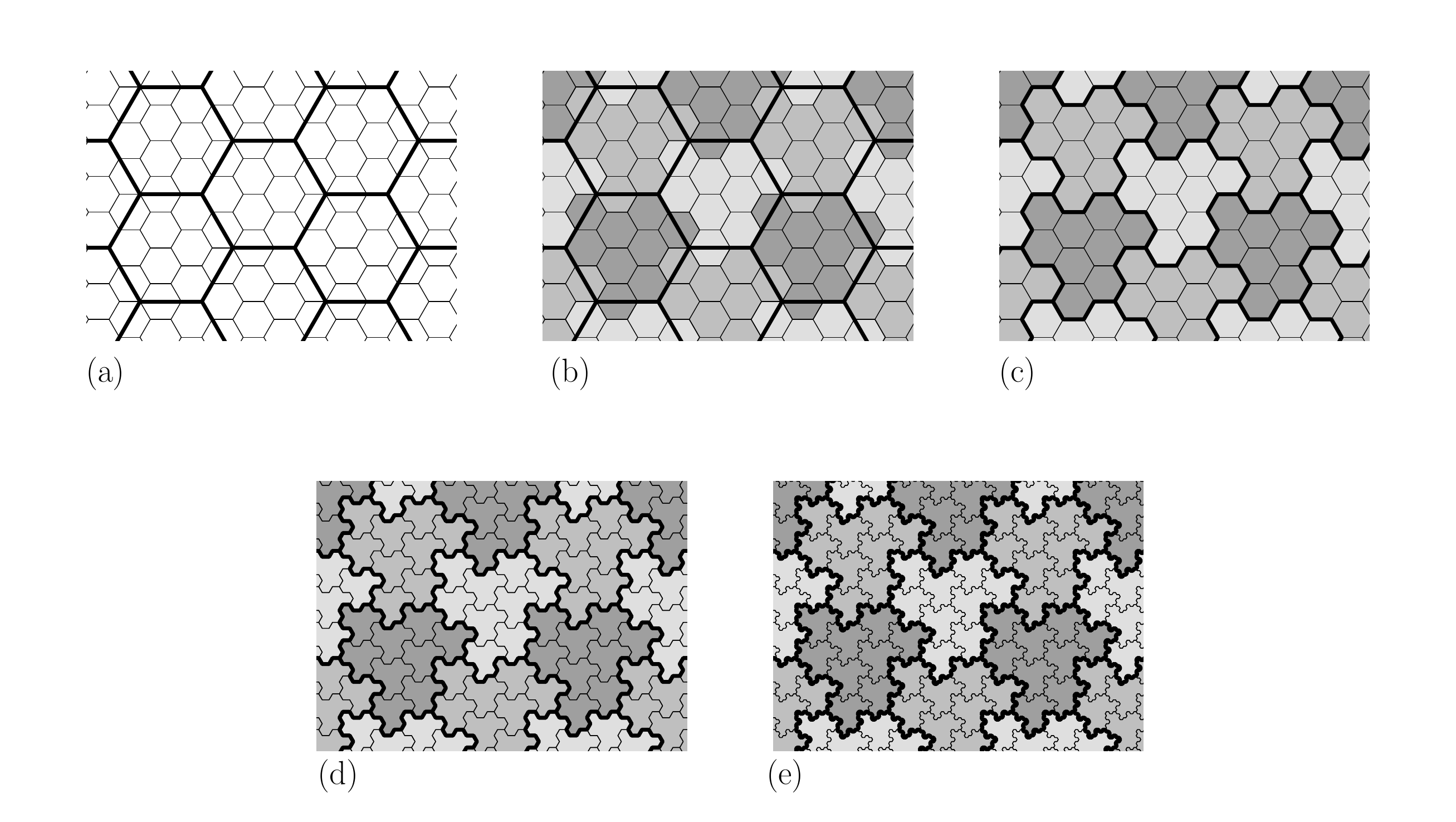}\hss}
\caption{How to turn a hexagonal tiling into a recursive tiling of size nine with Arrwwid number three.}
\label{fig:recursification}
\end{figure}

The transformation described above changed the shape of the tiles in the coarse tiling, but it did not change its topological structure. The vertex degree of the resulting tiling is still three, but unlike the original hexagonal tiling, our fractal tiling is recursive. In fact, we have produced a fractal tiling with Arrwwid number three (we will see how to prove such things below).

\begin{figure}
\centering
\hbox to \hsize{\hss\includegraphics[width=1.1\hsize]{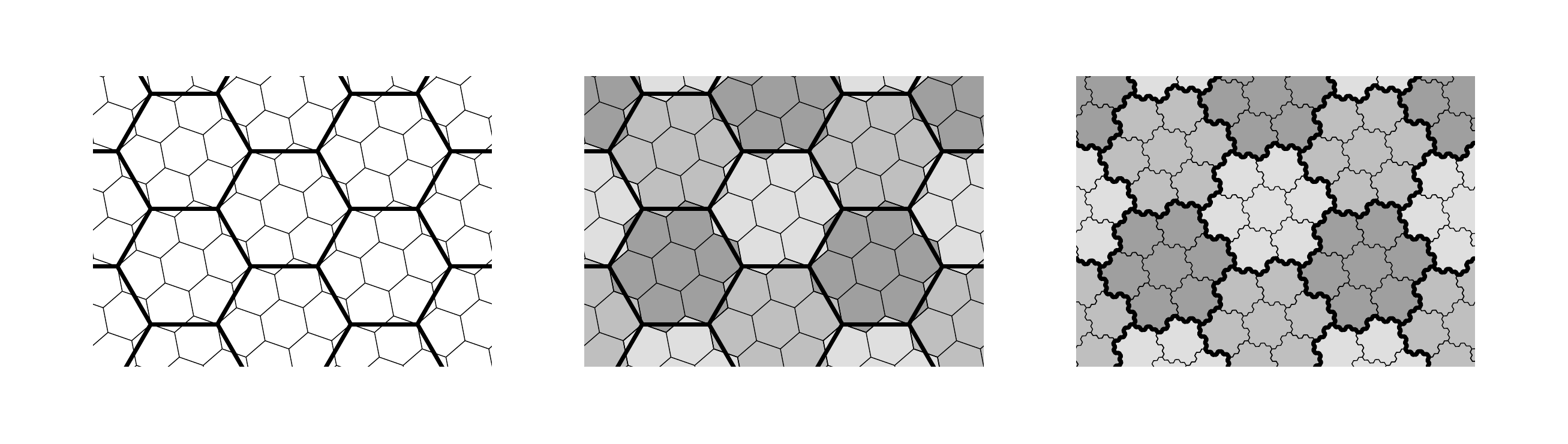}\hss}
\caption{A recursive tiling of Gosper islands.}
\label{fig:gosper}
\end{figure}

\begin{figure}
\centering
\hbox to \hsize{\hss\includegraphics[width=1.1\hsize]{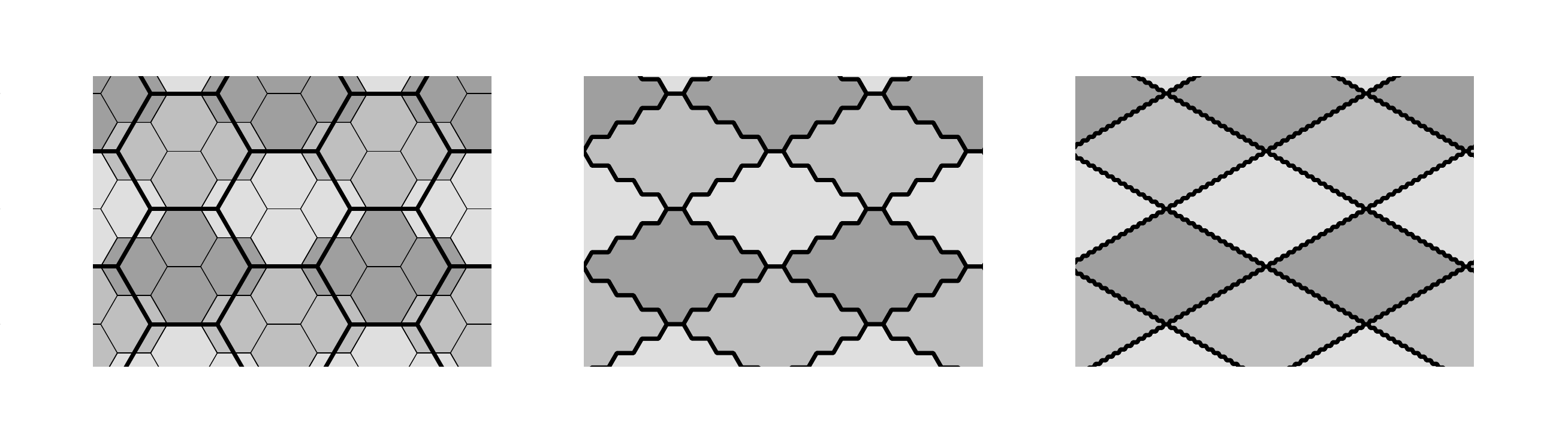}\hss}
\caption{A failed attempt to turn a hexagonal tiling into a recursive tiling of size four with Arrwwid number three: the Arrwwid number turns out to be four.}
\label{fig:rhombus}
\end{figure}

\begin{figure}
\centering
\hbox to \hsize{\hss\includegraphics[width=1.1\hsize]{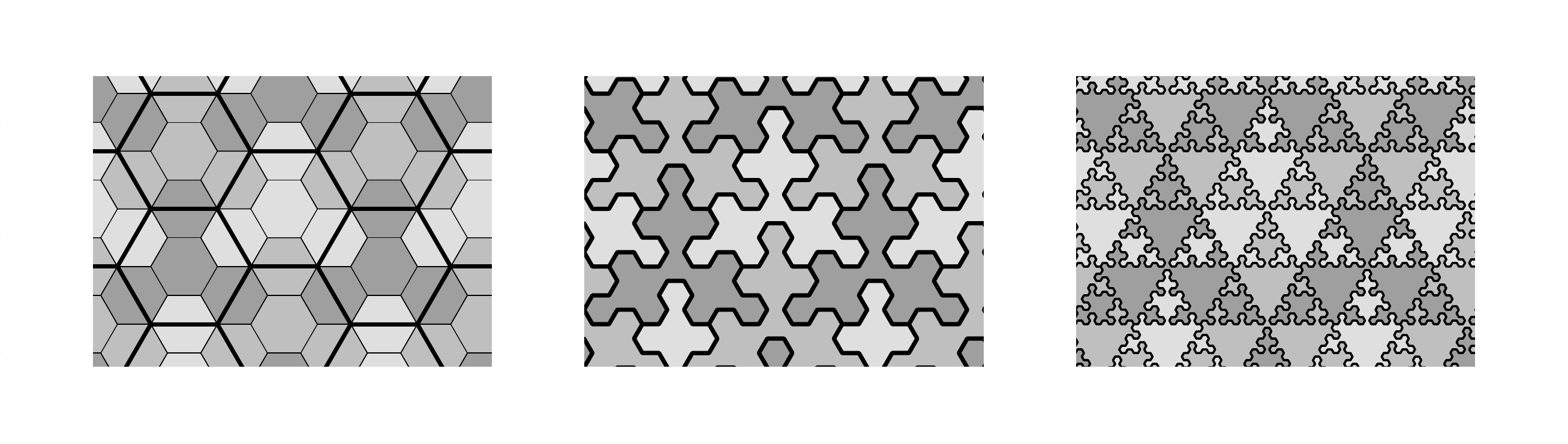}\hss}
\caption{A failed attempt to turn a hexagonal tiling into a recursive tiling of size four with Arrwwid number three: the Arrwwid number turns out to be four.}
\label{fig:disconnected}
\end{figure}

The technique described above requires some care: it matters how well the small hexagonal tiles approximate the large hexagonal tiles. Figure~\ref{fig:gosper} shows a recursification with seven small tiles per large tile, resulting in a tiling with shapes called Gosper islands~\cite{Gardner1976}. Figure~\ref{fig:rhombus} shows an attempt with four small tiles per large tile. However, it fails to produce a fractal tiling with Arrwwid number three: the approximation of large tiles by small tiles is so bad that the hexagonal tiling morphs into a regular grid of rhombuses with vertex degree four. Figure~\ref{fig:disconnected} shows another failed attempt: here the result is a tiling whose tiles have disconnected interiors, and it contains vertices where three different tiles meet the three largest connected components of a fourth tile.

\begin{theorem}\label{th:gosper}
A recursive tiling with Gosper islands has Arrwwid number three.
\end{theorem}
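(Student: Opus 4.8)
The plan is to prove the two bounds separately, and almost all of the work lies in the upper bound. The lower bound is essentially free: a Gosper island tiling is a recursive tiling of a two-dimensional region, so Theorem~\ref{th:2dtilinglb} immediately gives that its Arrwwid number is at least three (consistently, its vertex degree is three, the very property that motivated the construction). Hence it remains to exhibit a constant $c$ such that every disk $Q$ of radius $r$ can be covered by at most three tiles of total area at most $c\cdot\area(Q)$.

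First I would fix the recursion level. The tiling is uniform and simple of size seven, so a tile at level $k$ is a scaled copy of \unittile by a factor $7^{-k/2}$, with diameter $\Theta(7^{-k/2})$ and area $\Theta(7^{-k})$. Given $Q$, I pick the unique level $k$ at which the tile diameter lies in a window $[\lambda r,\sqrt7\,\lambda r)$, for a constant $\lambda$ to be fixed later. Since the level-$k$ tiles partition $\unittile\supseteq Q$, the tiles that intersect $Q$ automatically cover $Q$, and each has area $\Theta(r^2)$; so any bounded number of them has total area $O(r^2)=O(\area(Q))$, and the area condition is automatic. The theorem therefore reduces to the counting claim: for a suitable (large) $\lambda$, at most three level-$k$ tiles intersect $Q$.

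The cleanest way I would attack the counting claim is by Minkowski fattening. A disk $Q$ of radius $r$ centred at $q$ meets a closed tile $T$ exactly when $q$ lies in the fattened tile $T^{\oplus r}$, the set of points within distance $r$ of $T$. Thus the number of tiles meeting $Q$ equals the maximum number of fattened level-$k$ tiles sharing a common point, and it suffices to show that no disk of radius $r$ contains points of four distinct level-$k$ tiles. I would prove this by contradiction together with a rescaling/compactness argument that exploits self-similarity: if the claim failed for every $\lambda$, there would be disks of radius $r_n$ meeting four level-$k_n$ tiles with the tile diameter $7^{-k_n/2}$ growing without bound relative to $r_n$. Rescaling each offending level-$k_n$ tiling to unit tile diameter turns these into disks of radius $r_n\to 0$ meeting four tiles in a bounded region; since the local arrangements around a tile come from the finite rule set (finitely many configurations up to similarity) and the disk centres lie in a compact set, I could extract a limit point lying in the closures of four distinct tiles. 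That is a vertex of degree four, contradicting vertex degree three and closing the counting claim.

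The hard part will be making this limiting argument rigorous in the presence of fractal boundaries, which is exactly what sinks the failed attempts of Figures~\ref{fig:rhombus} and~\ref{fig:disconnected}: there a tile grows a thin protrusion that pokes between two neighbours, so a small disk meets three or four tiles far from any genuine meeting point. To exclude this I would establish that Gosper islands are \emph{fat} and have well-controlled shared boundaries: using the self-similar construction (one rule, ratio $7^{-1/2}$), I would show there is a constant $\gamma>0$, independent of $k$, such that each level-$k$ tile contains a disk of radius $\gamma\cdot 7^{-k/2}$, the boundary shared by two adjacent level-$k$ tiles stays within a corridor of width $O(7^{-k/2})$, and two \emph{non-adjacent} level-$k$ tiles stay at distance at least $\gamma\cdot 7^{-k/2}$. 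Because these statements are scale-invariant, verifying them reduces to a finite check at a single scale, and they are precisely what guarantees both the finiteness of local configurations used in the compactness step and the fact that four crowding tiles must genuinely coincide in the limit rather than merely approach through fractal fingers. Establishing this fatness and the finiteness of local configurations is the crux; once it is in place, the contradiction above together with the area bound of the second paragraph proves that the Gosper tiling has Arrwwid number three.
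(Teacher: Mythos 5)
Your architecture (lower bound from Theorem~\ref{th:2dtilinglb}, choose the level where tiles have diameter $\Theta(r)$ so the area condition is automatic, then show no small disk meets four tiles) matches the paper's, and you correctly identify the crux: controlling how far the fractal boundary strays from the hexagon it replaces. But you reach that crux by an unnecessary detour and then leave it in a weaker state than the theorem needs. The paper's proof is a direct quantitative computation with no compactness step: in the hexagonal tiling with vertex spacing $u$, any four tiles contain a pair at distance at least $u$, so the smallest disk meeting four tiles has radius $s = u/2$; one refinement step displaces each boundary arc by at most $\sin(\arctan\frac15\sqrt{3})\cdot u/\sqrt{7} = \frac{u}{14}\sqrt{3}$, and summing the recursion $d_i \leq \frac{u}{14}\sqrt{3} + d_{i-1}/\sqrt{7}$ over all levels shows the limiting boundary stays within $0.199\,u$ of the hexagon boundary, so the critical radius only drops to $s' > 0.301\,u$; picking the level with $u \in [r/0.301, \sqrt{7}\,r/0.301)$ then yields at most three tiles of total area less than $603\,r^2$. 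Note that once your separation estimate (non-adjacent tiles at distance at least $\gamma\cdot 7^{-k/2}$, and no four tiles pairwise adjacent) is in hand, the counting claim follows in one line---a disk meeting four tiles meets two non-adjacent ones and so has radius at least $\gamma\cdot 7^{-k/2}/2$---so the rescaling/compactness argument buys you nothing; conversely, without that estimate the compactness argument cannot close, because ``the limit point lies in four closed tiles'' contradicts vertex degree three only if you have already proved the vertex degree is three, which is exactly what the estimate establishes.

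The genuine soft spot is your claim that the fatness/corridor/separation estimates ``reduce to a finite check at a single scale.'' The Gosper boundary is the limit of infinitely many refinements, and what separates it from the failed recursifications of Figures~\ref{fig:rhombus} and~\ref{fig:disconnected} is not a property visible at any one scale but the fact that the \emph{summed} displacement over all scales (here $< 0.199\,u$) stays strictly below half the hexagonal separation distance $u/2$. So the check is really: bound the per-level displacement once, sum the resulting geometric series, and compare the total against $u/2$. That is easy, but it is an infinite-sum estimate rather than a single-scale verification, and it is the one computation your write-up still owes. With that computation supplied, your proof is correct; without it, neither the compactness argument nor the separation claim is supported.
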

\begin{proof}
Assume that in the course tiling neighbouring vertices are at a distance $u$ from each other. In the first stage of refinement, each segment $e$ of a large tile boundary is replaced by three segments of length~$u/\sqrt{7}$, the first and last of which make an angle of $\arctan{\frac15 \sqrt{3}}$ with $e$. The new boundary of the large tile thus stays within a distance $d_1 = \sin(\arctan{\frac15 \sqrt{3}})\cdot u/\sqrt{7} = \frac{u}{14}\sqrt{3}$ of the old boundary. Refining the tiles $i$ times recursively thus keeps the boundary within a distance $d_i$ that satisfies $d_i \leq \frac{u}{14}\sqrt{3} + d_{i-1}/\sqrt{7}$; for $i \rightarrow \infty$ this converges to $d_i < 0.199 \cdot u$. Let $s$ and $s'$ be the radius of the smallest disk that intersects more than three large tiles in the original and in the recursified large tiling, respectively. We have $s = u/2$, and since the tile boundaries in the recursified tiling are within a distance of $0.199 \cdot u$ from the tile boundaries in the original tiling, we have $s' > s - 0.199 \cdot u = 0.301 \cdot u$.

Now consider any disk $Q$ of radius $r$. There is a level of recursion in the recursified tiling where $u$, the distance between neighbouring vertices in the underlying hexagonal tiling, is between $r/0.301$ and $\sqrt{7}\cdot r/0.301$. Since $r \leq 0.301 u < s'$, the disk $Q$ intersects at most three of the tiles on this level of the recursified tiling, and each of these tiles has area $\frac32 \sqrt{3} \cdot u^2  \leq \frac32 \sqrt{3} \cdot 7 r^2 / 0.301^2 < 201 \cdot r^2$ (equal to the hexagon whose position they are taking). It follows that $Q$ can always be covered by three tiles of total area at most $603 \cdot r^2 < 192 \cdot \area(Q)$; hence the Arrwwid number of the Gosper tiling is three.
\end{proof}

\begin{figure}
\centering
\hbox to \hsize{\hss\includegraphics[width=1.1\hsize]{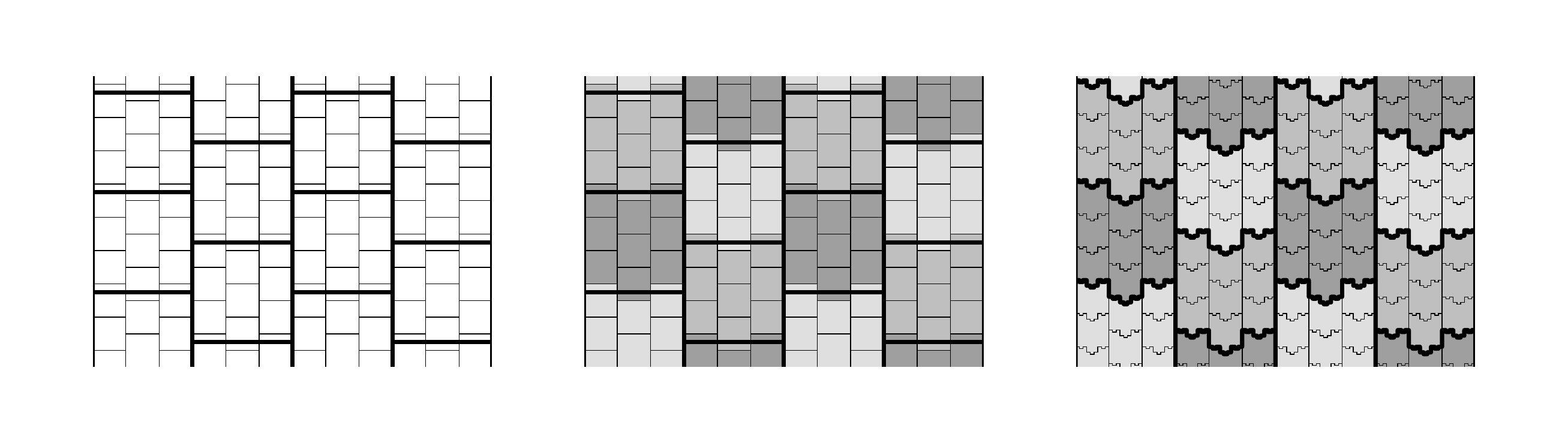}\hss}
\caption{A recursive tiling with Arrwwid number three based on a ``shifted'' grid of squares.}
\label{fig:shiftedsquares}
\end{figure}

\enlargethispage{\baselineskip}
In a similar way one could prove that the tiling of Figure~\ref{fig:recursification}(e) has Arrwwid number three. The recursification technique can also be applied to non-hexagonal tilings. A recursified grid of squares with five small tiles per large tile results in a tiling of cross-like shapes~(see \cite{Mandelbrot1983} plate 49, or \cite{Teachout2009}). Other examples with hexagons can be seen to underly the ``generalised Gosper curves'' of Akiyama et al.~\cite{Akiyama2005}. Figure~\ref{fig:shiftedsquares} shows a recursified tiling based on a grid of squares of which each column is shifted by half a square's height with respect to the next column; it has Arrwwid number three.

\subsection{An optimal rectangular tiling}

The fractal tilings are beautiful, but have one big practical disadvantage: for any given point it is hard to figure out in which tiles it lies. It would be quite challenging to sort any given set of data points into fractal tiles, so it would be hard to use such tilings to build a data structure. Therefore it may be interesting to look for a tiling with Arrwwid number three that uses simpler tiles, namely rectangles.

In a uniform rectangular tiling, each rule defines a subdivision of a large rectangle into, say, $t$ similar, smaller rectangles which are scaled by a factor $1/\sqrt{t}$ with respect to the large rectangle. Let $\alpha$ be the width/height ratio of the large rectangle; without loss of generality, assume $\alpha \geq 1$.

For any horizontal line through the large rectangle, the smaller rectangles along that line must fill the full width of the large rectangle. This implies that for fixed $\alpha$ and $t$,  the equation\begin{equation}\label{eq:fillwidth}
\alpha n_{ww} + n_{hw} = \alpha \sqrt{t}
\end{equation}
has at least one non-negative integer solution for the variables $n_{ww}$ and $n_{hw}$. Similarly, considering vertical lines we find that\begin{equation}\label{eq:fillheight}
\alpha n_{wh} + n_{hh} = \sqrt{t}
\end{equation}
must have a non-negative integer solution.
In fact, to get an Arrwwid number of three, for each of the variables $n_{ww}$, $n_{hw}$, $n_{wh}$ and $n_{hh}$ there must be a solution to the above equations where this variable is at least one. Otherwise the only way to subdivide the large rectangle would be to put the tiles in a regular rectangular grid, which would result in an Arrwwid number of four.

\begin{lemma}
The size $t$ of any uniform rectangular tiling with Arrwwid number three is a square number.
\end{lemma}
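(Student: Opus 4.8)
The plan is to show that $s := \sqrt{t}$ is rational; since $t$ is a positive integer and a rational square root of an integer is necessarily an integer, this at once gives that $t = s^2$ is a perfect square. So the entire problem reduces to proving $s \in \mathbb{Q}$, and I would organise the proof around that single goal.

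First I would fix the normalisation in which the large rectangle has height $1$ and width $\alpha \geq 1$, so that every small tile has long side $\alpha/s$ and short side $1/s$, and read \eqref{eq:fillwidth} and \eqref{eq:fillheight} as $\alpha n_{ww} + n_{hw} = \alpha s$ and $\alpha n_{wh} + n_{hh} = s$. The engine of the argument is the requirement established just above the lemma: to have Arrwwid number three the tiling must in particular admit a non-negative integer solution of \eqref{eq:fillwidth} with $n_{ww} \geq 1$, and of course \eqref{eq:fillheight} admits \emph{some} non-negative integer solution $(n_{wh}, n_{hh})$. I would fix such a pair of solutions once and for all.

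The key step is to eliminate the (possibly irrational) aspect ratio $\alpha$ between the two chosen equations. Solving \eqref{eq:fillwidth} gives $\alpha = n_{hw}/(s - n_{ww})$, while \eqref{eq:fillheight} gives $\alpha = (s - n_{hh})/n_{wh}$; cross-multiplying yields $n_{hw} n_{wh} = (s - n_{ww})(s - n_{hh})$, and substituting the identity $s^2 = t$ makes the quadratic term in $s$ collapse, leaving the single linear relation
\[
(n_{ww} + n_{hh})\, s = t + n_{ww} n_{hh} - n_{hw} n_{wh}.
\]
Because the chosen solution has $n_{ww} \geq 1$, the coefficient $n_{ww} + n_{hh}$ is a positive integer, so $s$ equals a quotient of integers and is therefore rational, which is exactly what was needed.

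The main obstacle, and the place where care is required, is twofold. I must first dispose of the degenerate cases in which the elimination is not valid, namely $n_{wh} = 0$ or $s = n_{ww}$; in each of these the relevant equation directly forces $s = n_{hh}$ or $s = n_{ww}$, so $s$ is already an integer and nothing more is needed. Second, the crucial insight that makes the argument work at all is that after cross-multiplication the $s^2$ term is not an obstruction but an asset: replacing it by the integer $t$ turns what looks like a quadratic in the unknown $s$ into a linear equation with integer coefficients, and it is precisely the positivity guarantee $n_{ww} \geq 1$ supplied by the Arrwwid-three hypothesis that certifies the leading coefficient is nonzero. Once these two points are handled, the elementary fact that a rational square root of an integer is an integer closes the proof.
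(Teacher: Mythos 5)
Your proof is correct and is essentially the same argument as the paper's: both eliminate the aspect ratio $\alpha$ between Equations~\ref{eq:fillwidth} and~\ref{eq:fillheight} and then exploit the arithmetic of $\sqrt{t}$ (rational square roots of integers are integers, equivalently $1$ and $\sqrt{t}$ are linearly independent over $\mathbb{Q}$ when $t$ is not a square). The only differences are cosmetic: the paper argues by contradiction, substituting $\alpha = (\sqrt{t}-n_{hh})/n_{wh}$ into Equation~\ref{eq:fillwidth} and comparing coefficients of the irrational $\sqrt{t}$ to force $n_{ww} = -n_{hh} < 0$, leaning on the guaranteed solution with $n_{hh}\geq 1$, whereas you argue directly via cross-multiplication and lean on the guaranteed solution with $n_{ww}\geq 1$ to make the linear coefficient nonzero.
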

\begin{proof}
For the sake of contradiction, assume that $t$ is not a square number. Then any solution to Equation~\ref{eq:fillheight} must have $n_{wh} > 0$; this holds in particular for any solution with $n_{hh} > 0$. Fix $n_{wh} > 0$ and $n_{hh} > 0$ such that they satisfy Equation~\ref{eq:fillheight}. We get:\begin{equation}\label{eq:alpha}
\alpha = \frac{\sqrt{t} - n_{hh}}{n_{wh}}.
\end{equation}
Substituting $\alpha$ in Equation~\ref{eq:fillwidth} yields:\begin{equation}
\frac{n_{ww}}{n_{wh}}\sqrt{t} - \frac{n_{ww}n_{hh}}{n_{wh}} + n_{hw} =
\frac{t}{n_{wh}} - \frac{n_{hh}}{n_{wh}}\sqrt{t}.
\end{equation}
Because $\sqrt{t}$ is irrational, the terms that include $\sqrt{t}$ must be equal, that is, $n_{ww} = -n_{hh}$, but then $n_{ww}$ is negative: a contradiction. Hence $t$ must be a square number.
\end{proof}

\begin{lemma}
The width/height ratio $\alpha$ of the rectangles in a uniform rectangular tiling of size $t$ with Arrwwid number three is a rational number with numerator at most $\sqrt{t}$ and denominator less than $\sqrt{t}$.
\end{lemma}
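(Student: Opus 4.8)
The plan is to read $\alpha$ off Equation~\ref{eq:fillheight} and then pin down the size of its numerator and denominator. By the preceding lemma $t$ is a square number, so $\sqrt{t}$ is a positive integer; in fact $\sqrt{t}\geq 2$, since $t\geq 2$ is a square number and hence at least $4$. The discussion preceding the lemmas guarantees, precisely because the Arrwwid number is three, that Equation~\ref{eq:fillheight} admits a solution with $n_{wh}\geq 1$. Fixing such a solution and solving for $\alpha$ as in Equation~\ref{eq:alpha} gives $\alpha=(\sqrt{t}-n_{hh})/n_{wh}$, a quotient of an integer by a positive integer; hence $\alpha$ is rational. First I would record this, and then write $\alpha=p/q$ in lowest terms, with $p,q$ coprime positive integers.

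For the numerator bound I would argue as follows. Since $p/q$ is in lowest terms and equals $(\sqrt{t}-n_{hh})/n_{wh}$, there is a positive integer $k$ with $\sqrt{t}-n_{hh}=kp$ and $n_{wh}=kq$; the factor $k$ is positive because $\sqrt{t}-n_{hh}=\alpha\,n_{wh}\geq 1$. As $n_{hh}\geq 0$, we obtain $kp=\sqrt{t}-n_{hh}\leq\sqrt{t}$, and since $k\geq 1$ this yields $p\leq\sqrt{t}$: the numerator is at most $\sqrt{t}$, as claimed. For the denominator bound I would invoke the normalisation $\alpha\geq 1$, which forces $p\geq q$, so that $q\leq p\leq\sqrt{t}$. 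It then remains only to exclude $q=\sqrt{t}$: if $q=\sqrt{t}$, the chain $q\leq p\leq\sqrt{t}=q$ squeezes $p=q=\sqrt{t}$, whence $\gcd(p,q)=\sqrt{t}\geq 2>1$, contradicting that $p/q$ is in lowest terms. Therefore $q<\sqrt{t}$.

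The routine parts---rationality and the numerator bound---fall out directly from Equation~\ref{eq:alpha}. The one step that needs care, and which I expect to be the main obstacle, is the strict inequality on the denominator: a bare substitution only gives $q\leq\sqrt{t}$, and ruling out equality genuinely relies on combining the lowest-terms (coprimality) assumption with both $\alpha\geq 1$ and $\sqrt{t}\geq 2$. It is also important to invoke the right existence guarantee---a solution with $n_{wh}\geq 1$ rather than one with some other variable positive---since this is exactly what makes the denominator $n_{wh}$ legitimate and forces $\alpha$ to be rational in the first place.
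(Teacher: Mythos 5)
Your proposal is correct, and its core is the same as the paper's: use the guarantee that Equation~\ref{eq:fillheight} has a solution with $n_{wh}\geq 1$, solve for $\alpha=(\sqrt t-n_{hh})/n_{wh}$, and bound numerator and denominator. The one place where you diverge is the strict inequality on the denominator. The paper first sharpens the normalisation $\alpha\geq 1$ to $\alpha>1$, on the grounds that $\alpha=1$ would make the tiles squares and hence (by uniformity) force a regular $\sqrt t\times\sqrt t$ grid with Arrwwid number four; then $n_{wh}<\alpha n_{wh}\leq \alpha n_{wh}+n_{hh}=\sqrt t$ follows in one line, with the fraction taken as $(\sqrt t-n_{hh})/n_{wh}$, not necessarily reduced. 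You instead keep only $\alpha\geq 1$, pass to lowest terms $p/q$, and exclude $q=\sqrt t$ via $p=q=\sqrt t$ contradicting coprimality (using $\sqrt t\geq 2$, which is legitimate since a tiling rule has $t\geq 2$ and $t$ is a square by the previous lemma). Both routes are sound. Yours has the small advantage of bounding the \emph{reduced} denominator, which is what one actually enumerates in the exhaustive search, and of not needing the (true but unstated-in-the-lemma) fact that $\alpha\neq 1$; the paper's is shorter and identifies concretely which solution of Equation~\ref{eq:fillheight} realises the bounded representation. The only nit: you assert without proof that $k=n_{wh}/q$ is a positive integer; that needs the one-line remark that $q\mid n_{wh}$ because $q(\sqrt t-n_{hh})=p\,n_{wh}$ and $\gcd(p,q)=1$.
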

\begin{proof}
We have $\alpha > 1$, otherwise we would get a regular grid of squares with Arrwwid number four. Therefore the solution to Equation~\ref{eq:fillwidth} for which $n_{wh} > 0$, must have $n_{wh} < \sqrt{t}$ and $\alpha = (\sqrt{t} - n_{hh})/n_{wh}$. 
\end{proof}

The above two lemmas brought an exhaustive search for increasing values of $t$ within reach, trying all eligible values of $\alpha$ for each $t$. This led to the following result:

\begin{figure}
\centering
\includegraphics[height=4cm]{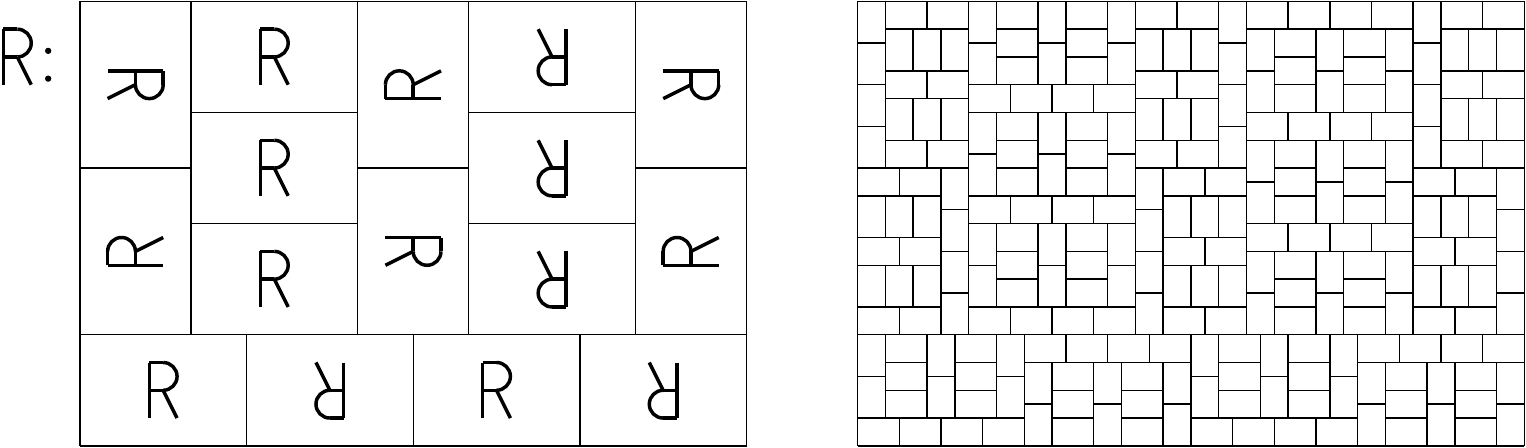}
\caption{Definition of the Daun tiling, and level-two expansion.}
\label{fig:daun}
\end{figure}

\begin{theorem}\label{th:Dauntiling}
The smallest uniform rectangular tiling (with fewest tiles in the definining rules) with Arrwwid number three is the Daun tiling, shown in Figure~\ref{fig:daun}.
\end{theorem}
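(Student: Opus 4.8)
The proof falls into two parts: showing that the Daun tiling itself has Arrwwid number three, and showing that no uniform rectangular tiling with fewer tiles does. For the first part, I would proceed exactly as in the proof of Theorem~\ref{th:gosper}. Since the Daun tiling is built from straight-edged rectangles, its only junctions are T-junctions, so the rule that defines it has vertex degree three; the first thing to verify is that this remains true at every level of the recursion. Because the tiling is uniform and self-similar, this amounts to checking the finitely many ways in which the subdivisions of two edge-adjacent tiles can meet along their shared boundary, and confirming that in each case the subdivision points on the two sides interleave (producing T-junctions) rather than coincide (which would produce a degree-four crossing). Having established that the level-$k$ tiling has vertex degree three for every $k$, I would then determine, as in Theorem~\ref{th:gosper}, a constant $\rho>0$ such that the smallest disk meeting more than three level-$k$ tiles has radius at least $\rho$ times the diameter of a level-$k$ tile; self-similarity guarantees that a single $\rho$ works for all $k$. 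Then, for any disk $Q$ of radius $r$, I would select the recursion level at which the tile size is proportional to $r/\rho$, so that $Q$ meets at most three tiles, each of area $O(r^2)$, which together with Theorem~\ref{th:2dtilinglb} pins the Arrwwid number at exactly three.

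For the second part --- optimality --- the plan is an exhaustive search made finite by the two lemmas above. By the first lemma we may write $t=s^2$ with $s\in\Naturals$, and by the second lemma $\alpha=p/q$ in lowest terms with $q<s$ and $p\le s$; rewriting Equations~\ref{eq:fillwidth} and~\ref{eq:fillheight} as the integer equations $p\,n_{ww}+q\,n_{hw}=ps$ and $p\,n_{wh}+q\,n_{hh}=qs$, the admissibility requirement (that each of the four counts can be made positive) rules out most pairs $(t,\alpha)$ immediately. For each surviving pair with $t$ smaller than the size of the Daun tiling, I would enumerate all tilings of the $\alpha\times 1$ rectangle into $t$ sub-rectangles of the two allowed orientations, together with the orientation assigned to each sub-rectangle, and show that every such candidate fails. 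A candidate can fail in one of two ways: either the sub-rectangles are forced into a regular grid, in which case the Arrwwid number is four by the argument of Observation~\ref{ob:squaretiling}; or, even when the top-level rule has vertex degree three, some level of the recursion produces a point where four tiles meet, so that the Arrwwid number is at least four by the observation that it cannot be below the vertex degree. Finally I would confirm that at the size of the Daun tiling the search does produce a working rule, and that the Daun tiling is the one using the fewest tiles among the working candidates.

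I expect the genuine obstacle to be the second way a candidate can fail: a rule whose top-level subdivision has only T-junctions may nonetheless create a degree-four crossing several levels down, where the subdivision points induced on a shared edge by the recursions on its two sides happen to line up. Detecting this requires, for each candidate, tracking how the pattern of subdivision points along every tile edge evolves under recursion --- essentially iterating the edge-refinement map and testing whether the point set induced from one side and the point set induced from the other ever collide. This is the combinatorially heavy, computer-assisted heart of the argument, and it is precisely the step that the two lemmas are designed to keep within reach, by bounding how many pairs $(t,\alpha)$, and hence how many candidate rules, must be examined.
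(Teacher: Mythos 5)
Your proposal matches the paper's proof in substance: the induction you describe---tracking the finitely many ways the subdivisions of two edge-adjacent tiles can meet along a shared boundary and verifying that the induced subdivision points never coincide to form a degree-four crossing---is exactly the paper's labelled-section induction $H(k)$, and the paper likewise relegates the minimality claim to an exhaustive (hand-done, unpublished) search over the pairs $(t,\alpha)$ admitted by the two lemmas. The only slip is attributing the method to Theorem~\ref{th:gosper}, whose proof is a metric bound on how far the fractal boundary drifts from the original hexagon rather than a combinatorial edge-matching induction; what you actually describe is the right argument for the Daun tiling.
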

\begin{proof}
We need to prove (i) the Daun tiling has Arrwwid number three, and (ii) no smaller uniform rectangular tiling has Arrwwid number three. The second part is tedious and boring but doable by hand; skeptic readers are welcome try by themselves or to request a photocopy of my notes. We will now prove the first part: the Daun tiling has Arrwwid number three.

Consider a disk $Q$ of radius $r$, and let $k_r$ be the level of recursion on which the tiles have dimensions more than $6r \times 4r$ and at most $24r \times 16r$. We define a level-$k$ vertex, edge or tile as a vertex (a meeting point of three or more tiles), edge or tile that first appears in the tiling on the $k$-th level of subdivision down from the unit rectangle. So the corners of the unit tile (the level-0 tile) are level-0 vertices, the remaining corners of its sixteen subtiles are level-1 vertices, and so forth. Consider the boundary of any tile to be labelled as follows (see Figure~\ref{fig:daun-labelling}(a)): starting from the marked corner (the one that was the lower left corner before rotation) and going in clockwise direction, we divide the boundary of the tile into ten sections of equal length, which we label B, A, B, C, D, E, D, E, F, and A, in that order. We will now prove by induction that there are no points of degree four in the tiling ${\cal T}$ that results from expanding the recursion to a depth of $k_r$ levels.
\begin{figure}
\centering
\hbox to\hsize{\hss\includegraphics[width=\hsize]{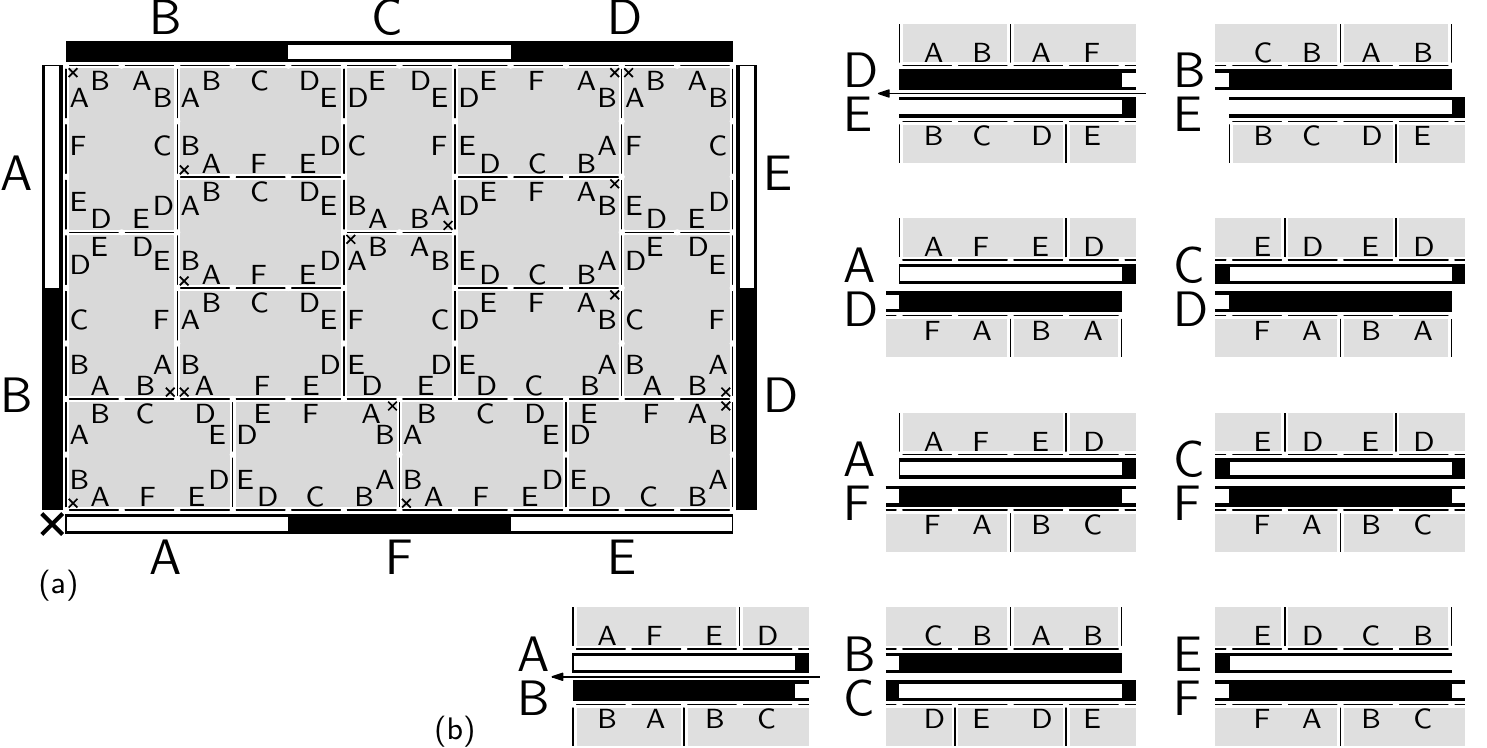}\hss}
\caption{(a) Labelling the boundary of the Daun tiling. (b) Matching sections to each other.}
\label{fig:daun-labelling}
\end{figure}
Let $H(k)$ be the following claim:\begin{itemize}
\item[(i)] there is no point of degree four at the starting point or along the interior of an A-section of a level-$k$ tile that coincides with a B-section of another level-$k$ tile;
\item[(ii)] there is no point of degree four at the starting point or along the interior of a D-section of a level-$k$ tile that coincides with an E-section of another level-$k$ tile;
\item[(iii)] for all $(x,y) \in \{
    (\textrm{A},\textrm{D}),
    (\textrm{A},\textrm{F}),
    (\textrm{B},\textrm{C}),
    (\textrm{B},\textrm{E}),
    (\textrm{C},\textrm{D}),
    (\textrm{C},\textrm{F}),
    (\textrm{E},\textrm{F})
    \}$, there is no point of degree four along or at any endpoint of an $x$-section of a level-$k$ tile that coincindes with a $y$-section of another level-$k$ tile;
\item[(iv)] subdividing a level-$k$ tile does not create any points of degree four in its interior.
\end{itemize}
We first prove $H(k_r)$. Part (iv) is trivial because level-$k_r$ tiles are not subdivided further in ${\cal T}$. Part (i) is true because neither the A-section nor the B-section is subdivided further, and at the end of any B-section (which coincides with the starting point of the other tile's A-section) the tile boundary of which it is a part continues in a straight line, so that there cannot be a vertex of degree four there. Part (ii) and (iii) can be verified in a similar way.

We will now prove the following for $k < k_r$: if $H(k+1)$ is true, then $H(k)$ is true. Part (i): As illustrated in Figure~\ref{fig:daun-labelling}(a), an A-section $s$ at level $k$ is subdivided into a D-section, an E-section, an F-section and an A-section (going clockwise around the tile) at level $k+1$. A B-section at level $k$ consists of a C-section, a B-section, an A-section, and a B-section (going counterclockwise around the tile). Thus, along an A-section of a level-$k$ tile that coincides with a B-section of another level-$k$ tile, a D-section is matched to a C-section at level $k+1$, E to B, F to A, and A to B (Figure~\ref{fig:daun-labelling}(b)). Because $H(k+1)$ is true, this does not create a degree-four vertex at the start or anywhere along the interior of $s$. Part (ii) and (iii) can be verified in a similar way. For part (iv), it follows from $H(k+1)$, part (iv), that no points of degree four are created in the interior of level-$(k+1)$ tiles, but we have to be careful about points on the boundary of level-$(k+1)$ tiles. Figure~\ref{fig:daun-labelling}(a) illustrates that the boundary sections of level-$(k+1)$ tiles inside a level-$k$ tile only coincide in the combinations listed in $H(k+1)$, part (i), (ii) and (iii); therefore the interiors of these level-$(k+1)$ boundary sections do not contain any points of degree four. All end points of these boundary sections that lie in the interior of a level-$k$ tile are a starting point of a level-$(k+1)$ A-section coinciding with a B-section, a starting point of a level-$(k+1)$ D-section coinciding with an E-section, or any end point of a combination of level-$(k+1)$ sections as listed in part (iii) of $H(k+1)$. Therefore, by $H(k+1)$, these points cannot have degree four either. This proves part (iv) of $H(k)$.

It follows by induction that $H(0)$ holds, that is, ${\cal T}$ does not have any degree-four vertices in the interior of the unit rectangle. Obviously there are not any degree-four vertices on its boundary either. Now observe that all tiles of ${\cal T}$ consist of six squares of a regular square grid with line spacing more than $2r$. Any disk $Q$ of radius $r$ intersects at most four of these squares. If $Q$ does indeed intersect as much as four squares, then the number of tiles of ${\cal T}$ that intersect $Q$ is equal to the degree of the grid point $p$ shared by these four squares. As we have just established, the degree of any point $p$ in ${\cal T}$ is at most three. Hence $Q$ can be covered by at most three tiles of ${\cal T}$, which have total area at most $3 \cdot 24r \cdot 16r = 1152r^2 = \frac{1152}\pi \area(Q)$. This proves that the Arrwwid number of the Daun tiling is three.
\end{proof}

\section{Two-dimensional space-filling curves}
\label{sec:2DSFC}

\begin{figure}
\centering
\includegraphics[height=5cm]{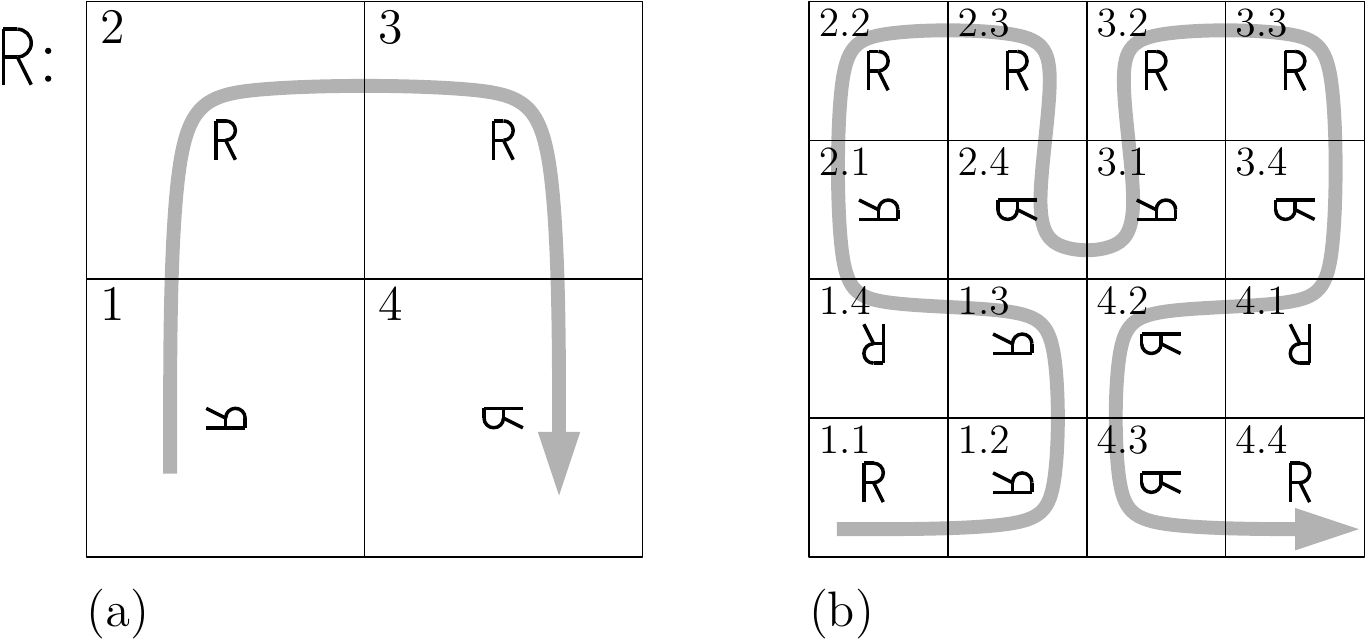}
\caption{(a) Definition of the Hilbert scanning order. (b) Definition expanded by one level.}
\label{fig:hilbert}
\end{figure}

\subsection{Defining space-filling curves}
\label{sec:2DSFC-definition}
We define a \emph{recursive scanning order} as a recursive tiling in which the rules specify not only a subdivision into tiles and what rules to apply to the tiles, but also an order of the tiles. One may illustrate this by a curve that visits the tiles recursively in order. Figure~\ref{fig:hilbert}(a) shows an example: the Hilbert curve~\cite{Hilbert1891}. When the tiling is refined by expanding the recursion, so is the curve, see Figure~\ref{fig:hilbert}(b). When defining a scanning order, the recursion within a tile can be rotated, mirrored and scaled as with recursive tilings. In addition it is possible to apply a recursive rule with reversed order; we indicate a reverse application of a recursive rule by a horizontal stroke above the letter that identifies the rule (see Figure~\ref{fig:dekking} for an example).

One may use a scanning order to define a mapping from the unit interval to the unit tile as follows. For any tile $A$ and any $x \in [0,1]$, we define the \emph{prefix region} $\prefix(A,x)$ as the region within $A$ that has total area $x \cdot \area(A)$ and comes first in the scanning order; intuitively, this region can be constructed by subdividing $A$ recursively to a sufficiently fine level and collecting tiles in scanning order, until tiles with a total area of $x \cdot \area(A)$ have been collected. We can define this more precisely as follows. If $x = 0$, then $\prefix(A,x) = \emptyset$. Otherwise let $A_1,...,A_k$ be the subtiles of $A$ in order, let $a_1,...,a_k$ be their areas relative to $A$, that is, $a_i = \area(A_i)/\area(A)$, and let $c_0,...,c_k$ be their cumulative areas relative to $A$, that is, $c_i = \sum_{j=1}^i a_j$. Now let $i$ be the largest $i$ such that $c_{i-1} \leq x$. Then $\prefix(A,x) = A_1 \cup ... \cup A_{i-1} \cup \prefix(A_i,(x-c_{i-1})/a_i)$. Let the \emph{postfix region} $\postfix(A,x)$ be defined as $A \setminus \prefix(A,x)$, and the \emph{fragment} $A[x,y]$ as $\postfix(A,x) \cap \prefix(A,y)$.

Let \unittile be the unit tile. For fixed $y$, the fragment $\unittile[x,y]$ shrinks to a point as $x$ approaches $y$ from below; denote this point by $\sigma_{\uparrow}(y)$. Similarly, let $\sigma^{\downarrow}(x)$ be the point to which $\unittile[x,y]$ shrinks when $y$ approaches $x$ from above; see Figure~\ref{fig:bridges} for an example. Together the functions $\sigma_{\uparrow}$ and $\sigma^{\downarrow}$ define a ``curve'' with a discontinuity (a jump) for every $x$ such that $\sigma_{\uparrow}(x) \neq \sigma^{\downarrow}(x)$. These functions also constitute a surjective map from the unit interval to the full set of points within \unittile. Thus they ``fill'' \unittile: it is a \emph{space-filling curve}.\footnote{Of course real curves do not jump. When $\sigma_{\uparrow}(x) = \sigma^{\downarrow}(x)$ for all $x$, either function defines a proper space-filling curve. However, when there are jumps, our definitions only provide surjective maps from the unit interval to the points within \unittile, but they are not continuous, so they are not curves, and they do not match the definition of space-filling curves as one would find them in the mathematics literature. Z-order is an example of a scanning order with many jumps. Lebesgue shows how to get a proper (continuous) curve of the Z-order. In effect he defines $\prefix(A,x)$ in another way, to include connecting pieces of curve between the tiles~\cite{Lebesgue1904}. The same technique could be applied to make other curves with jumps continuous. However, for our applications it is more convenient to work with a representation without the connecting pieces.}

In this paper we identify a scanning order with the space-filling curve defined by it; we use the terms \emph{(scanning) order} and \emph{(space-filling) curve} interchangeably. The terminology introduced to describe tilings will also apply to curves based on those tilings. Simple curves use one rule of recursion; composite curves use multiple rules. Uniform or square curves are curves based on uniform or square tilings, respectively.

For any fragment $\unittile[x,y]$, we say that $\sigma^{\downarrow}(x)$ is the \emph{entry point} of the fragment, while $\sigma_{\uparrow}(y)$ is its \emph{exit point}. We say that two consecutive fragments $\unittile[x,y]$ and $\unittile[y,z]$ \emph{connect} in a point~$p$ if $p = \sigma_{\uparrow}(y) = \sigma^{\downarrow}(y)$. Since fragments of space-filling curves fill two-dimensional regions, we measure the size of fragments by area, not by length.

There are other approaches to describing space-filling curves than the one chosen above. For example, Peano, who was the first to invent a space-filling curve, described his curve in an algebraic way~\cite{Peano1890}. Other authors describe a curve by defining polygonal approximation of it, with a rule on how to refine each segment of the approximating polyline recursively~\cite{Gardner1976}. Many authors specify the regions filled by fragments of the curve (like we do) together with the location of the entry and exit points of such fragments, but without making reversals explicit (for example \cite{Asano1997,Sagan1994,Wierum2002}). Since we are concerned with the use of space-filling curves as a way to order points in the plane, we choose a method of description that explicitly defines how to order the space inside a unit tile~\cite{Haverkort2009}.
In the following, when definitions are given for curves from other authors, these definitions are the result of analysing the original descriptions to find a definition of a corresponding scanning order in our notation.

\subsection{The Arrwwid number of a space-filling curve}
Consider a space-filling curve that fills a region \unittile. Recall that the Arrwwid number of the curve is the smallest number $a$ such that there is a constant $c$ such that any disk $Q$ that lies entirely in \unittile can be covered by $a$ curve fragments with total area at most $c \cdot \area(Q)$. Since every tile in the recursive tiling underlying a space-filling curve is by itself a fragment of the curve, the Arrwwid number of a curve is never more than the Arrwwid number of the underlying tiling. However, the Arrwwid number of a curve may be \emph{less} than the Arrwwid number $a$ of the underlying tiling. This can happen if, whenever $a$ tiles are needed to cover a disk, some of these tiles are consecutive along the curve. To illustrate this, a space-filling curve that has a smaller Arrwwid number than the underlying tiling is described in the next subsection.

\subsection{Square curves with Arrwwid number three}
\label{sec:squarewithA3}

\begin{figure}
\centering
\includegraphics[width=\hsize]{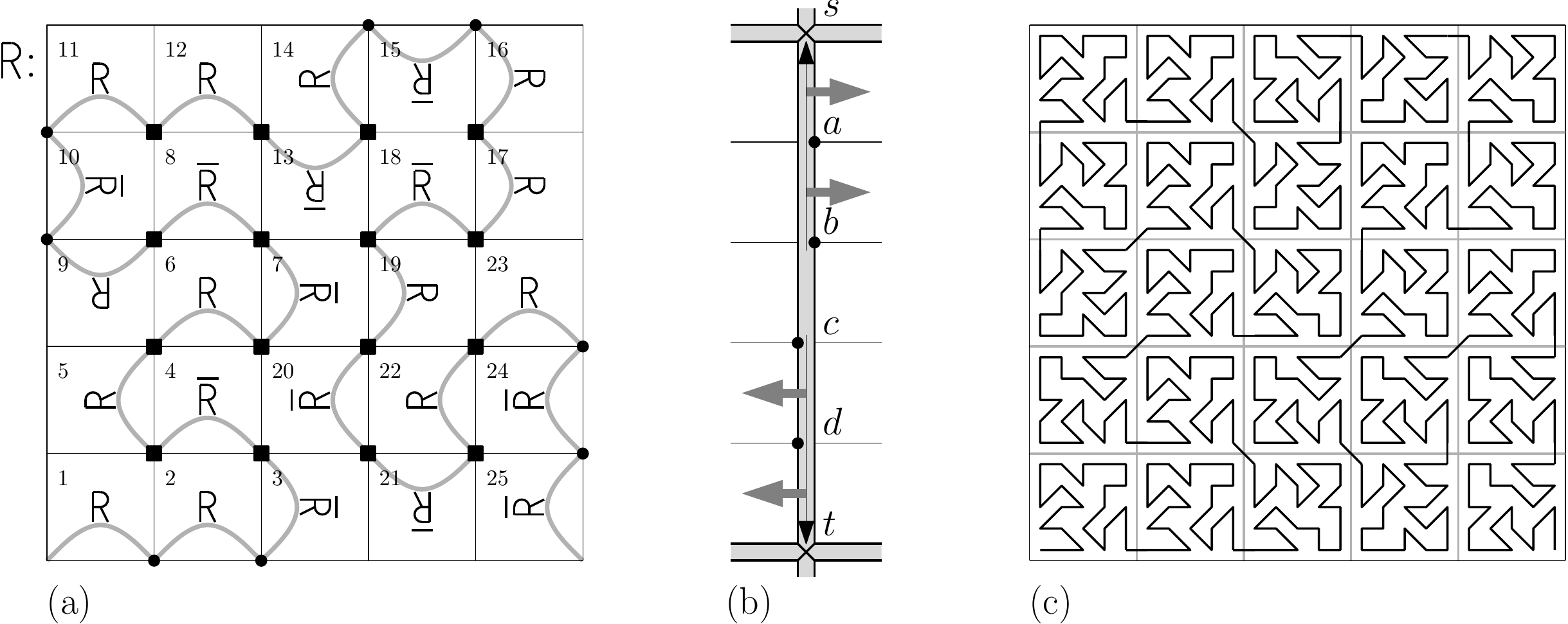}
\caption{(a) Definition of a scanning order following Dekking's curve~\cite{Dekking1982}, with marks added to aid in the proof of Theorem~\ref{th:dekking}. (b) When $p$ lies between two tiles ($p \in \{a,b,c,d\}$), we walk to the closest higher-level vertex $q \in \{s,t\}$ and consider $p$ as one of the vertices of the tile that lies to the right. (c) Sketch of the level-two expansion of the scanning order. This curve looks quite different from Dekking's sketch~\cite{Dekking1982}, but the difference is only superficial: our sketch connects the centre points of the tiles, while Dekking's sketch connects the entry and exit points of the tiles.}
\label{fig:dekking}
\end{figure}

\begin{theorem}\label{th:dekking}
Dekking's curve (as defined in Figure~\ref{fig:dekking}(a)) is a square curve with Arrwwid number three.
\end{theorem}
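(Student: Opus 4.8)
The plan is to prove the two bounds separately. The lower bound is immediate: Dekking's curve is built on a tiling by squares, which are simply connected, so by the general two-dimensional lower bound for recursive space-filling curves with simply connected tiles (the curve analogue of Theorem~\ref{th:2dtilinglb}) its Arrwwid number is at least three. All the real work therefore goes into the upper bound, namely showing that three fragments, of total area $O(\area(Q))$, always suffice to cover a disk $Q$. I would organise this exactly as in Observation~\ref{ob:squaretiling} and Theorem~\ref{th:Dauntiling}: first reduce the covering problem to a purely combinatorial statement about how the curve orders the four tiles meeting at a grid vertex, and then prove that statement by induction over the recursion.

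For the reduction, let $Q$ have radius $r$ and choose the recursion level $k_r$ at which the (regular, square) grid of level-$k_r$ tiles has side length between $2r$ and $4r$. As in the proof of Observation~\ref{ob:squaretiling}, at most one horizontal and one vertical grid line then cross $Q$, so $Q$ meets at most four level-$k_r$ tiles, and when it meets exactly four they are the four tiles surrounding a common grid vertex $p$. If $Q$ meets at most three tiles we are trivially done. The only genuine case is four tiles at a vertex $p$, and here I would invoke the key combinatorial claim: two of the four tiles incident to $p$ are consecutive in the scanning order. Two consecutive level-$k_r$ tiles form a single fragment equal to their union, so the four tiles split into one such pair plus two singletons, that is, three fragments of total area $O(r^2) = O(\area(Q))$.

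It remains to prove the claim that at every interior grid vertex two incident tiles are consecutive along the curve; this is the heart of the theorem, and unlike the Daun tiling (where one shows that no degree-four vertex exists at all) here degree-four vertices do occur and must be \emph{resolved} by the scanning order. I would prove it by induction on the recursion level, in the style of the $H(k)$ argument of Theorem~\ref{th:Dauntiling}. The induction hypothesis would classify each level-$k$ grid vertex by the local pattern in which the curve threads the tiles around it, recorded through the boundary marks of Figure~\ref{fig:dekking}(a) and the canonicalisation of edge-vertices of Figure~\ref{fig:dekking}(b) (walking from $p \in \{a,b,c,d\}$ to the nearest coarser vertex $q \in \{s,t\}$ on its right), together with the assertion that each vertex type carries a designated consecutive incident pair. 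One then checks that subdividing a single tile (i) creates only finitely many new interior vertices, for each of which the presence of a consecutive incident pair can be verified directly from the (finite) rule set, and (ii) refines each vertex lying on a tile boundary into vertices of types already covered by the hypothesis, the matching of boundary sections across a shared edge — and hence the relative scanning order, including any order-reversals — being fixed by the rules. Propagating the hypothesis from level $k_r$ down to level $0$ then establishes the claim for all vertices of the expanded tiling.

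The main obstacle, as in the Daun proof, will be the boundary bookkeeping in step (ii): enumerating the ways two level-$k$ tiles can abut along a shared edge and checking, in each configuration, that the curve order across that edge — after accounting for the rotations, reflections and reversals prescribed by the rule set — still leaves a consecutive incident pair at every vertex. The device of Figure~\ref{fig:dekking}(b) is precisely what keeps this enumeration finite and lets the induction close, by reducing every edge-vertex to a standard coarser vertex; making its case analysis exhaustive and consistent with the reversals is where the care is needed. Once the combinatorial claim is secured, the area estimate of the reduction completes the upper bound, and together with the lower bound this gives Arrwwid number exactly three.
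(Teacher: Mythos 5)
Your overall architecture --- reduce to the four grid squares around a vertex $p$, show that two of them are consecutive in the order, and get the matching lower bound from Theorem~\ref{th:2dsfclb} --- is exactly the paper's. The gap is in your proof of the central claim. You propose to establish ``two of the four incident tiles are consecutive'' by an $H(k)$-style induction as in Theorem~\ref{th:Dauntiling}, but that template does not transfer: in the Daun proof the propagated property (absence of degree-four vertices) is a purely local, geometric property of the tiling, whereas ``consecutive in the scanning order'' is a global property of the order on the \emph{finest} grid ${\cal T}$. A vertex $p$ of ${\cal T}$ typically first appears at some much coarser level $k$, and the four level-$k_r$ tiles around it sit deep inside two (or more) distinct level-$k$ tiles; asserting that ``the presence of a consecutive incident pair can be verified directly from the finite rule set'' for such a vertex is not a finite check unless you specify what local data reduces it to one. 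Your proposal never identifies that data.

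The paper's device is precisely this: it first pins down the \emph{entry and exit points} of each tile (by tracking which subtile is visited first and last under the rule's rotations and reversals --- this is why the proof opens with a paragraph locating the entry and exit corners of the unit square), and then verifies, once per vertex type in the finite rule picture (the squares and dots of Figure~\ref{fig:dekking}(a), with the Figure~\ref{fig:dekking}(b) canonicalisation you correctly describe), that two level-$k$ tiles $A$ and $B$ \emph{connect at} $p$, i.e.\ the exit point of $A$ and the entry point of $B$ both equal $p$. This property persists under arbitrary further refinement: the last tile of ${\cal T} \cap A$ in the order converges to $A$'s exit point $p$, the first tile of ${\cal T} \cap B$ converges to $B$'s entry point $p$, and these two fine tiles are therefore consecutive. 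That single observation replaces your induction entirely and is what makes the verification finite. To repair your argument, replace ``carries a designated consecutive incident pair'' by ``carries a designated pair of tiles that connect at the vertex,'' add the entry/exit-point analysis that lets you read such connections off the figure, and invoke the persistence of connections under refinement; the rest of your proof then goes through.
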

\begin{proof}
First we establish the entry and exit points of the unit square \unittile. The locations of entry and exit points are determined by the transformations of the recursive rule in the different tiles. The curve drawn in Figure~\ref{fig:dekking}(a) does not define the locations of entry and exit points; it is just a sketch for illustration. To prove that the entry and exit points are actually at the locations which the sketch may suggest, we have to study the way in which the tiles are transformed. The first subtile of the unit square to be visited is the one in the lower left corner. Since the recursive tiling and ordering rule \textsf{R} is applied to that tile without any rotation, reflection or reversal, the first subtile to be visited within it is again in the lower left corner, and this continues to hold to any depth of recursion. Hence, as $y$ approaches 0 from above, the fragment $\unittile[0,y]$ shrinks to the lower left corner point of \unittile, and this is therefore the entry point of \unittile. The last subtile of \unittile to be visited is the one in the lower right corner. Its exit point is what would be its entry point before the reversal transformation is applied; as a result of the rotation, this point is found in the lower right corner. Therefore the exit point of \unittile is its lower right corner.

Given any disk $Q$ of radius $r$, consider the square grid ${\cal T}$ that is formed by subdividing \unittile recursively until the squares have width at most $10r$ and more than $2r$. Now $Q$ intersects at most one horizontal line and at most one vertical line of~${\cal T}$. If at most one grid line intersects~$Q$, then $Q$ lies in at most two tiles of ${\cal T}$, with total area at most $200r^2 = \frac{200}\pi \area(Q)$. Otherwise $Q$ is intersected by a horizontal grid line and a vertical line, which intersect in a vertex~$p$, and $Q$ intersects the four tiles of ${\cal T}$ that share vertex~$p$. Below we prove that two of these tiles are consecutive in the scanning order, so that these four tiles constitute at most three space-filling curve fragments with total area at most $400r^2 = \frac{400}\pi \area(Q)$. This will prove that the space-filling curve has Arrwwid number at most three; together with the lower bound of Theorem~\ref{th:2dsfclb} in the next subsection this proves that the Arrwwid number is in fact exactly three.

For our proof that around any vertex $p$ two of the tiles meeting in $p$ are consecutive in the scanning order, we use the same terminology as in the proof of Theorem~\ref{th:Dauntiling}: a level-$k$ feature (vertex, edge, or tile) is one that first appears when expanding the recursion to a depth of $k$ levels down from the unit tile. Now let $k$ be the level of $p$, that is, $p$ is a level-$k$ vertex. Now $p$ is of one of two types: it either lies in the interior of a level-$(k-1)$ tile, or it lies on an edge between two level-$(k-1)$ tiles.

If $p$ is of the first type, one may verify in Figure~\ref{fig:dekking}(a) (see the vertices marked by squares) that two of the level-$k$ tiles around $p$, say $A$ and $B$, connect in $p$. This implies that if we expand the recursion further until we obtain ${\cal T}$, the tile of ${\cal T} \cap A$ that touches $p$ connects to the tile of ${\cal T} \cap B$ that touches $p$.

If $p$ is of the second type, that is, it lies on an edge between two level-$(k-1)$ tiles, then let $q$ be the corner shared by these two tiles which is closest to $p$. Now consider $p$ as a vertex on the boundary of the level-$(k-1)$ tile that lies to the right when walking from $p$ to $q$, see Figure~\ref{fig:dekking}(b). When Figure~\ref{fig:dekking}(a) depicts this tile, then $p$ is one of the vertices marked by black dots (note that here we exploit the fact that the recursive rule for this curve does not use reflections, so that we preserve the left/right orientations for all tiles). Again, one may verify in Figure~\ref{fig:dekking}(a) that two level-$k$ tiles connect in $p$, and hence two tiles of ${\cal T}$ connect in $p$.
\end{proof}

\begin{figure}
\centering
\includegraphics[width=\hsize]{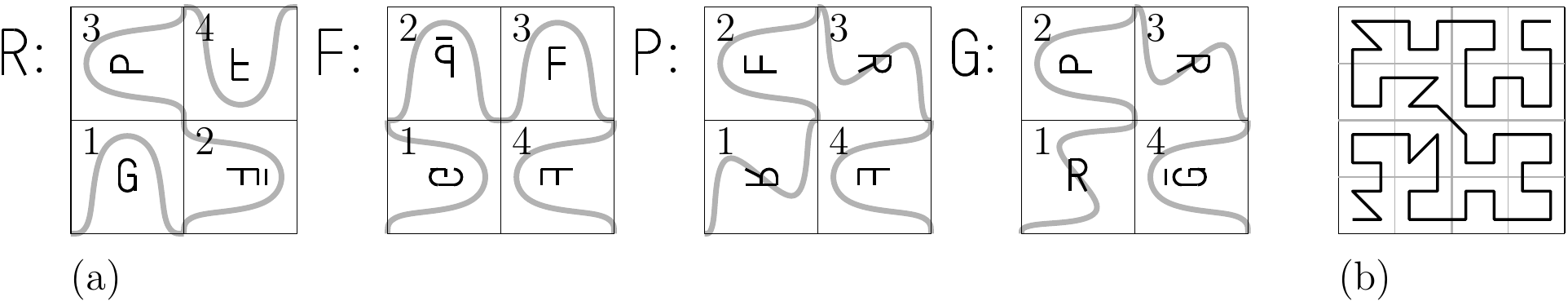}
\caption{(a) Definition of the \ARRWW scanning order. (b) Sketch of the level-three expansion (starting from rule \textsf{R}).}
\label{fig:arrww}
\end{figure}

\begin{figure}
\centering
\includegraphics[width=\hsize]{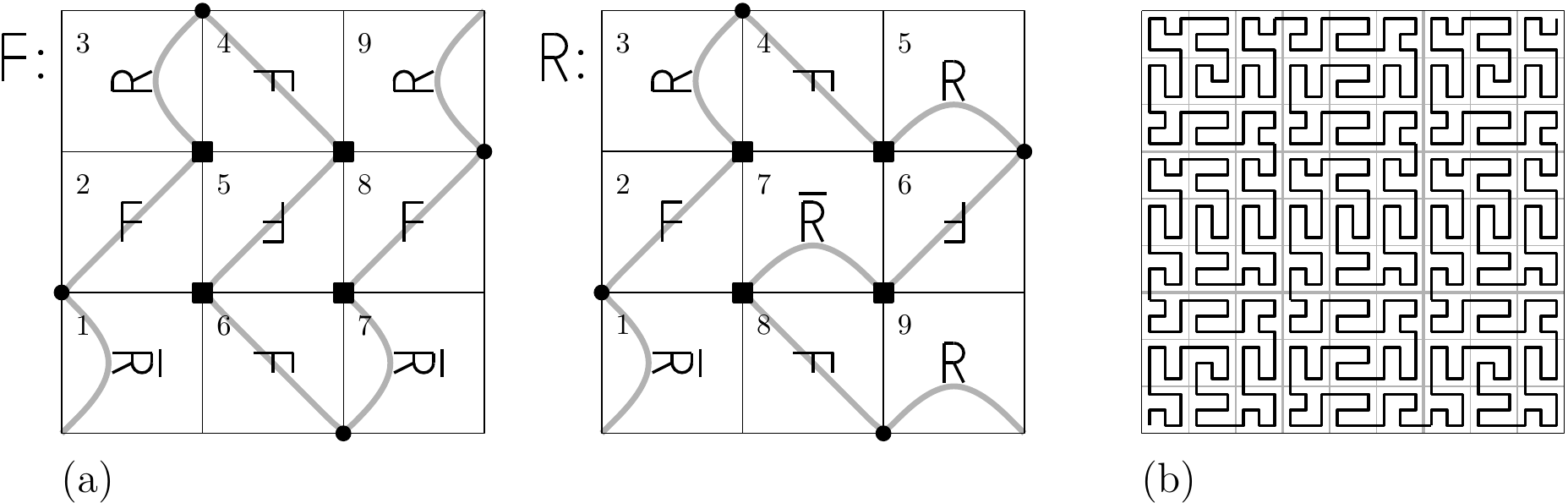}
\caption{(a) Definition of the Kochel scanning order. (b) Sketch of the level-three expansion (starting from rule \textsf{F}).}
\label{fig:kochel}
\end{figure}

The curve analysed above was described by Dekking in the form of a recursive polygonal approximation~\cite{Dekking1982} (he did not give a tiling-based definition or analyse its Arrwwid number). There are other uniform square curves with Arrwwid number three, with smaller tilings (less squares per square), but these are composite curves, not simple curves like Dekking's. Asano et al.\ define the \ARRWW-curve~\cite{Asano1997} (see Figure~\ref{fig:arrww}) and prove that it has Arrwwid number three. The \emph{Kochel curve} (Figure~\ref{fig:kochel}) is another curve with Arrwwid number three, which can be proven to have Arrwwid number three in a way very similar to the proof for Dekking's curve. Unlike the \ARRWW-curve and Dekking's curve, the Kochel curve has the property that consecutive tiles in the order always share an edge.

\enlargethispage\baselineskip
In contrast to the curves mentioned above, the following square curves all have Arrwwid number four: Hilbert's curve~\cite{Hilbert1891}; Z-order, also known as Lebesgue's curve~\cite{Lebesgue1904}; the $\beta\Omega$-curve~\cite{Wierum2002}; Peano's curve~\cite{Peano1890} and all other simple uniform square curves of size nine, such as Luxburg's variations~\cite{Luxburg1998}, R-order, and Meurthe order (see Haverkort and Van Walderveen~\cite{Haverkort2009} for definitions in our notation).

\subsection{A lower bound}
Above we saw that there are uniform square space-filling curves with Arrwwid number three, even while the underlying recursive tiling has Arrwwid number four. For the specific case of square space-filling curves with four tiles per recursive rule, Asano et al.\ proved that this is optimal: the Arrwwid number cannot be less than three~\cite{Asano1997}. Below we see a different proof technique (which we can also generalise to three dimensions later) and prove that an Arrwwid number of three is in fact optimal for \emph{any} space-filling curve that is based on a recursive tiling whose tiles are topologically equivalent to disks---note that this category includes all tilings we have seen so far.

\begin{theorem}\label{th:2dsfclb}
Any space-filling curve based on a recursive tiling with tiles that are topologically equivalent to disks has Arrwwid number at least three.
\end{theorem}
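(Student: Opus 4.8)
The plan is to argue by contradiction: suppose the curve had Arrwwid number at most two, so there is a cover ratio $c$ such that every disk $Q$ is covered by two curve fragments of total area at most $c \cdot \area(Q)$. I would then exhibit, for each candidate $c$, a single disk that defeats this. The disk will be a tiny disk $Q$ centred on a \emph{junction} point $p$ --- a point where at least three tiles meet, which exists by the argument preceding Theorem~\ref{th:2dtilinglb} and, by the similarity-preserving recursion, recurs at every scale. First I would fix the radius $r$ of $Q$ at the scale of the tiles meeting $p$, small enough that (i)~$Q$ meets only the tiles incident to $p$ and (ii)~the budget $c \cdot \area(Q) = c\pi r^2$ is too small to contain any tile more than a constant number of levels coarser than that scale. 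This confines any admissible cover to tiles local to $p$, so that covering $Q$ with two fragments amounts to the curve visiting $Q$ in at most two maximal runs of consecutive tiles.

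The second ingredient is a local, topological reduction that uses the disk-topology of the tiles. Because the tiles incident to $p$ are topological disks meeting only at $p$, their interiors form cyclic sectors around $p$, and (for a continuous curve) each fragment meets $Q$ in a union of connected subcurves. I would show that a single small fragment can cover the parts of $Q$ lying in two different incident tiles only if those two tiles are cyclically adjacent around $p$ and are joined by a connection point lying inside $Q$, hence arbitrarily close to $p$; the non-adjacent case is excluded by a Jordan-curve argument, since a connected subcurve running from one sector to a non-adjacent one while staying in $Q$ must sweep an intervening sector and thus carry its area. Consequently, covering $Q$ with two fragments forces the tiles around $p$ to split into two groups of cyclically adjacent tiles, each welded together by connection points near $p$; in particular a degree-three junction whose three tiles are pairwise unconnected near $p$, or a degree-four junction offering fewer than two such connections, already requires three fragments.

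It then remains --- and this is the main obstacle --- to guarantee that such an ``unhelpful'' junction occurs at arbitrarily fine scales. Here I would use a counting argument: at recursion level $\ell$ the subdivision of \unittile into $N_\ell$ topological-disk tiles is a planar subdivision with simply connected faces, so by Euler's formula it has $\Theta(N_\ell)$ vertices of degree at least three, whereas a continuous curve supplies only $N_\ell - 1$ connection points between consecutive level-$\ell$ tiles. Since each junction must be served by a connection essentially at the junction (two of them, for a degree-four junction), for large $\ell$ the connections are too few to serve all junctions, so some junction is unhelpful; combined with the scale-and-area setup above, the disk centred there needs three fragments, contradicting the assumption. I expect the delicate points to be making the ``one fragment covers at most one sector'' step fully rigorous (the Jordan-curve/connectivity claim, which also underlies the intended three-dimensional generalisation) and the bookkeeping that pins fragments to the correct scale so that the area budget actually bites; the discontinuous case, such as Z-order, I would treat separately, noting that jumps only increase fragmentation and so cannot lower the bound.
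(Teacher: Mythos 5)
Your strategy coincides with the paper's: a local obstruction at interior junctions (a tiny disk centred at a junction $p$ is coverable by two cheap fragments only if the order supplies at least $\degr(p)-2$ connection points at $p$) combined with a global count, via Euler's formula, showing that the roughly $N_\ell-1$ available connections cannot serve all junctions. Two steps, however, need repair before the argument closes. First, your local lemma is false as stated: two tiles served by one fragment need \emph{not} be cyclically adjacent around $p$. A connection point located exactly at $p$ welds \emph{any} two of the incident sectors --- the connecting subcurve passes through the common apex $p$ and never has to sweep an intervening sector, so the Jordan-curve argument does not apply to it. Only the weaker conclusion (each welding requires a connection point at $p$, hence $\degr(p)-2$ connections at $p$ in total) is true; fortunately that is all your counting uses. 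The paper sidesteps this trap, and the entire ``delicate'' connectivity step you flag, by arguing in parameter space: for faces $F_i,F_{i+1}$ around $p$ that are consecutive in the scanning order it defines the \emph{bridge} $B_i$, the shortest fragment running from $p$ inside $F_i$ to $p$ inside $F_{i+1}$; a bridge degenerates to the point $p$ exactly when the two faces connect there, and any two parameter intervals covering $Q$ must contain all but one bridge. This is a one-dimensional interval-containment argument, requires no continuity, and therefore also disposes of the discontinuous (Z-order-like) case, which your remark that ``jumps only increase fragmentation'' does not rigorously reduce to the continuous one.

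Second, comparing ``$\Theta(N_\ell)$ vertices of degree at least three'' with ``$N_\ell-1$ connections'' does not by itself yield a contradiction: the constant matters, and the quantity that must exceed $N_\ell-1$ is the degree excess $\sum_{v}(\degr(v)-2)$ rather than the number of vertices (a single junction of very high degree is one vertex but demands many connections, as in a pie-shaped subdivision). Euler's formula gives $\sum_{v}(\degr(v)-2)=2|E|-2|V|=2|F|-4\approx 2N_\ell$, so there is a factor-two margin --- but only after subtracting the contribution of boundary vertices, which need not be served; for a general, possibly non-uniform recursive tiling this boundary term is not obviously of lower order under uniform refinement. The paper controls it by refining only a single tile lying in the interior of \unittile, which freezes the boundary contribution while the number of faces grows without bound. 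With the cyclic-adjacency clause dropped, the connection-at-$p$ requirement formalised (the bridge device is the cleanest way), and the count recast as degree excess versus connections with the boundary frozen, your outline becomes the paper's proof.
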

\begin{proof}
Consider any subdivision of a space-filling curve, filling a unit tile \unittile, into a set $F'$ of $k$ fragments $\unittile[0,x_1], \unittile[x_1,x_2], ..., \unittile[x_{k-1},1]$, such that each fragment is a simply connected region in the plane, that is, topologically equivalent to a disk. Let $\overline{\unittile}$ be the unit tile's complement $\Reals^2 \setminus \unittile$, and let $F$ be $F' \cup \{\overline{\unittile}\}$.

The boundaries of the fragments in $F'$ form a plane graph ${\cal G}$ with face set~$F$. As the edge set $E$ of ${\cal G}$ we take the maximal curves that form a boundary between two faces of $F$. The vertex set $V$ of ${\cal G}$ is the set of points where three or more faces of $F$ meet. By $\degr(v)$ we denote the number of edges of $E$ that are incident on the vertex $v$. By $\doors(v)$ we denote the number of fragments of $F'$ that have $v$ as their entry point plus the number of fragments of $F'$ that have $v$ as their exit point.

\begin{figure}
\centering
\includegraphics{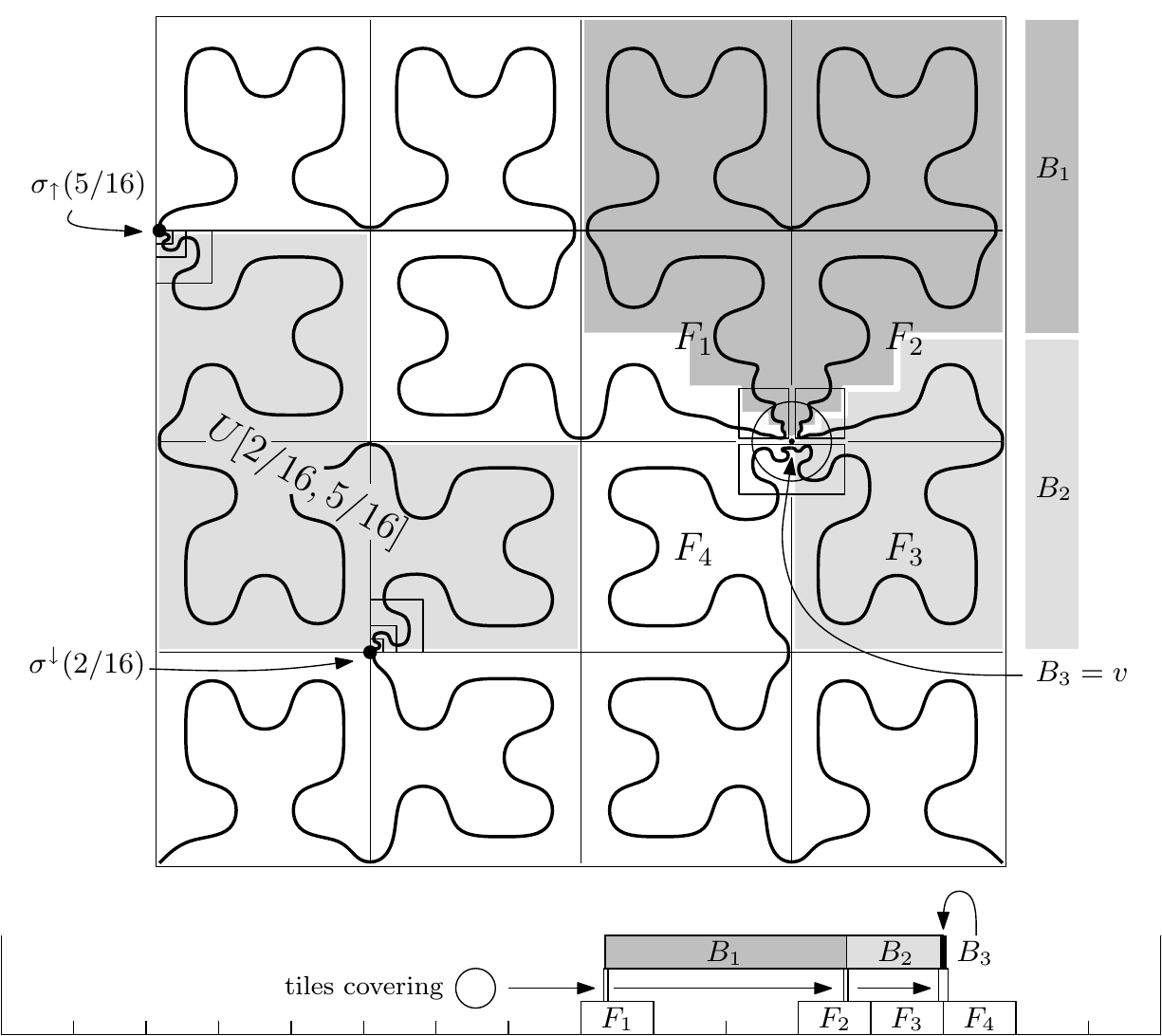}
\caption{Illustrating the proof of Theorem~\ref{th:2dsfclb}. On top we see a unit square divided into sixteen tiles, and below we see the corresponding unit interval that represents the space-filling curve, divided into sixteen fragments. The four tiles meeting in $v$ are $F_1, F_2, F_3$ and $F_4$, and the bridges around $v$ are $B_1$ (dark shaded area), $B_2$ (light shaded area), and $B_3$ (just the point $v$). Any disk $Q$ centered on $v$ can be covered with three small fragments (two squares, and one rectangle consisting of two squares). To do with fewer fragments, that is, only two, these fragments need to cover at least one of the bridges $B_1$ and $B_2$.}
\label{fig:bridges}
\end{figure}

Let $V'$ be the vertices of ${\cal G}$ that are not on the outer face $\overline{\unittile}$. I claim that if the space-filling curve has Arrwwid number less than three, we must have $\degr(v) - \frac12 \doors(v) \leq 2$ for every vertex $v \in V'$. To prove this, let's assume, for the sake of contradiction, that there is a vertex $v \in V'$ with $\degr(v) - \frac 12 \doors(v) > 2$. Let $F_1, ..., F_m$ be the faces of $F$ that meet at $v$ (where $m = \degr(v)$), in the order in which they appear in the scanning order; each face $F_i$ constitutes a fragment $U[f_i,f'_i]$. For $1 \leq i < m$, let $B_i$ (bridge $i$) be the smallest fragment $U[b_i,b'_i]$ (not necessarily a fragment in $F'$) that starts from $v$ in $F_i$ and ends at $v$ in $F_{i+1}$, more precisely: $B_i$ is the smallest fragment $U[b_i,b'_i]$ such that $b_i \in [f_i,f'_i]$, $b'_i \in [f_{i+1},f'_{i+1}]$, and $\sigma^{\downarrow}(b_i) = \sigma_{\uparrow}(b'_i) = v$ (see Figure~\ref{fig:bridges} for an example). Note that $B_i$ degenerates to the point $v$ if and only if $F_i$ and $F_{i+1}$ connect at $v$, contributing two to the entry and exit points counted in $\doors(v)$. Therefore the number of degenerate bridges at $v$ is at most $\frac12 \doors(v)$. Since $\frac 12 \doors(v) < \degr(v) - 2$ and there are $\degr(v)-1$ bridges in total, there are at least two non-degenerate bridges at $v$. By assumption the space-filling curve has Arrwwid number less than three, so there must be a constant $c$ such that any circular query range $Q$ can be covered by at most two fragments (not necessarily from $F'$) with total area at most $c \cdot \area(Q)$. Now consider a circle $Q$ centered at $v$ with area less than $1/c$ times the area of the smallest non-degenerate bridge at $v$. To cover $Q$ with only two fragments, the fragments must cover $Q$ and all but one of the bridges $B_1, ..., B_{m-1}$. Thus at least one non-degenerate bridge is covered, and by the definition of $Q$, this bridge has area more than $c \cdot \area(Q)$. However, this contradicts the definition of $c$. Therefore, if the space-filling curve has Arrwwid number less than three, we must have $\degr(v) - \frac12 \doors(v) \leq 2$ for every vertex $v \in V'$.

Summing the above claim over all vertices in $V'$ we get:\begin{equation}\label{eq:vertexsum}
\sum_{v \in V'} \degr(v) \leq 2|V'| + \frac12 \sum_{v \in V'} \doors(v) < 2|V| + \frac12 \sum_{v \in V'} \doors(v).
\end{equation}
Let $E'$ be the set of edges with both end points in $V'$, and let $E''$ be the remaining edges (those with at least one end point on the outer face). We have $\sum_{v \in V'} \degr(v) > 2|E'| = 2|E| - 2|E''|$ and $\sum_{v \in V'} \doors(v) \leq 2|F| - 2$ (since each face except the outer face has one entry point and one exit point). Thus Equation~\ref{eq:vertexsum} implies:\[
2|E| - 2|E''| < 2|V| + |F| - 1.
\]
Moving $2|E''|$ and $F$ to the other side and subtracting Euler's formula $|E| - |F| = |V| - 2$ twice we get:\begin{equation}\label{eq:maxfaces}
|F| < 2|E''| + 3.
\end{equation}

Now take any space-filling curve based on a recursive tiling whose tiles are simply connected regions (topological disks). Consider the subdivision of the unit tile \unittile into tiles down to a level of recursion on which there is a tile $T$ that does not touch the boundary of \unittile. This level must exist, since the maximum diameter of the tiles decreases geometrically with the level of recursion, so at some point the tiles become so small that the tile that covers the centre point of \unittile does not touch the boundary of~\unittile. Now subdivide $T$ into smaller tiles recursively while keeping the tiles outside $T$ as they are. Thus we keep $E''$ fixed while we continue to increase $|F|$, eventually getting $|F| \geq 2|E''| + 3$. This contradicts Equation~\ref{eq:maxfaces}, so the space-filling curve cannot have an Arrwwid number less than three.
\end{proof}

A direct consequence of the above is that when a space-filling curve is based on a recursive tiling with simply connected tiles and Arrwwid number three, the Arrwwid number of the space-filling curve is always exactly three as well, regardless of the order in which the curve traverses the tiles; the ``curve'' does not even need to be continuous.

\section{Three-dimensional tilings}
The definitions of tilings and Arrwwid numbers generalise naturally to higher dimensions. The Arrwwid number of a three-dimensional recursive tiling of a unit tile \unittile is the smallest number $a$ such that there is a constant $c$ such that any ball $Q$ that lies entirely in \unittile can be covered by $a$ tiles with total volume at most $c \cdot \vol(Q)$. Theorem~\ref{th:2dtilinglb} generalises naturally to three dimensions:
\begin{theorem}\label{th:3dtilinglb}
Each recursive tiling of a three-dimensional region \unittile has Arrwwid number at least four
(proof omitted).
\end{theorem}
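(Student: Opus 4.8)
The plan is to imitate the argument behind the two-dimensional lower bound (Theorem~\ref{th:2dtilinglb}), lifted by one dimension. Recall first that, exactly as in the plane, the Arrwwid number of a three-dimensional recursive tiling cannot be smaller than its vertex degree: a sufficiently small ball centred on a point where $m$ interior-disjoint tiles meet must intersect all $m$ of them, and covering it with fewer than $m$ tiles would force the use of a tile far larger than the ball, contradicting a bounded cover ratio. So it suffices to exhibit an interior point of degree at least four. Where the planar proof finds a \emph{curve} separating two tiles and shows that recursion must branch it into a degree-three point, here I would find a \emph{surface} separating two tiles and show that recursion forces it to be branched into a point of degree four.

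Concretely, I would first locate, at some level of the recursion, two tiles $A$ and $B$ sharing a two-dimensional piece $S$ of their common boundary, with $S$ in the interior of \unittile; such an interior separating surface exists for the same reason the interior curve $c$ exists in the planar case. Then I would restrict attention to the upper side of $S$ and follow the recursive refinement of $A$ alone. The subtiles of $A$ that touch $S$ trace out a subdivision of a region of $S$ into patches, and this induced surface subdivision is refined together with $A$. Applying the \emph{reasoning} of Theorem~\ref{th:2dtilinglb} to this induced two-dimensional tiling of $S$ should produce a point $v \in S$ where three of the patches meet, that is, three distinct subtiles $A_1, A_2, A_3$ of $A$. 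Since $v$ lies on $S \subseteq \partial B$, at least one tile on the lower side (tile $B$ or one of its subtiles) also contains $v$. The four tiles $A_1, A_2, A_3$ and this lower tile are pairwise interior-disjoint and all contain $v$, so $v$ has degree at least four, as required.

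It is worth recording why four, and not three, is forced, via Euler's formula applied to the \emph{link} of a genuine interior vertex $v$ (the pattern cut out on a small sphere around $v$). In that spherical graph the faces correspond to the tiles meeting at $v$, the edges to the separating surfaces through $v$, and the vertices to the edge-curves through $v$; writing their counts as $F_v$, $E_v$, $V_v$ we have $V_v - E_v + F_v = 2$. At a genuine vertex no cell passes straight through $v$, so every edge-curve is bordered by at least three tiles (each link vertex has degree at least three, whence $2E_v \ge 3V_v$) and every tile meets the sphere in a spherical polygon with at least three sides ($2E_v \ge 3F_v$); substituting into Euler's formula gives $F_v \ge 4$. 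This is the three-dimensional instance of the general fact that a non-degenerate codimension-$k$ feature borders at least $k+1$ tiles, and it is exactly why the bound rises from three to $d+1$.

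The hard part, just as it is the (tacit) hard part in the planar sketch, is to guarantee that the recursion really produces a \emph{genuine} interior vertex rather than a degenerate, ``laminar'' configuration. The danger is a tiling whose internal boundaries near $S$ behave like parallel slabs: surfaces that stay parallel to $S$ never trace a separating curve on $S$, and edge-curves that never cross never form a vertex. Ruling this out is where the effort goes: one must argue that a tiling into similar tiles, built from finitely many rules and shrinking geometrically in every dimension, cannot subdivide by parallel cuts forever, so that within the descendants of a single tile cuts in three independent directions eventually appear and their common intersection yields the required vertex. This is the direct analogue of the fact that one cannot recursively tile a square by parallel strips of similar rectangles, which is ultimately what forces the branch point in two dimensions. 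Once a genuine interior vertex is shown to exist, the surface-trace argument (or equivalently the Euler count) delivers degree four, and hence Arrwwid number at least four.
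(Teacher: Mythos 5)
The paper gives no proof of this theorem (it states only that Theorem~\ref{th:2dtilinglb} ``generalises naturally'' to three dimensions), and your proposal is precisely that natural generalisation: an interior separating surface in place of the interior curve, the two-dimensional branching argument applied to the subdivision induced on that surface to obtain three tiles on one side, one more tile from the other side to reach degree four, and the observation that the Arrwwid number is at least the vertex degree. Your argument is correct at (indeed, somewhat above) the level of rigour of the paper's own two-dimensional sketch, since you explicitly identify where the real work lies---excluding indefinitely laminar refinements, which is ruled out by the bounded aspect ratios coming from finitely many similarity classes---a point the paper glosses over even in two dimensions.
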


\subsection{Optimal recursified tilings}\label{sec:3drecursified}

\begin{figure}
\centering
\includegraphics[height=7cm]{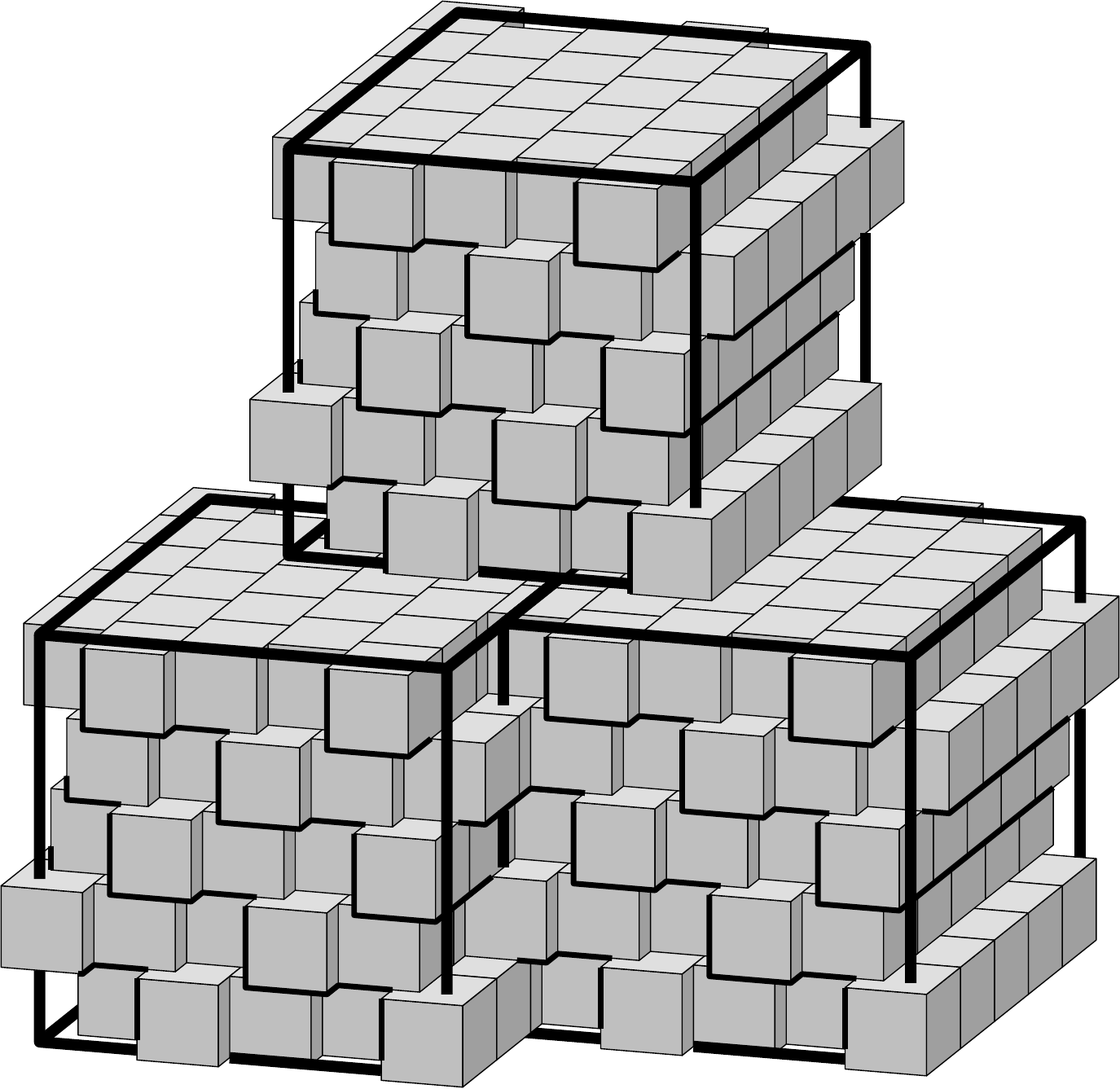}
\caption{Approximating large cubes in a tiling with vertex degree four by 125 small cubes each, taken from the same tiling scaled by a factor 1/5.}
\label{fig:shiftedcubes}
\end{figure}

\label{sec:3dshiftedcubes}
This subsection describes how to construct a recursified three-dimensional tiling with optimal Arrwwid number, that is, four. The construction is based on a tiling with cubes that are shifted with respect to each other (Figure~\ref{fig:shiftedcubes}). The coarse tiling consists of horizontal layers that are one cube high; each layer is shifted to the right and to the front with respect to the layer below over 1/3 of a cube's width. Each layer consists of columns that are one cube wide; each column is shifted to the back with respect to the column to the left over 1/3 of a cube's width. The fine tiling is equal to the coarse tiling scaled by a factor 1/5 (with the centre point of a large cube as its fixed point); each small tile is assigned to the large tile with which it has the largest overlap.

\begin{theorem}
There is a recursified three-dimensional tiling with Arrwwid number four.
\end{theorem}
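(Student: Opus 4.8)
The plan is to follow the template of the proof of Theorem~\ref{th:gosper}. First I would analyse the coarse shifted-cube tiling of Figure~\ref{fig:shiftedcubes} to establish that it has vertex degree four and to find the radius $s$ of the smallest ball that can meet five cubes; then I would bound how far the recursification moves the tile boundaries; and finally I would combine the two, exactly as in the Gosper argument, into a bound on the Arrwwid number. Since Theorem~\ref{th:3dtilinglb} guarantees that four is the best one can hope for, it suffices to show that for every ball $Q$ the recursified tiling, taken at a suitable level of recursion, can be covered by at most four tiles of total volume $O(\vol(Q))$.

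The geometric heart is the claim that the shifted-cube tiling has vertex degree exactly four. The key idea is to slice by the horizontal planes $z=k$ that separate consecutive layers. Within a single layer the cubes form, in $(x,y)$-cross-section, a ``brick wall'': unit squares in columns of width one, with neighbouring columns offset in $y$ by $1/3$ of a cube width; this two-dimensional pattern has vertex degree three, just like the shifted-square tiling of Figure~\ref{fig:shiftedsquares}. A point with non-integer $z$ lies in the interior of one layer's slab and so meets at most three cubes, while a point on a plane $z=k$ meets cubes from the layer below and the layer above. Since consecutive layers differ by the translation $(1/3,1/3,0)$, their brick patterns are shifted copies of one another, and the work is to verify that this shift caps the sum of the two local degrees at four everywhere. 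One checks that the vertical grid lines of the two patterns are offset by $1/3 \pmod 1$ and hence never coincide, so a degree-three vertex of one pattern always lands strictly inside a column of the other; a short computation with the per-column $y$-offsets then shows that such a vertex also misses every horizontal grid line of that column, landing in a face interior. Thus the only coincidences are vertex-with-face ($3+1$) and edge-with-edge ($2+2$), each giving four, while the dangerous vertex-on-edge ($3+2$) and vertex-on-vertex ($3+3$) never occur. From the separation of the nearest such configurations one reads off $s$ as a fixed fraction of the cube width.

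Next I would bound the displacement under one refinement. The fine tiling is the coarse tiling scaled by $1/5$, so each large cube is approximated by $125$ small cubes, and because each small cube is assigned entirely to the large tile it overlaps most, the assignment surface stays within some $\delta_1$ of the true cube face. Iterating, the displacement at depth $i$ obeys a recurrence $d_i \leq \delta_1 + d_{i-1}/5$, converging to $\delta = \tfrac{5}{4}\,\delta_1$, the analogue of the geometric series in the proof of Theorem~\ref{th:gosper}. Provided $\delta < s$, the recursified tiling inherits the property that no ball of radius below $s' = s - \delta > 0$ meets more than four tiles. For any ball $Q$ of radius $r$ I would then pick the level of recursion on which the underlying cube width lies between $r/s'$ and $5r/s'$: the lower bound forces $r < s'$ times that width, so $Q$ meets at most four tiles, and the upper bound makes each tile have volume $O(r^3)=O(\vol(Q))$, giving Arrwwid number four.

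The main obstacle will be the three-dimensional case analysis on the separating planes: unlike the planar hexagon recursification, here one must track how the $(1/3,1/3)$-shift interacts with the per-column $y$-offsets and rule out every five- and six-cube coincidence, and this is exactly where a worse choice of shift would push the degree to five or six. The second delicate point, entirely parallel to the failed attempts of Figures~\ref{fig:rhombus} and~\ref{fig:disconnected}, is verifying the ``good approximation'' inequality $\delta < s$: a careful evaluation of $\delta_1$ for the shifted fine grid (rather than the crude worst case of a full small-cube diameter) is needed to confirm that the $1/5$ scaling is fine enough. Once the degree bound and $\delta<s$ are in hand, the remaining volume bookkeeping is routine.
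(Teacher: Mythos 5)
Your proposal follows essentially the same route as the paper's proof: establish that the coarse shifted-cube tiling has vertex degree four (the paper simply asserts that the smallest ball meeting more than four tiles has radius $w/6$), bound the boundary displacement of one refinement step and sum the geometric series with ratio $1/5$ to get $\frac54\delta_1$, check that the accumulated displacement stays below that critical radius, and finish as in Theorem~\ref{th:gosper}. Your layer-slicing analysis of the vertex degree is in fact more explicit than anything in the paper, which only states the constants $w/6$ and $\frac13\sqrt{2}\cdot w'$ without derivation.
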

\begin{proof}
The proof follows the approach of Theorem~\ref{th:gosper}. The recursified tiling is constructed from the coarse and fine tilings of shifted cubes as described above. Let $w$ be the width of a cube in the coarse tiling, and thus, $w' = w/5$ is the width of a cube in the fine tiling. In the coarse tiling, the smallest ball that intersects more than four tiles has radius $w/6$. When we replace a large tile by the union of 125 small tiles, the boundary of the large tile stays within a distance of $\frac13 \sqrt{2} \cdot w' = \frac{1}{15}\sqrt{2} \cdot w$ from its original location. In recursion with scale factor 1/5, the movement of the boundary adds up to at most $\frac54 \cdot \frac1{15}\sqrt{2} \cdot w = \frac{1}{12}\sqrt{2} \cdot w$. Thus the smallest ball that intersects more than four large tiles will still have radius at least $(\frac16 - \frac1{12}\sqrt{2})w > \frac{1}{21}w$. The proof can now be completed as in Theorem~\ref{th:gosper}.
\end{proof}

Another way to get this result is based on a regular tiling with truncated octahedra. In such a tiling all vertices are incident on four tiles. The coarse tiling contains the truncated octahedron whose vertices have coordinates $(0, \plmin 5, \plmin 10)$, and all permutations of these coordinates. It is the intersection of an axis-parallel cube with diagonal $(-10,-10,-10)-(10,10,10)$, and an octahedron with vertices $(\plmin 15,0,0)$, $(0,\plmin 15,0)$, and $(0,0,\plmin 15)$. Further truncated octahedra are placed at translations $(20k,20l,20m)$ and $(20k+10,20l+10,20m+10)$, for all $k,l,m \in \Integers$. The fine tiling is the same as the coarse tiling scaled by a factor 1/5. To the large tile centered at the origin we assign 125 small tiles, namely those with centre points at the following coordinates or permutations thereof: $(0,0,0)$, $(0,0,\plmin 4)$, $(\plmin 2,\plmin 2,\plmin 2)$, $(0,\plmin 4,\plmin 4)$, $(0,0,\plmin 8)$, $(\plmin 2,\plmin 2, \plmin 6)$, $(\plmin 4,\plmin 4,\plmin 4)$, $(\plmin 2,\plmin 6,\plmin 6)$, $(\plmin 0,\plmin 4,\plmin 8)$, $(2,2,10)$, $(-2,-2,10)$, and $(-2,2,-10)$. 
To prove that the resulting tiling has Arrwwid number four, we need the following. In the coarse tiling, the smallest ball that intersects more than four tiles has radius $\frac52\sqrt 2$. When we replace a large tile by a union of small tiles, the boundary of the large tile stays within a distance of 2 from the original boundary. In recursion with scale factor 1/5, the movement of the boundary adds up to at most $\frac54 \cdot 2 = \frac52$. Thus the smallest ball that intersects more than four large tiles will still have radius at least $\frac52 (\sqrt{2} - 1)$. The proof can now be completed as in Theorem~\ref{th:gosper}.

\subsection{Rectangular tilings}

The following is a straightforward generalisation of Observation~\ref{ob:squaretiling}.
\begin{observation}\label{ob:cubetiling}
Any uniform cube tiling has Arrwwid number eight.
\end{observation}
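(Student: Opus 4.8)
The plan is to follow the two-part structure of the proof of Observation~\ref{ob:squaretiling}, simply replacing the value four ($=2^2$) by eight ($=2^3$): first show that any ball $Q$ can be covered by at most eight cubes whose total volume is $O(\vol(Q))$, and then show that eight cubes are sometimes unavoidable. Throughout I would write the subdivision factor as $m$, so that each cube is split into $m^3$ equal subcubes and cube widths shrink by a factor $m$ between consecutive levels of recursion. This is forced: equally sized cubes can only tile a cube as a regular $m \times m \times m$ grid, so at every fixed recursion depth the tiling is a regular axis-parallel cubic grid.

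For the upper bound I would argue exactly as in the plane, but with three families of grid planes instead of two. Given a ball $Q$ of radius $r$ lying in \unittile, there is a level of recursion on which the cube width $s$ satisfies $2r \leq s < 2mr$; such a level exists because the widths form a geometric sequence of ratio $m$ that starts at a value at least $2r$ (the ball fits in \unittile) and eventually drops below $2r$. At this level the spacing of the grid planes perpendicular to each of the three coordinate axes is at least $2r$, so the projection of $Q$ onto each axis is crossed by at most one grid plane. Hence $Q$ lies in the $2 \times 2 \times 2$ block of at most eight cubes surrounding the single vertex these planes may isolate, and the total volume of those cubes is at most $8 \cdot (2mr)^3 = 64 m^3 r^3 = O(\vol(Q))$ for fixed $m$, with a cover ratio that is a constant depending on $m$.

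For the lower bound it suffices to exhibit a point of degree eight and invoke the earlier observation that the Arrwwid number of a recursive tiling is at least its vertex degree: at every interior vertex of the $m \times m \times m$ grid inside a cube, eight interior-disjoint subcubes meet, so the vertex degree is eight. Alternatively I could reprove this directly in the style of Observation~\ref{ob:squaretiling}: for any purported cover ratio $c$, take a cube $T$ of volume larger than $c r^3$, centre a ball $Q$ of radius $r$ at an interior grid vertex where eight subcubes of $T$ meet, and note that $Q$ meets all eight subcubes while the only single tile covering it is $T$ itself, whose volume exceeds $c r^3$. Combining the two bounds gives an Arrwwid number of exactly eight.

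The main point to verify will be that expanding a uniform cube tiling to any fixed depth really does yield one single regular axis-parallel cubic grid, so that the planar one-grid-line-per-direction argument generalises cleanly to one-grid-plane-per-axis. This holds because the similarity transformations that the recursion applies to subtiles must map cubes to cubes, and are therefore compositions of the cube's own symmetries with scalings and translations; such maps send an axis-parallel cube to an axis-parallel cube, so no tilted cubes ever appear and the subgrids of neighbouring parent cubes stay aligned. I expect this to be the only place needing explicit care; once it is in hand, there is nothing deeper here than in the two-dimensional case, which is why the statement is a \emph{straightforward} generalisation of Observation~\ref{ob:squaretiling}.
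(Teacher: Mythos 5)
Your proposal is correct and is exactly the ``straightforward generalisation'' of Observation~\ref{ob:squaretiling} that the paper has in mind (the paper states Observation~\ref{ob:cubetiling} without writing out a proof): the same two-part argument, with one grid plane per axis giving the upper bound of eight and an interior grid vertex of degree eight giving the lower bound. Your extra care in checking that a uniform cube tiling expanded to any fixed depth forms a single regular axis-parallel grid, and in handling an arbitrary subdivision factor $m$, is appropriate and does not change the approach.
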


\begin{figure}
\centering
\includegraphics[width=7cm]{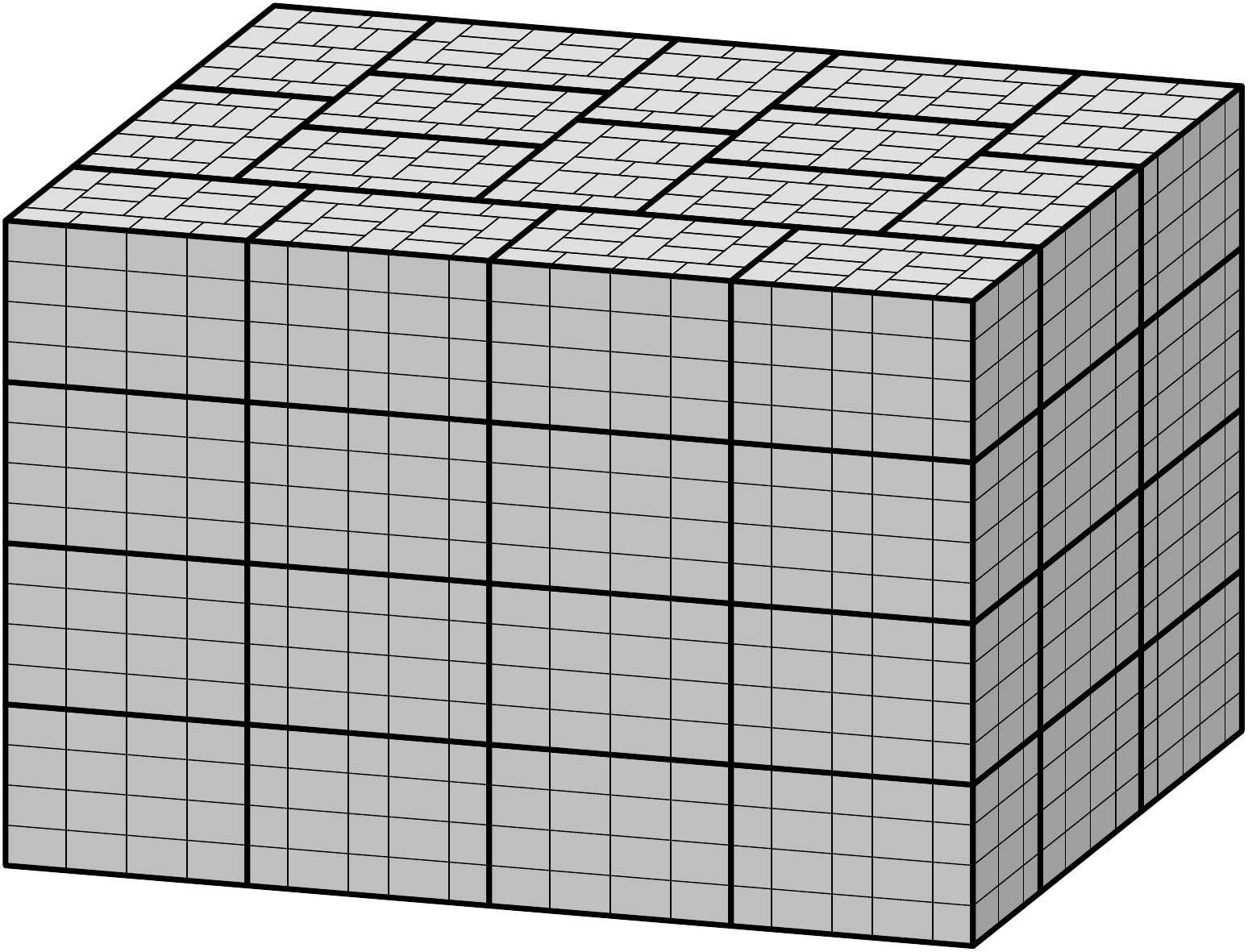}
\caption{Lifting a Daun tiling into three dimensions.}
\label{fig:lifted-daun}
\end{figure}

\label{sec:3dliftedDaun}
With rectangular tiles, that is, with axis-parallel boxes, one can do better. The three-dimensional rectangular tiling with lowest Arrwwid number found so far, is the ``lifted Daun tiling'' shown in Figure~\ref{fig:lifted-daun}. It is obtained by adding a third dimension to the two-dimensional Daun tiling from Figure~\ref{fig:daun}. The unit tile of the three-dimensional tiling is a rectangular block with width-to-depth ratio 3/2 and arbitrary height. It is divided into 64 subtiles, organised into 4 equal layers of 16 tiles each. Each layer, seen from above, shows a Daun tiling. The tiles are rotated around a vertical axis only in the same way as in the Daun tiling. Each vertex in the resulting tiling is on the boundary between two layers and is adjacent to three tiles in each layer (because the Daun tiling has vertex degree three), therefore the vertex degree of the three-dimensional tiling is six. From here one can prove (similar to Theorem~\ref{th:Dauntiling}) that the three-dimensional ``lifted Daun tiling'' has Arrwwid number six.

\begin{theorem}\label{th:lifteddauntiling}
There is a three-dimensional rectangular recursive tiling with Arrwwid number six.
\end{theorem}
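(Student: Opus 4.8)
The plan is to show that the lifted Daun tiling has Arrwwid number exactly six, by proving a matching lower and upper bound. The lower bound is immediate from the observation that the Arrwwid number of a recursive tiling is at least its vertex degree: since the tiling contains a vertex where three tiles of one layer meet three tiles of the neighbouring layer, its vertex degree is six, so its Arrwwid number is at least six. The bulk of the work is the upper bound, which I would establish by following the recipe of Theorem~\ref{th:Dauntiling}: prove that the tiling, expanded to a suitable finite level of recursion, has vertex degree at most six, and then cover an arbitrary ball by the boxes of the underlying grid.

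First I would analyse the combinatorial structure of the tiling. The height subdivision is completely decoupled from the horizontal Daun pattern: each tile is cut into four equal layers by horizontal planes, and since every tile at a given level of recursion spans the same range of heights, these planes line up across the whole unit tile into one uniform one-dimensional grid of horizontal slabs. Consequently, expanding the recursion to level $k$ yields a grid of axis-parallel boxes of size $(a/4^k)\times(a/4^k)\times(H/4^k)$, where $a$ is the edge length of a footprint square of the unit tile and $H$ its height, and each tile is exactly six such boxes (the six footprint squares of a Daun tile times one slab). The key structural claim is that every vertex of this tiling lies on a horizontal slab boundary: within a single slab the tiling is a prism over a two-dimensional Daun tiling, which has no isolated meeting points in its interior, so a genuine $0$-dimensional vertex can only occur where a footprint vertex meets a horizontal plane. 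At such a point at most three tiles lie in the slab above and at most three in the slab below, because a two-dimensional Daun tiling has vertex degree three (Theorem~\ref{th:Dauntiling}); hence the vertex degree is at most six. This is the analogue of the $H(k)$ induction in the proof of Theorem~\ref{th:Dauntiling}, and I expect it to be the main obstacle: one must verify carefully that no new high-degree vertices are created in the interiors of the former vertical edges when a tile is refined, and that the horizontal refinement inside each slab inherits the degree-three property from the planar Daun tiling.

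With the degree bound in hand, the covering argument proceeds as in Theorem~\ref{th:gosper} and Theorem~\ref{th:Dauntiling}. Given a ball $Q$ of radius $r$, I would choose the level $k$ so that $\min(a,H)/4^k$ lies in the interval $(2r,8r]$; this is always possible because consecutive levels differ by the factor $4$. Then both the horizontal grid spacing $a/4^k$ and the vertical spacing $H/4^k$ exceed $2r$, while neither exceeds $8r\cdot\max(a,H)/\min(a,H)$, a constant times $r$. Since $Q$ has diameter $2r$, it crosses at most one grid plane in each of the three coordinate directions, so it meets at most $2^3=8$ boxes; these eight boxes share a common grid vertex $p$, and the number of distinct tiles among them equals the degree of $p$, which is at most six. (If $Q$ crosses fewer planes it meets even fewer tiles.) Each of these tiles has volume $6\,(a/4^k)^2(H/4^k)=O(r^3)$, so $Q$ is covered by at most six tiles of total volume $O(r^3)=O(\vol(Q))$. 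Combined with the lower bound, this shows the Arrwwid number is exactly six.

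The one point requiring genuine care beyond Theorem~\ref{th:Dauntiling} is that the boxes need not be cubes: because the height $H$ is arbitrary, the three grid spacings scale together but are not equal, so the level $k$ cannot in general make all three spacings lie in a single factor-four window. The remedy above---pinning the smaller of $a$ and $H$ into $(2r,8r]$ and absorbing the aspect ratio $\max(a,H)/\min(a,H)$ into the cover ratio constant---keeps both the ``at most eight boxes'' property and the $O(r^3)$ volume bound intact, which is all that is needed.
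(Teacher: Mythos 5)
Your proof is correct and follows essentially the same route as the paper, which establishes that every vertex of the lifted Daun tiling lies on a layer boundary and is incident to at most three tiles in each adjacent layer (hence vertex degree six), and then defers the covering argument to the method of Theorem~\ref{th:Dauntiling}. Your explicit reduction of the three-dimensional degree bound to the planar degree-three property of the Daun tiling, and your handling of the unequal horizontal and vertical grid spacings via the aspect-ratio constant, are exactly the details the paper leaves implicit.
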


We do not know if it is possible to get an Arrwwid number of four or five with a rectangular tiling in three dimensions. As in the two-dimensional case, we could derive some properties that the aspect ratios and the numbers of tiles should have in order to allow an Arrwwid number of four. Nevertheless the search space is still huge. So far I only managed to search all three-dimensional uniform rectangular tilings with less than 27 tiles; no recursive tiling with Arrwwid number four was found.

\section{Three-dimensional space-filling curves}

\subsection{Rectangular space-filling curves}

\begin{theorem}\label{th:cubesfc}
Any space-filling curve based on a uniform cube tiling has Arrwwid number at least seven.
\end{theorem}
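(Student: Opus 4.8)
The plan is to reuse the ``bridge'' machinery from the proof of Theorem~\ref{th:2dsfclb}, but to replace the Euler-formula bookkeeping by a direct counting argument tailored to the rigid structure of a cube grid; this also explains why the bound jumps from the general value of Theorem~\ref{th:3dtilinglb} up to $2^3-1=7$. Since the tiling is a uniform cube tiling, expanding the recursion to depth $k$ produces a perfectly regular grid of $N^3$ congruent cubes, where $N=n^k$ and $n$ is the (fixed, because of uniformity) number of subcubes per side; by taking $k$ large I may assume $N$ is as large as I like. Every grid vertex $v$ in the interior of \unittile is incident on exactly eight cubes, and every sufficiently small ball $Q$ centred on $v$ meets all eight of them. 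I would assume for contradiction that the curve has Arrwwid number at most six, so that there is a constant $c$ with which every such ball can be covered by at most six curve fragments of total volume at most $c\cdot\vol(Q)$.

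Fix an interior vertex $v$, let $F_1,\dots,F_8$ be the eight cubes meeting at $v$ listed in scanning order, and define bridges $B_1,\dots,B_7$ exactly as in Theorem~\ref{th:2dsfclb}. First I would establish a local claim: under the assumption above, at least two of these bridges are degenerate (shrink to the point $v$). The reasoning is that of the per-vertex inequality in Theorem~\ref{th:2dsfclb}. Any covering fragment is a contiguous arc, so the cubes whose near-$v$ portions it covers form a consecutive block of $F_1,\dots,F_8$; hence a covering by six fragments splits the eight cubes into at most six consecutive runs, forcing at least $8-6=2$ of the bridges $B_i$ to be spanned (contained in a covering fragment). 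Now choose $Q$ small enough that $c\cdot\vol(Q)$ is less than the volume of the smallest non-degenerate bridge at $v$ (each non-degenerate bridge contains at least one whole cube and therefore has strictly positive volume). Then spanning any non-degenerate bridge would already exceed the permitted total volume, so both spanned bridges must be degenerate, proving the claim.

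A degenerate bridge at $v$ is precisely a pair of cubes that are consecutive in the scanning order and whose shared connection point---the exit point of the earlier cube and the entry point of the later one---is $v$. The decisive observation is a global counting bound: a consecutive pair of cubes has exactly one connection point, so it can form a degenerate bridge at \emph{at most one} vertex. Consequently the total number of degenerate bridges, summed over all interior vertices, is at most the number of consecutive pairs, namely $N^3-1$. On the other hand, the local claim supplies at least two degenerate bridges at each of the $(N-1)^3$ interior vertices, for a total of at least $2(N-1)^3$. For $N\ge 5$ we have $2(N-1)^3 > N^3-1$, a contradiction. Hence the Arrwwid number is at least seven.

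I expect the main obstacle to be the careful handling of the local claim rather than the counting: one must verify that a fragment covering the near-$v$ portions of two of the eight cubes really does contain the whole bridge between them (so that its volume is at least the bridge's volume), and that choosing a different tiny $Q$ at each vertex is legitimate, since the argument is applied to each interior vertex independently. Both points are already implicit in the proof of Theorem~\ref{th:2dsfclb}; the only genuinely new ingredient is the injection from degenerate bridges into consecutive cube pairs, which converts the eight-cubes-per-vertex geometry into the clean inequality $2(N-1)^3\le N^3-1$. The same argument, with $2^d$ cubes meeting at each interior vertex, yields the general lower bound $2^d-1$.
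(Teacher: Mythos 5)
Your proof is correct and follows essentially the same route as the paper's: the per-vertex claim that an interior cube-grid vertex must admit at least two degenerate bridges is exactly the paper's condition $\tdegr(v) - \frac12\doors(v) \leq 6$ (i.e.\ $\doors(v) \geq 4$), and your injection from degenerate bridges into globally consecutive cube pairs is the same global count as the paper's bound $\sum_v \doors(v) \leq 2|C|$, yielding the identical contradiction $2(N-1)^3 \leq N^3$ for large $N$. One small inaccuracy: a non-degenerate bridge need not contain a whole cube (two globally consecutive cubes whose connection point is not $v$ give a non-degenerate bridge contained in their union), but this is harmless since any fragment $\unittile[b,b']$ with $b < b'$ has positive volume $(b'-b)\cdot\vol(\unittile)$ by the volume-parameterised definition of fragments.
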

\begin{proof}
Consider the regular cube tiling ${\cal T}$ obtained by subdividing a unit cube \unittile recursively into smaller cubes to a certain depth of recursion. Let $C$ be the set of tiles obtained. By $\tdegr(v)$ we denote the number of tiles in $C$ that are incident on the vertex $v$. By $\doors(v)$ we denote the number of tiles in $C$ that have $v$ as their entry point plus the number of tiles in $C$ that have $v$ as their exit point.

Let $V'$ be the set of vertices of ${\cal T}$ that do not lie on the boundary of \unittile. If the space-filling curve has Arrwwid number less than seven, we must have $\tdegr(v) - \frac12 \doors(v) \leq 6$ for every vertex $v \in V'$ (the proof of this claim is a straightforward adaptation of the proof of Theorem~\ref{th:2dsfclb}). Note that $\tdegr(v) = 8$ for each $v \in V'$, so in fact we must have $\frac12 \doors(v) \geq 2$ for each $v \in V'$. Since the total number of entry and exit points at vertices is at most $2|C|$, this leads to:\begin{equation}\label{eq:cubesfc}
|C| \geq \frac12 \sum_{v \in V} \doors(v) \geq \sum_{v \in V'} \frac12 \doors(v) \geq 2|V'|.
\end{equation}
However, in a uniform cube tiling we have $|V'| = (|C|^{1/3} - 1)^3$, which is more than $\frac12 |C|$ for a sufficiently deep level of recursion. This contradicts Equation~\ref{eq:cubesfc}, so the space-filling curve cannot have an Arrwwid number less than seven.
\end{proof}

\begin{theorem}
Any space-filling curve based on the lifted Daun tiling of Section~\ref{sec:3dliftedDaun} has Arrwwid number six.
\end{theorem}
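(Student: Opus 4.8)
The approach is to combine the upper bound that comes for free from the underlying tiling with a lower bound proved by the vertex-counting technique of Theorem~\ref{th:cubesfc}. For the upper bound, recall that every tile of a recursive tiling is itself a curve fragment, so the Arrwwid number of any curve based on the lifted Daun tiling is at most the Arrwwid number of the tiling, which is six by Theorem~\ref{th:lifteddauntiling}. It therefore remains to prove that no ordering of the lifted Daun tiling yields an Arrwwid number below six; this lower bound is the real content of the theorem.

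For the lower bound I would fix a deep level of recursion $k$, let $C$ be the resulting set of $64^k$ tiles, and reuse the quantities $\tdegr(v)$ and $\doors(v)$ from the proof of Theorem~\ref{th:cubesfc}. The bridge argument of Theorem~\ref{th:2dsfclb} adapts verbatim to give: if the curve has Arrwwid number less than six, then every interior vertex $v$ satisfies $\tdegr(v) - \frac12\doors(v) \le 5$. The lifted Daun tiling has vertex degree six (each interior vertex lies on an internal boundary plane between two layers and is incident to three tiles in each layer, because the Daun tiling has vertex degree three), so $\tdegr(v) = 6$ for every interior vertex and the inequality forces $\doors(v) \ge 2$. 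Letting $V'$ be the set of interior vertices, summing, and using that at most $2|C|$ entry and exit points can lie at vertices, I obtain
\begin{equation}
2|C| \ge \sum_{v} \doors(v) \ge \sum_{v \in V'} \doors(v) \ge 2|V'|,
\end{equation}
so $|C| \ge |V'|$. The plan is to contradict this by showing that the lifted Daun tiling has strictly more interior vertices than tiles.

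The heart of the proof is therefore the vertex count. At level $k$ the tiling consists of $4^k$ horizontal layers, creating $4^k - 1$ internal boundary planes; above and below each such plane the cross-section is a copy of the two-dimensional Daun tiling at level $k$, aligned so that each of its interior (degree-three) vertices lifts to a degree-six vertex of the three-dimensional tiling. Hence $|V'|$ equals $(4^k-1)$ times the number $V_2(k)$ of interior vertices of the planar Daun tiling at level $k$. Since that tiling has vertex degree three, all its interior vertices are degree-three T-junctions, and a short Euler's-formula computation (the sum of vertex degrees equals $2|E_2|$, combined with $|V_2| - |E_2| + |F_2| = 2$ and $|F_2| = 16^k + 1$) gives $V_2(k) = 2 \cdot 16^k + O(4^k)$. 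Consequently
\begin{equation}
|V'| = (4^k - 1)\, V_2(k) = (4^k-1)\,(2\cdot 16^k + O(4^k)) = 2 \cdot 64^k + O(16^k),
\end{equation}
which exceeds $|C| = 64^k$ for all sufficiently large $k$, contradicting $|C| \ge |V'|$.

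The main obstacle is precisely this count: a crude bound such as $V_2(k) \ge 16^k$ would only give $|V'| \ge 64^k - 16^k < |C|$ and would not close the argument, so the factor of two coming from the degree-three structure of the Daun tiling (via Euler's formula) is essential. A secondary point requiring care is the alignment claim---that vertically adjacent tiles share the same horizontal subdivision pattern, so that the planar Daun vertices really do stack up into genuine degree-six vertices rather than degenerating into lower-degree points; this follows from the fact that the lifting applies the same rotation to every layer within a column, but it should be verified to persist across level boundaries as well.
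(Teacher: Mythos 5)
Your proof is correct and follows essentially the same route as the paper: the bridge argument forces $\doors(v)\ge 2$ at every interior (degree-six) vertex, hence $|C|\ge|V'|$, which is then contradicted by showing the tiling eventually has more interior vertices than tiles. The only difference is in that final count --- the paper argues incrementally (each subdivision of a tile adds $63$ tiles but at least $72$ interior vertices, using the $30$ vertices per layer boundary from Figure~\ref{fig:daun}), whereas you count globally via Euler's formula on the planar Daun tiling to get $|V'| = 2\cdot 64^k + O(16^k) > 64^k = |C|$; both versions close the argument.
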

\begin{proof}
The proof is similar to that of Theorem~\ref{th:cubesfc}. In this case a contradiction is derived from $\tdegr(v) - \frac12 \doors(v) \leq 5$ and $\tdegr(v) = 6$ for each $v \in V'$, which gives $\frac12 \doors(v) \geq 1$ for each $v \in V'$. Summing up over all $v \in V'$ we get $|C| \geq |V'|$. Now consider the tiling obtained by expanding one level of recursion of the lifted Daun tiling. Whenever we subdivide a tile in this tiling, we replace a tile by 64 smaller tiles, and add a number of new vertices: to start with, on each of the three boundaries between the four layers of new tiles, there are 30 new vertices (see Figure~\ref{fig:daun}), at least 24 of which lie in the interior of the unit tile \unittile (since at least one short side and at least one long side of the tile lies in the interior of \unittile). Thus we increase the number of tiles by 63 while increasing the number of vertices in the interior of \unittile by at least 72. Therefore, when we subdivide enough tiles, we get $|V'| > |C|$, which contradicts the above. Therefore the space-filling curve cannot have Arrwwid number less than six.
\end{proof}

\subsection{A lower bound for convex space-filling curves}

To prove a lower bound on the Arrwwid number of three-dimensional space-filling curves, we follow the same general approach as in two dimensions. The proof in the two-dimensional case essentially has two main ingredients: (1) In any subdivision of a two-dimensional unit tile into simply connected tiles, the total number of vertex-tile incidences is roughly twice the number of edges; therefore Euler's formula gives us a relation ${\cal A}$ between the number of vertices, the number of tiles, and the total number of vertex-tile incidences. (2) For the Arrwwid number to be lower than three, there must be another relation ${\cal B}$ between the number of vertices, the number of tiles, and the total number of vertex-tile incidences. It turns out ${\cal A}$ conflicts with ${\cal B}$ for large enough tilings, hence an Arrwwid number lower than three is impossible.

Unfortunately ingredient (1) does not easily generalise to three dimensions, because besides the number of edges, another unknown enters the equations, namely the number of two-dimensional facets between the tiles. Therefore, to be able to complete our proof, we need another way to establish a relation between the number of vertices, the number of tiles, and the number of vertex-tile incidences in a three-dimensional tiling. We will now see how this can be done when the tiles are convex polyhedra.

\begin{figure}
\centering
\includegraphics[width=\hsize]{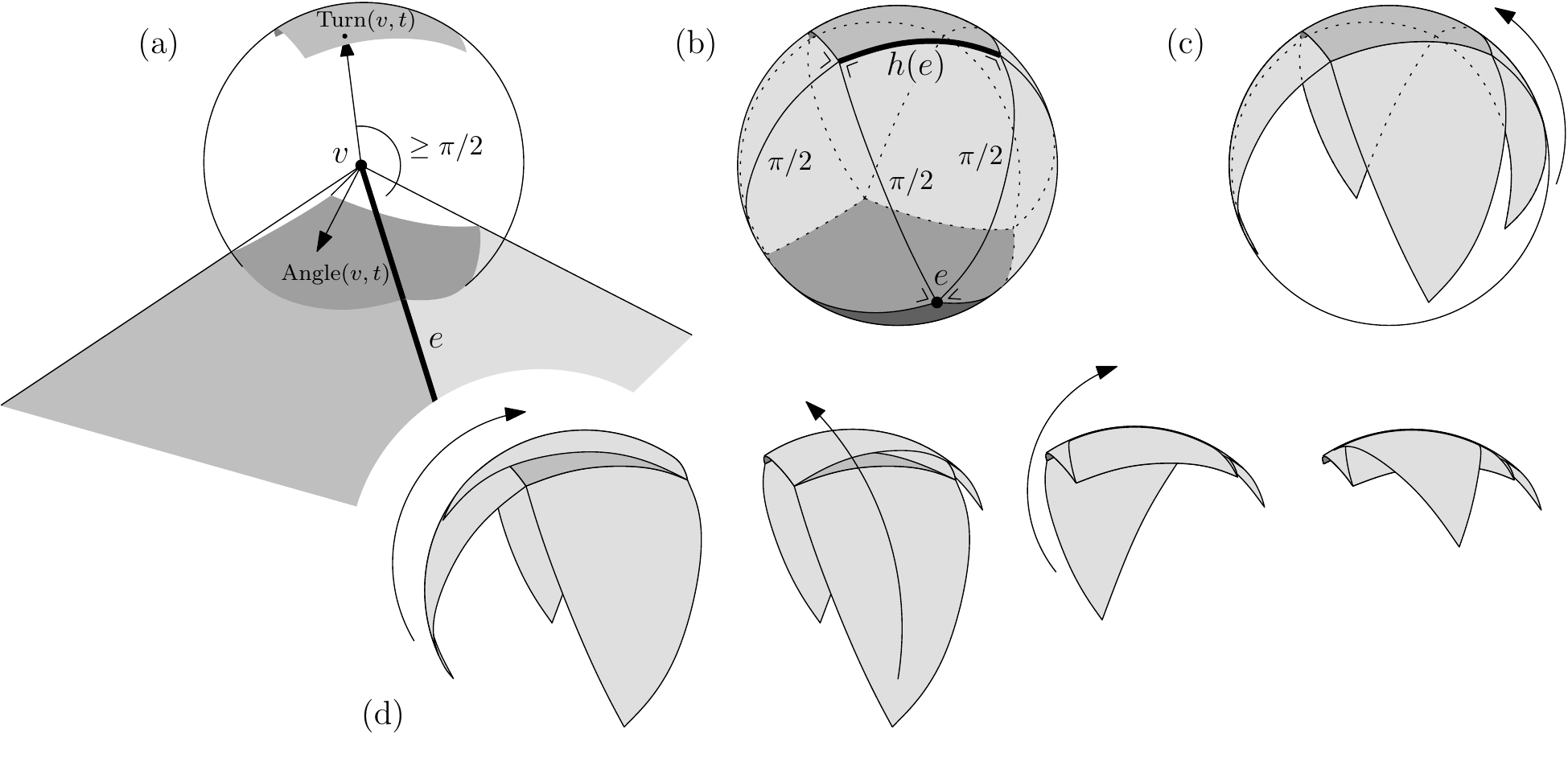}
\caption{For a vertex $v$ on a polyhedron $t$, we have that $\Angl(v,t)$ and $\Turn(v,t)$ can be covered by the triangles between them on the unit sphere of directions.}
\label{fig:3d-turn}
\end{figure}

Consider a convex polyhedron $t$, and let $v$ be a point on the boundary of $t$. In the following, we define the size of a set of vectors $S$ as the area of the projection of the vectors in $S$ on a unit sphere. Let $\angl(v,t)$ be the size of the set $\Angl(v,t)$, which consists of the vectors that point from $v$ into $t$. Let $\turn(v,t)$ be the size of the set $\Turn(v,t)$, which consists of the vectors that point from $v$ away from $t$ at an angle of at least $\pi/2$ with the boundary of $t$. Figure~\ref{fig:3d-turn}(a) illustrates the case in which $v$ is a vertex of $t$. Note that if $v$ is a point in the interior of an edge of $t$, we have $\angl(v,t) \in (0,2\pi)$ and $\turn(v,t) = 0$; if $v$ is a point in the interior of a facet of~$t$, we have $\angl(v,t) = 2\pi$ and $\turn(v,t) = 0$. We now have the following:
\begin{lemma}\label{le:angleplusturn}
$\turn(v,t) + \angl(v,t) \leq 2\pi$. Equality holds if and only if $v$ is a point in the interior of a facet of $t$, in which case $\angl(v,t) = 2\pi$ and $\turn(v,t) = 0$.
\end{lemma}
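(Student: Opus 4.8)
The plan is to reduce the statement to a clean two-dimensional fact about spherical polygons. First I would dispose of the easy cases: if $v$ lies in the interior of an edge or of a facet, the values of $\angl(v,t)$ and $\turn(v,t)$ are already pinned down in the text (for a facet $\angl = 2\pi$, $\turn = 0$, giving equality; for an edge $\turn = 0$ and $\angl < 2\pi$, giving strict inequality), so it remains to treat the case that $v$ is a genuine vertex of $t$. There the tangent cone and the normal cone of $t$ at $v$ are \emph{pointed} convex cones, so on the unit sphere of directions they cut out convex spherical polygons $P := \Angl(v,t) \cap S^2$ and $P^\ast := \Turn(v,t) \cap S^2$, each contained in an open hemisphere; $P$ has one vertex per edge of $t$ at $v$, $P^\ast$ has one vertex per facet of $t$ at $v$, and, since every inward direction makes an angle at least $\pi/2$ with every turn direction, $P^\ast$ is the polar dual of $P$.

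Next I would exploit the perpendicularity that ties the two polygons together. Each facet normal $n_j$ is orthogonal to the two edge directions bounding that facet, so $n_j$ is the pole of the great circle carrying the corresponding edge of $P$; symmetrically, each edge direction is the pole of an edge of $P^\ast$. Hence, for every edge of $P$, the spherical triangle joining that edge to the opposite pole $n_j$ has two right base angles and an apex angle equal to the edge's length, so by spherical Gauss--Bonnet its area equals that length; the same holds for $P^\ast$. These right triangles (one for each edge of $P$ and of $P^\ast$) are exactly the triangles drawn ``between'' $P$ and $P^\ast$ in Figure~\ref{fig:3d-turn}, and one checks (as in the cube, where $P$ and $P^\ast$ are antipodal octants) that together with $P$ and $P^\ast$ they tile the whole sphere. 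Summing areas gives $\angl(v,t) + \turn(v,t) + \mathrm{perim}(P) + \mathrm{perim}(P^\ast) = 4\pi$; equivalently, applying Gauss--Bonnet to $P^\ast$ and using that each exterior angle of $P^\ast$ equals the corresponding edge length of $P$, one gets $\turn(v,t) = 2\pi - \mathrm{perim}(P)$.

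The whole statement then collapses to a single inequality: for a convex spherical polygon $P$ in an open hemisphere, $\mathrm{area}(P) \le \mathrm{perim}(P)$, with equality exactly when $P$ is a hemisphere. Granting this, $\angl(v,t) + \turn(v,t) = \mathrm{area}(P) + 2\pi - \mathrm{perim}(P) \le 2\pi$, and the equality case of the inequality ($P$ a hemisphere) is precisely the facet-interior case already noted, while a genuine vertex gives a proper polygon and hence strict inequality. I would prove $\mathrm{area}(P) \le \mathrm{perim}(P)$ in the spirit of the figure, by covering $P$ with the right triangles erected on its edges towards the \emph{inward} poles, each of area equal to its base, so that $\mathrm{area}(P) \le \sum(\text{edge lengths}) = \mathrm{perim}(P)$. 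I expect the main obstacle to be exactly this covering near the corners of $P$: a point whose nearest boundary point is a vertex of $P$ falls outside the triangles built on the two incident edges, and one must verify it is still caught by the triangle of some other edge (equivalently, that the inward pole-triangles leave nothing uncovered). If that bookkeeping turns out to be awkward, I would instead integrate over the inner-parallel curves of $P$: on the unit sphere these satisfy $L' = -\tau$ and $\tau' = L$ with $\tau = 2\pi - \mathrm{area}$, which integrates to $\mathrm{area}(P) \le \mathrm{perim}(P)$ with equality precisely when the inradius equals $\pi/2$, that is, when $P$ is a hemisphere.
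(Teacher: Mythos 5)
Your proposal is correct and follows essentially the same route as the paper: both decompose the sphere of directions into the polar-dual spherical polygons $\Angl(v,t)$ and $\Turn(v,t)$ plus the right triangles between them, and both rest on the observation that these triangles, folded onto the adjacent polygon, cover it (your worry about points nearest to a corner is unfounded for a convex polygon, since the nearest boundary point to an interior point always lies in the relative interior of an edge). Your Gauss--Bonnet packaging, giving each triangle area equal to its base and recasting the claim as $\area(P)\leq\mathrm{perim}(P)$ via $\turn(v,t)=2\pi-\mathrm{perim}(P)$, is a pleasant quantitative refinement but not a different argument.
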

\begin{proof}
Let $e_1,...,e_m$ be the edges incident to $v$. For any edge $e_i$, let $h(e_i)$ be the plane through $v$ that is orthogonal to $e_i$, and let $H(e_i)$ be the halfspace bounded by $h(e_i)$ that does not contain~$e_i$. Observe that the vectors that point from $v$ away from $t$ at an angle of at least $\pi/2$ are exactly those that point into the intersection of the halfspaces $H(e_1),...,H(e_m)$. Thus, on the unit sphere of directions, we find that $\Angl(v,t)$ is represented as a spherical polygon whose vertices correspond to the directions of the edges $e_1,...,e_m$ with respect to $v$, and $\Turn(v,t)$ is represented as a spherical polygon whose edges are segments of $h(e_1),...,h(e_m)$, see Figure~\ref{fig:3d-turn}(b). The relation between $e_i$ and $h(e_i)$ as explained above implies that the remaining part of the unit sphere, between $\Angl(v,t)$ and $\Turn(v,t)$, can be triangulated by arcs of length $\pi/2$ that make right angles with the adjacent edges of $\Angl(v,t)$ and $\Turn(v,t)$.

Now consider $\Turn(v,t)$ and the adjacent triangles, see Figure~\ref{fig:3d-turn}(c). Since $\Turn(v,t)$ is smaller than a hemisphere, the triangles would cover $\Turn(v,t)$ completely if folded onto it, see Figure~\ref{fig:3d-turn}(d). Similarly, the triangles adjacent to $\Angl(v,t)$ would cover $\Angl(v,t)$ completely. Hence the total area of the triangles is at least $\angl(v,t) + \turn(v,t)$, so $\angl(v,t) + \turn(v,t)$ is at most $2\pi$, the area of half a unit sphere.

Equality holds only if the triangles adjacent to $\Turn(v,t)$ cover $\Turn(v,t)$ \emph{exactly}, and those adjacent to $\Angl(v,t)$ cover $\Angl(v,t)$ \emph{exactly}. This only happens when both $\Angl(v,t)$ and $\Turn(v,t)$ are either empty, or a full hemisphere. $\Angl(v,t)$ is never empty, but it is possible that $\angl(v,t) = 2\pi$ and $\turn(v,t) = 0$, namely if $v$ is a point in the interior of a facet of $t$.
\end{proof}

Consider a subdivision of a bounded convex polyhedral unit tile $\unittile$ into a set $C$ of convex polyhedral tiles, whose shapes come from a fixed set $S$. Let $V$ be the set of vertices of the subdivision, and let $V' \subset V$ be the vertices that lie in the interior of $\unittile$. We can now prove:
\begin{lemma}\label{le:3dvertextileincidences}
$|V'| + (1+\alpha)|C| < \frac12 \sum_{v \in V} \tdegr(v)$, where $\tdegr(v)$ is the number of tiles with $v$ on their boundaries, and $\alpha > 0$ is a constant that depends on $S$.
\end{lemma}
\begin{proof}
For ease of notation, define $\angl(v,t) = \turn(v,t) = 0$ when $v$ does not lie on the boundary of $t$. Observe that for every vertex $v \in V'$ the sets $\Angl(v,t)$ of all incident tiles $t$ together cover the full sphere of directions, so we have $\sum_{t \in C} \angl(v,t) = 4\pi$. Also, for every tile $t \in C$ the sets $\Turn(v,t)$ of all vertices $v$ on $t$ together cover the full sphere of directions, so we have $\sum_{v \in V} \turn(v,t) = 4\pi$. With these observations we get:
\begin{eqnarray*}
& |V'| + |C|
= \\
& \frac1{4\pi} \sum_{v \in V'} \sum_{t \in C} \angl(v,t) + \frac1{4\pi} \sum_{t \in C}  \sum_{v \in V} \turn(v,t) < \\
& \frac1{4\pi} \sum_{v \in V} \sum_{t \in C} \left(\angl(v,t) + \turn(v,t)\right).
\end{eqnarray*}
Since every tile $t \in C$ is a bounded convex polyhedron with a shape from a fixed set $S$, it must have at least one vertex $v$ with $\turn(v,t) > 0$. Hence, by Lemma~\ref{le:angleplusturn}, for this vertex $v$ we have $\angl(v,t) + \turn(v,t) \leq 2\pi - \beta$, where $\beta > 0$ is a constant that depends on $S$. For other vertices~$v$ on the boundary of $t$ we have $\angl(v,t) + \turn(v,t) \leq 2\pi$ (by Lemma~\ref{le:angleplusturn}), and for vertices $v$ that are not on the boundary of $t$ we have $\angl(v,t) + \turn(v,t) = 0$ (by definition). Therefore the above equation gives:\[
|V'| + |C| <
\left(\frac1{4\pi} \sum_{v \in V} \tdegr(v)\cdot 2\pi\right) - \frac\beta{4\pi}|C| =
\left(\frac12 \sum_{v \in V} \tdegr(v)\right) - \frac\beta{4\pi}|C|.
\]
Setting $\alpha = \beta/4\pi$ proves the lemma.
\end{proof}

We are now ready to prove the lower bound we are after:

\begin{theorem}\label{th:3dsfclb}
Any space-filling curve based on a recursive tiling with convex tiles in three dimensions has Arrwwid number at least four.
\end{theorem}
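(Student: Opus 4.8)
The plan is to mirror the structure of the two-dimensional lower bound in Theorem~\ref{th:2dsfclb}, but to replace the Euler-formula ingredient (which breaks in three dimensions because the number of two-dimensional facets enters as an extra unknown) with the counting bound we have just established in Lemma~\ref{le:3dvertextileincidences}. That lemma gives us the desired relation $\mathcal{A}$ purely between $|V'|$, $|C|$, and the total number of vertex-tile incidences $\sum_{v \in V}\tdegr(v)$, with a strict inequality carrying a positive slack $\alpha|C|$. So Lemma~\ref{le:3dvertextileincidences} supplies ingredient (1), and the bridge argument supplies ingredient (2).

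First I would set up the combinatorial machinery exactly as in the two-dimensional proof. Given a space-filling curve based on a recursive tiling with convex tiles, I subdivide \unittile recursively to some depth to obtain a finite set $C$ of convex polyhedral tiles, let $V$ be the vertices of the subdivision and $V' \subset V$ those in the interior of \unittile, and define $\doors(v)$ as the number of fragments having $v$ as an entry point plus the number having $v$ as an exit point. The key claim, proved by the same bridge argument as in Theorem~\ref{th:2dsfclb}, is that if the Arrwwid number were less than four, then $\tdegr(v) - \frac12\doors(v) \leq 3$ for every $v \in V'$: a ball centred at $v$ meeting $\tdegr(v)$ tiles must be coverable by three fragments, and the number of degenerate bridges at $v$ is at most $\frac12\doors(v)$, so $\tdegr(v)-1 - \frac12\doors(v) \leq 2$ non-degenerate bridges must be absorbed, which fails once $\tdegr(v) - \frac12\doors(v) > 3$. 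Summing this claim over $V'$ and using $\tdegr(v) \geq \tdegr(v)$ for the boundary vertices gives $\sum_{v\in V'}\tdegr(v) \leq 3|V'| + \frac12\sum_{v\in V'}\doors(v)$, and since each fragment contributes one entry and one exit point we have $\sum_{v}\doors(v) \leq 2|C|$.

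The finish is to collide this with Lemma~\ref{le:3dvertextileincidences}. Rearranging the summed claim yields $\frac12\sum_{v\in V}\tdegr(v) \leq 3|V'| + |C|$ (up to boundary terms that only help), while Lemma~\ref{le:3dvertextileincidences} gives $|V'| + (1+\alpha)|C| < \frac12\sum_{v\in V}\tdegr(v)$. Chaining these produces $|V'| + (1+\alpha)|C| < 3|V'| + |C|$, i.e.\ $\alpha|C| < 2|V'|$. As in the two-dimensional case, I now drive a contradiction by refining a single interior tile $T$ (one not touching $\partial\unittile$, which exists for deep enough recursion) repeatedly: each subdivision increases $|C|$ geometrically while $V'$ grows only proportionally to the new tiles and vertices, so by choosing the recursion deep enough the linear-in-$|C|$ slack $\alpha|C|$ eventually dominates $2|V'|$, contradicting the inequality.

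The main obstacle is the same place where the two-dimensional proof needed its topological hypothesis: verifying that the bridge argument goes through verbatim for convex polyhedral tiles, in particular that a ball of radius small relative to the smallest non-degenerate bridge cannot be covered by three fragments unless one such bridge is covered. This is essentially a repetition of the reasoning in Theorem~\ref{th:2dsfclb} with balls in place of disks, so I expect it to be routine to adapt rather than genuinely difficult. The more delicate bookkeeping is in matching up the boundary vertices and edges so that the summed inequality and Lemma~\ref{le:3dvertextileincidences} can be combined cleanly; but because Lemma~\ref{le:3dvertextileincidences} already carries a \emph{strict} inequality with a constant multiple of $|C|$ to spare, I do not need to track the boundary terms tightly---any crude bound suffices, since the positive slack $\alpha|C|$ is precisely what defeats the quadratically-smaller (in the depth) vertex count once we refine deeply enough.
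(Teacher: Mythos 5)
Your overall architecture matches the paper's: the bridge argument gives $\tdegr(v) - \frac12\doors(v) \leq 3$ for interior vertices, Lemma~\ref{le:3dvertextileincidences} replaces Euler's formula, and the endgame refines an interior tile to force a contradiction. But the algebra in your final combination has a genuine gap that destroys the contradiction. You chain the two inequalities into $\alpha|C| < 2|V'|$ and then claim that refining an interior tile makes $\alpha|C|$ ``eventually dominate'' $2|V'|$. It does not: each subdivision step adds a bounded number of new tiles \emph{and} a comparable (typically larger) number of new interior vertices, so $|V'| = \Theta(|C|)$ throughout the refinement, and for the small constant $\alpha$ coming from the lemma the inequality $\alpha|C| < 2|V'|$ simply remains true forever. (Your aside about a ``quadratically-smaller vertex count'' is the symptom: in a three-dimensional tiling refined to depth $k$ the interior vertex count and the tile count grow at the same exponential rate.) An inequality whose right-hand side still contains $|V'|$ cannot be defeated by this endgame.

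The missing ingredient is a second use of convexity: every interior vertex of a tiling by convex polyhedra has $\tdegr(v) \geq 4$, hence $\frac12\sum_{v\in V'}\tdegr(v) \geq 2|V'|$. The paper splits $\sum_{v\in V'}\tdegr(v)$ into two halves, bounds one half below by $2|V'|$ using this fact and the other half below via Lemma~\ref{le:3dvertextileincidences} (after moving the boundary contribution $\frac12\sum_{v\in V''}\tdegr(v)$ to the other side), obtaining $\sum_{v\in V'}\tdegr(v) > 3|V'| + (1+\alpha)|C| - \frac12\sum_{v\in V''}\tdegr(v)$. Against the upper bound $\sum_{v\in V'}\tdegr(v) \leq 3|V'| + |C|$ from the bridge argument, the $3|V'|$ terms now cancel \emph{exactly}, leaving $\alpha|C| < \frac12\sum_{v\in V''}\tdegr(v)$ --- an inequality whose right-hand side involves only boundary vertices and therefore genuinely stays fixed while the interior tile $T$ is subdivided and $|C|$ grows without bound. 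That is the inequality the refinement endgame can actually contradict; yours is not. The bridge-argument half of your write-up is fine and matches the paper.
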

\begin{proof}
Consider a subdivision ${\cal T}$ of the unit tile $\unittile$ into a set $C$ of convex polyhedral tiles, obtained by applying the recursive rules that define the tiling that underlies the space-filling curve. Let $V$ be the set of vertices of ${\cal T}$, and let $V' \subset V$ be the vertices that are not on the boundary of $\unittile$. Let $\tdegr(v)$ be the number of tiles in $C$ that are incident on the vertex $v$. Let $\doors(v)$ be the number of tiles in $C$ that have $v$ as their entry point plus the number of tiles in $C$ that have $v$ as their exit point.

If the space-filling curve has Arrwwid number less than four, we must have $\tdegr(v) - \frac12 \doors(v) \leq 3$ for every vertex $v \in V'$ (again, the proof of this claim is a straightforward adaptation of the proof of Theorem~\ref{th:2dsfclb}). Summing over all vertices in $V'$ gives:\[
\sum_{v \in V'} \tdegr(v) \leq 3|V'| + \frac12 \sum_{v \in V'} \doors(v).
\]
Let $V''$ be the set of vertices of ${\cal T}$ on the boundary of \unittile, that is, $V'' = V \setminus V'$. Now Lemma~\ref{le:3dvertextileincidences} gives $|V'| + (1+\alpha)|C| - \frac12 \sum_{v \in V''} \tdegr(v) < \frac12 \sum_{v \in V'} \tdegr(v)$ for some fixed constant $\alpha$. Because all tiles are convex polyhedra, all vertices $v \in V'$ have $\tdegr(v) \geq 4$, and therefore $2|V'| \leq \frac12 \sum_{v \in V'} \tdegr(v)$. Note that the total number of entry and exit points at vertices is at most $2|C|$. Therefore the above equation gives us:\[
3|V'| + (1+\alpha)|C| - \frac12 \sum_{v \in V''} \tdegr(v) < \sum_{v \in V'} \tdegr(v) \leq 3|V'| + \frac12 \sum_{v \in V'} \doors(v) \leq 3|V'| + |C|.
\]
Therefore we must have:\begin{equation}\label{eq:anysfc}
\alpha|C| < \frac12 \sum_{v \in V''} \tdegr(v).
\end{equation}
From here we follow the same approach as in the proof of Theorem~\ref{th:2dsfclb}: we construct ${\cal T}$ by subdividing the unit tile \unittile into tiles recursively until we get a tile $T$ that does not touch the boundary of \unittile. Then we subdivide $T$ further, keeping the vertices $V''$ and their incident tiles fixed, while continuing to increase $|C|$, eventually getting $\alpha|C| \geq \frac12 \sum_{v \in V''} \tdegr(v)$. This contradicts Equation~\ref{eq:anysfc}, so the space-filling curve cannot have Arrwwid number less than four.
\end{proof}

\section{More than three dimensions}
Many of the results for three dimensions, but not all, generalise to higher dimensions in a straightforward way. These results are given here without writing out the proofs:

\begin{theorem}
Each recursive tiling of a $d$-dimensional region \unittile has Arrwwid number at least $d+1$.
\end{theorem}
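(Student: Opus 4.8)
The plan is to prove the stronger, purely combinatorial statement that every recursive tiling of a $d$-dimensional region has a point in the interior of \unittile where at least $d+1$ pairwise interior-disjoint tiles meet; the theorem then follows from the observation that the Arrwwid number of a recursive tiling is at least its vertex degree (the argument used for Observation~\ref{ob:squaretiling} and Theorem~\ref{th:2dtilinglb}: a sufficiently small ball centered on such a point must intersect all $d+1$ tiles). I would prove the combinatorial statement by induction on $d$. The base case $d=1$ is immediate: any recursive tiling of an interval subdivides it into at least two subintervals, so it has interior points where two tiles meet, giving interior degree $2 = d+1$.

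For the inductive step, consider a $d$-dimensional recursive tiling and look at the first application of a rule, which splits \unittile into at least two tiles. First I would pick two of these first-level tiles, $A$ and $B$, that are adjacent along a relatively-open $(d-1)$-dimensional piece $F$ of their common boundary lying in the interior of \unittile. The key step is to observe that the recursive subdivision of $A$ induces on $F$ a recursive tiling of a $(d-1)$-dimensional region: as ``tiles'' on $F$ I take the \emph{footprints} $A' \cap F$ of the subtiles $A'$ of $A$ that reach $F$. Because tile diameters shrink geometrically with the recursion depth while $F$ keeps a fixed positive $(d-1)$-dimensional extent, these footprints refine indefinitely, and because the original tiling uses finitely many rules and only similarity-preserving transformations, the footprint subdivision inherits a finite rule set and a similarity structure of its own. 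Applying the induction hypothesis to this $(d-1)$-dimensional recursive tiling yields a point $p$ in the relative interior of $F$ where at least $d$ footprints meet; these footprints are traces of $d$ distinct, hence pairwise interior-disjoint, subtiles $A_1,\dots,A_d$ of $A$, all having $p$ on their boundary. Since $p$ lies in the relative interior of $F \subseteq \partial B$ and $B$ is interior-disjoint from every $A_i \subseteq A$, the tiles $A_1,\dots,A_d,B$ are $d+1$ pairwise interior-disjoint tiles meeting at the interior point $p$, which completes the induction.

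The routine parts are the base case and the final reduction to the Arrwwid number. The main obstacle is making the step ``the induced tiling on $F$ is again a recursive tiling'' rigorous in full generality. For polyhedral or rectangular tiles the facet $F$ is flat and the footprints are ordinary lower-dimensional tiles, so checking the hypotheses (a finite number of indefinitely refining rules, similarity-preserving transformations, a finite starting region) is straightforward. For fractal tiles, however, $F$ need not be flat and the footprints need not be polytopes, so one cannot simply restrict to a hyperplane; here I expect one must argue topologically in the spirit of the proof of Theorem~\ref{th:2dtilinglb}, showing level by level that the shrinking subtiles of $A$ are forced to create new junctions within each boundary stratum, thereby raising the local degree by one per dimension. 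Turning that topological forcing into a clean statement that the boundary strata carry well-defined lower-dimensional recursive tilings is the part I would expect to demand the most care.
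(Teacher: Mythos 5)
The paper states this theorem without proof (its section says explicitly that the results ``are given here without writing out the proofs''), so the only thing to measure your argument against is the author's two-dimensional proof of Theorem~\ref{th:2dtilinglb}. Its strategy --- exhibit an interior point where $d+1$ interior-disjoint tiles meet, then invoke the fact that the Arrwwid number is at least the vertex degree --- is exactly the strategy you adopt, and that reduction is sound. The gap is in your inductive step, and it is the one you flag yourself, but it is more serious than ``demands care'': the induction hypothesis you want to apply to the footprint structure on $F$ is a statement about \emph{recursive tilings} in the paper's sense (finitely many rules, similarity-preserving transformations, indefinitely recurring shapes), and the footprints $A' \cap F$ do not form such an object even in the rectangular case --- the cross-sections of the subtiles of $A$ by the hyperplane of $F$ are not finitely many shapes recurring under similarity, and they change at every level of recursion, so the induction hypothesis simply does not apply to them. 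On top of that, for the fractal tilings that are the paper's main examples in $d \geq 3$ (the recursified shifted-cube and truncated-octahedron tilings), the existence of a relatively open $(d-1)$-manifold piece $F$ in the common boundary of two first-level tiles is itself unproven, and tiles may even have disconnected interiors (compare Figure~\ref{fig:disconnected}), so the ``pick a facet and induct on it'' step does not get off the ground in the generality the theorem claims.

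The fix is to drop the recursive structure from the inductive statement altogether --- at which point no induction on dimension is needed. What you actually need is the purely topological fact behind Lebesgue's covering theorem: every finite closed cover of a $d$-dimensional cube by sets of sufficiently small diameter has a point lying in at least $d+1$ of the sets. Place a small $d$-cube $K$ in the interior of \unittile, expand the recursion until every tile has diameter much smaller than the side length of $K$ (possible because tile diameters shrink geometrically with the recursion depth), and apply the covering theorem to the tiles meeting $K$; this produces the degree-$(d+1)$ interior point directly, with no facet $F$ and no induced lower-dimensional tiling. Your final reduction from such a point to the Arrwwid bound --- a tiny ball centred there meets the interiors of $d+1$ pairwise interior-disjoint tiles, so any cover by at most $d$ tiles must use a tile containing two of them, whose volume is bounded below independently of the ball --- then goes through verbatim.
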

\begin{theorem}
There is a recursified $d$-dimensional tiling with Arrwwid number $d+1$.
\end{theorem}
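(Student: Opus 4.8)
The plan is to follow exactly the strategy used for the two three-dimensional constructions in Section~\ref{sec:3drecursified}, lifting the argument of Theorem~\ref{th:gosper} from two to $d$ dimensions. Since the preceding theorem (the $d$-dimensional lower bound) already shows that every recursive tiling in $d$ dimensions has Arrwwid number at least $d+1$, it suffices to exhibit one recursified tiling whose Arrwwid number is at most $d+1$; combining the two bounds then gives equality.

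First I would fix a coarse, non-recursive tiling $\mathcal{V}$ of $\Reals^d$ with vertex degree exactly $d+1$. The natural choice is the Voronoi tiling of a lattice $\Lambda$ in general position (one whose Delaunay triangulation is simplicial), generalising the truncated-octahedron tiling of Section~\ref{sec:3drecursified}: each Voronoi vertex is then the circumcentre of a Delaunay simplex with $d+1$ vertices, hence shared by exactly $d+1$ cells. All cells of $\mathcal{V}$ are translates of one another, which matters for obtaining a single recursion rule. Because $\Lambda \subseteq \tfrac1n\Lambda$ for every integer $n$, scaling $\mathcal{V}$ by $1/n$ yields a fine tiling that is $\Lambda$-periodic with $n^d$ cells per coarse cell. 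I would then recursify as in Section~\ref{sec:3drecursified}: assign each fine cell to the coarse cell with which it has the largest overlap (breaking ties by a $\Lambda$-periodic rule), replace each coarse cell by the union of its assigned fine cells, and iterate. By periodicity every coarse cell receives a congruent configuration of $n^d$ fine cells, so the limit is a simple recursive tiling with linear scale factor $n$.

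Next I would bound the Arrwwid number from above exactly as in Theorem~\ref{th:gosper}. Let $s$ be the radius of the smallest ball meeting more than $d+1$ cells of $\mathcal{V}$; since the vertex degree is $d+1$ we have $s>0$. A single recursification step moves each boundary by at most the diameter of a fine cell: every fine cell assigned to a coarse cell $T$ overlaps $T$, so for $\delta = O(\mathrm{diam}/n)$ the recursified tile satisfies $\tilde T \subseteq T \oplus B_\delta$. Recursing with scale factor $n$, these displacements obey $d_i \le d_1 + d_{i-1}/n$ and converge to some $\delta_\infty = O(\mathrm{diam}/(n-1))$, with $\tilde T \subseteq T \oplus B_{\delta_\infty}$ for each limit tile $\tilde T$ and its coarse cell $T$. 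Setting $s' := s - \delta_\infty$, the key inequality is $s'>0$: if a ball of radius $r$ meets more than $d+1$ tiles of the recursified tiling, then a concentric ball of radius $r+\delta_\infty$ meets more than $d+1$ coarse cells, forcing $r+\delta_\infty \ge s$. Hence every ball of radius less than $s'$ meets at most $d+1$ tiles. Finally, for an arbitrary ball $Q$ of radius $r$ I would pass to the level of recursion on which tiles are scaled by $n^{-k}$ with $n^{-k} s' > r \ge n^{-(k+1)} s'$; then $Q$ meets at most $d+1$ tiles, each of volume $O((n^{-k}\mathrm{diam})^d) = O(r^d) = O(\vol(Q))$, giving Arrwwid number at most $d+1$.

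The hard part, and the only genuinely $d$-dependent step, is guaranteeing $\delta_\infty < s$. Both $s$ and the per-step displacement depend on the geometry of $\mathcal{V}$, whose constants degrade with $d$, and the failed two-dimensional attempts (Figures~\ref{fig:rhombus} and~\ref{fig:disconnected}) show that a poor approximation can raise the vertex degree and thereby ruin the bound. Fortunately $\delta_\infty \to 0$ as $n \to \infty$ while $s$ remains a fixed positive constant for each fixed $d$, so it suffices to choose the scaling factor $n = n(d)$ large enough; verifying this quantitative gap, together with confirming that a lattice in general position indeed has vertex degree exactly $d+1$ in every dimension, is where the real work lies.
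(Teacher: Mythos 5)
Your argument is correct and runs on the same engine as the paper's --- recursify a periodic tiling of vertex degree $d+1$ and control the accumulated boundary displacement exactly as in Theorem~\ref{th:gosper} --- but it generalises a different base tiling. The paper's $d$-dimensional construction extends the \emph{shifted hypercube} tiling of Section~\ref{sec:3dshiftedcubes}, whereas you extend the paper's alternative three-dimensional construction (the truncated-octahedron tiling) by taking the Voronoi tiling of a lattice whose Delaunay subdivision is simplicial, e.g.\ $A_d^*$ with its permutohedral cells. Your route makes the vertex-degree claim nearly automatic: every Voronoi vertex is the circumcentre of a Delaunay simplex and is therefore incident to exactly $d+1$ cells, a property that holds for the explicit lattice $A_d^*$ in every dimension (or for a generic lattice, since cosphericity of $d+2$ lattice points is a countable union of codimension-one conditions). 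Your other flagged worry, $\delta_\infty < s$, is settled by your own observation: $s>0$ follows from periodicity and local finiteness of the coarse tiling, while $\delta_\infty = O(\mathrm{diam}/(n-1))$ can be driven below $s$ by choosing the scale factor $n=n(d)$ large enough; this replaces the numerical verification the paper performs for its fixed factor $1/5$ and cleanly sidesteps the failure modes of Figures~\ref{fig:rhombus} and~\ref{fig:disconnected}, because the quantitative bound $s'=s-\delta_\infty>0$ does not even require the limit tiles to be connected. What the shifted-hypercube route buys instead is computational convenience (tiles addressable by digit analysis, as suggested in Section~\ref{sec:conclusions}), at the price of having to specify the shifts and check combinatorially that only $d+1$ hypercubes ever meet at a point. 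In short: same technique, a different but valid choice of coarse tiling, and your version makes explicit the two points the paper leaves unwritten.
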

(The tiling is constructed from a tiling of shifted hypercubes as in Section~\ref{sec:3dshiftedcubes}.)
\begin{theorem}
Any uniform hypercube tiling has Arrwwid number $2^d$, and any space-filling curve based on it has Arrwwid number $2^d - 1$ or $2^d$.
\end{theorem}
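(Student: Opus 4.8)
The statement splits into three claims, each the $d$-dimensional lift of a result already established for $d=2,3$: (a) the tiling has Arrwwid number exactly $2^d$; (b) any curve based on it has Arrwwid number at least $2^d-1$; and (c) any such curve has Arrwwid number at most $2^d$. The plan is to prove these in turn.

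For (a), I would first observe that a uniform hypercube tiling subdivides each cube into $m^d$ equal subcubes (scale factor $1/m$), and that expanding the recursion to any fixed depth yields a fully regular axis-parallel grid, since the symmetry group of the cube leaves the set of subcubes invariant regardless of how the rule is oriented. Hence the vertex degree is exactly $2^d$, and the lower bound $2^d$ follows from the general observation that the Arrwwid number is at least the vertex degree. For the matching upper bound I would generalise the packing argument of Observation~\ref{ob:squaretiling}: given a ball $Q$ of radius $r$, pick the recursion level whose cubes have side length $s \in (2r, 2mr]$; since the grid hyperplanes in each of the $d$ coordinate directions are then spaced more than $2r$ apart, $Q$ crosses at most one hyperplane per direction and so lies in a block of at most $2^d$ cells, of total volume at most $2^d (2mr)^d$, which is a constant times $\vol(Q)$.

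For (b), I would adapt the proof of Theorem~\ref{th:cubesfc} essentially verbatim, with $2^d$ in place of $8$. Taking $C$ to be the tiles at a deep recursion level and $V'$ the interior vertices, the bridge argument (the higher-dimensional version of Theorem~\ref{th:2dsfclb}, whose adaptation the paper already invokes) gives: if the curve's Arrwwid number were below $2^d-1$, then $\tdegr(v) - \frac12 \doors(v) \le 2^d - 2$ for every $v \in V'$. Since $\tdegr(v) = 2^d$ on the regular grid, this forces $\frac12 \doors(v) \ge 2$; summing over $V'$ and using $\sum_v \doors(v) \le 2|C|$ yields $|C| \ge 2|V'|$. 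But a depth giving $N^d$ cells has $|V'| = (N-1)^d$, and $(N-1)^d > \frac12 N^d$ once $N$ is large, a contradiction. Claim (c) is immediate, since the Arrwwid number of a curve never exceeds that of its underlying tiling, which by (a) equals $2^d$. Combining (b) and (c) pins the curve's Arrwwid number to $\{2^d-1, 2^d\}$.

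The only genuinely delicate point is confirming that the bridge-covering count generalises correctly: covering the tiny ball at $v$ with $a-1$ fragments means $a-1$ parameter intervals cover the $\tdegr(v)$ tiles meeting at $v$, breaking them into at most $a-1$ consecutive runs and so leaving at most $a-2$ of the $\tdegr(v)-1$ bridges uncovered; hence a non-degenerate bridge is necessarily covered whenever $\tdegr(v) - \frac12 \doors(v) \ge a$, which is the inequality driving the contradiction. Everything else is routine bookkeeping with the regular-grid vertex count. I do not expect to be able to close the remaining gap --- the curve could a priori realise either $2^d-1$ or $2^d$ --- which is exactly why the statement leaves both possibilities open.
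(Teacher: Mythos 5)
Your proof is correct and is exactly the route the paper intends: the statement appears in the section where proofs are omitted as ``straightforward generalisations'' of the $d=2,3$ cases, and you generalise precisely the right ingredients --- the grid-hyperplane packing argument of Observation~\ref{ob:squaretiling} plus the vertex-degree lower bound for the tiling, and the $\tdegr(v)-\frac12\doors(v)$ counting argument of Theorem~\ref{th:cubesfc} with $(N-1)^d>\frac12 N^d$ for the curve lower bound, with the trivial tiling upper bound closing the range to $\{2^d-1,2^d\}$. No gaps; your closing check of the bridge-covering count matches the paper's adaptation of Theorem~\ref{th:2dsfclb}.
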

\begin{theorem}
There is a rectangular recursive tiling in $d$ dimensions with Arrwwid number $\frac34 \cdot 2^d$.
\end{theorem}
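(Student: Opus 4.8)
The plan is to generalise the ``lifted Daun tiling'' of Section~\ref{sec:3dliftedDaun}: where the three-dimensional construction stacks four copies of a planar Daun tiling, I would instead take a full Cartesian product of the Daun tiling with regular grids in the extra dimensions. Concretely, let the unit tile be an axis-parallel box whose cross-section in the first two coordinates is a $3:2$ rectangle (the shape of the Daun unit tile) and whose extent in each of the remaining $d-2$ coordinates is arbitrary. To subdivide it, I would apply the Daun rule in the first two coordinates (scaling by $1/4$ and producing $16$ similar rectangles, rotated only about axes parallel to the extra coordinates) and split each of the other $d-2$ coordinates into four equal slabs. Since every coordinate is scaled by $1/4$, each of the $16 \cdot 4^{d-2} = 2^{2d}$ subtiles is a copy of the unit tile scaled by $1/4$; this is therefore a genuine uniform rectangular recursive tiling with a single rule. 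I would call it the \emph{Daun prism}, and claim its vertex degree --- and hence, as shown below, its Arrwwid number --- equals $3 \cdot 2^{d-2} = \frac34 \cdot 2^d$.

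The first key step is to compute the vertex degree. Expanded to any fixed depth $k$, the Daun prism is exactly the Cartesian product of the depth-$k$ Daun tiling (in coordinates $1,2$) with a regular grid of $4^k$ equal intervals in each remaining coordinate. I would use the fact that vertex degree is multiplicative under Cartesian products: a product tile meets a point $p=(p_{12},p_3,\dots,p_d)$ iff each factor tile meets the corresponding $p_i$, and two product tiles are interior-disjoint iff they are interior-disjoint in some factor, so the maximum number of interior-disjoint product tiles meeting $p$ is the product over the factors of the per-factor degrees. By Theorem~\ref{th:Dauntiling} the Daun factor has degree at most $3$, and each grid factor has degree at most $2$; hence no point has degree exceeding $3 \cdot 2^{d-2}$. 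This bound is attained by placing a degree-$3$ Daun vertex (which exists, since the Daun tiling has vertex degree exactly three) at a position that is an interior grid point in every one of the $d-2$ extra coordinates. Thus the vertex degree is exactly $3 \cdot 2^{d-2}$.

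The second step is to show that the Arrwwid number equals the vertex degree. The lower bound is immediate from the earlier observation that the Arrwwid number of a recursive tiling is never below its vertex degree. For the upper bound I would mimic the covering argument of Theorem~\ref{th:Dauntiling}: all tiles of the depth-$k$ prism sit on a regular fine grid (the product of the Daun fine grid with the slab grids) whose spacing in every coordinate is the same and shrinks by a factor $1/4$ per level, so there is a level at which this spacing lies in $(2r, 8r]$ for a given ball $Q$ of radius $r$. At that level $Q$, being contained in a cube of side $2r$, is crossed by at most one grid hyperplane in each of the $d$ coordinate directions; these at most $d$ hyperplanes meet in a single fine-grid vertex $p$, and $Q$ meets only fine cells incident to $p$, hence only tiles incident to $p$. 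Their number is at most $\deg(p) \le 3 \cdot 2^{d-2}$, and each such tile is a box with all side lengths $O(r)$, so the total volume of the covering tiles is $O(r^d) = O(\vol(Q))$. Together with the lower bound this gives Arrwwid number exactly $\frac34 \cdot 2^d$.

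I expect the main obstacle to be organisational rather than mathematical: everything hinges on (i) the multiplicativity of vertex degree under Cartesian products and (ii) the reduction of the ball-covering argument to a single incident vertex, both of which I must state and justify cleanly so that Theorem~\ref{th:Dauntiling} can be invoked as a black box for the ``$3$'' in the first two coordinates. A secondary point requiring care is confirming that the product really is a legitimate recursive tiling with finitely many rules --- in particular that the Daun rotations act only within the $(1,2)$-plane, so that every tile stays an axis-parallel box and the single-rule, scale-$1/4$ self-similarity is preserved; once this is checked, the construction sits squarely within the framework of the paper and no genuinely new estimate beyond those in Theorem~\ref{th:Dauntiling} is needed.
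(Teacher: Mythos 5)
Your proposal is correct and follows essentially the same route as the paper, which proves this theorem by the very construction you describe: the parenthetical remark after the theorem identifies the tiling as the Daun tiling lifted into $d$ dimensions, generalising the three-dimensional lifted Daun tiling of Section~\ref{sec:3dliftedDaun} (itself the product of the planar Daun rule with four equal slabs), with the Arrwwid number obtained from the vertex degree $3\cdot 2^{d-2}$ exactly as in your two steps. The only nit is that if the extents in the extra coordinates are truly arbitrary the fine-grid spacings differ by constant factors across coordinates, so you should either normalise them or phrase the covering argument with per-coordinate spacings all in $(2r, O(r)]$; this changes nothing of substance.
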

(A Daun tiling lifted into $d$ dimensions has Arrwwid number $3 \cdot 2^{d-2}$.)

\section{Discussion and conclusions}\label{sec:conclusions}

\paragraph{Arrwwid-optimal curves and tilings.} This paper shows that in two dimensions the lowest possible Arrwwid number is achieved by certain uniform square space-filling curves (such as the \ARRWW curve, the Kochel curve, and Dekking's curve) and by a certain rectangular tiling. However, for $d \geq 3$, no uniform cube space-filling curve can have an Arrwwid number as low as the best known rectangular space-filling curve, and the best known rectangular space-filling curve does not have an Arrwwid number as low as the best known space-filling curve on a fractal tiling. A three-dimensional rectangular recursive tiling with Arrwwid number four or five might exist, but so far, it was not found.

Our lower bounds for space-filling curves only apply to curves based on tilings with simply connected tiles (in two dimensions) or with convex tiles (in three dimensions). Lifting the restrictions on the shapes of the tiles in these bounds remains a topic for further research and I am still pursuing some new ideas on the matter. It may be interesting to investigate the relation between the Arrwwid number of a space-filling curve and its maximum multiplicity (the maximum number of times any point is visited by the curve). Although the Arrwwid number and the multiplicity of a curve may differ (Lebesgue's curve has multiplicity five and Arrwwid number four), it is possible that ideas related to the multiplicity of space-filling curves~\cite{Mandelbrot1983} prove useful in deriving bounds on Arrwwid numbers, at least for certain classes of curves.

\begin{figure}
\centering
\includegraphics[height=7.25cm]{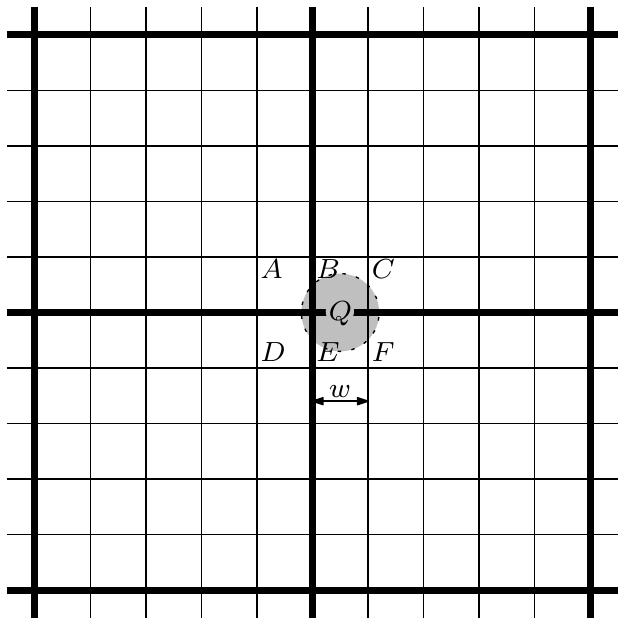}
\caption{A query range for which straightforward proofs lead to overly pessimistic bounds on the cover ratio of Dekking's curve.}
\label{fig:constants}
\end{figure}

\paragraph{Arrwwid numbers versus cover ratios.}
In principle all proofs of Arrwwid numbers come with an upper bound on the cover ratio $c$. Looking at the proofs in this paper one might get the impression that low Arrwwid numbers tend to come at the expense of extremely high cover ratios. This may be a shortcoming of the proofs rather than the space-filling curves. For example, consider Dekking's curve, which is based on recursively subdividing squares into 25 squares. Figure~\ref{fig:constants} shows a disk $Q$ that sticks out just a little bit from the tiles $B$ and $E$ with width $w$, also intersecting $A$, $C$, $D$ and $F$. The approach from the proof in Section~\ref{sec:squarewithA3} would be to cover $Q$ with the parents of these tiles. This results in a cover ratio bound of $100w^2 / \frac\pi4 w^2 = 400/\pi$. However, taking the scanning order into account (see Figure~\ref{fig:dekking}), we see that Dekking's curve has the property that if $B$ and $C$ are not adjacent in the order, then $E$ and $F$ must be adjacent, and vice versa. Furthermore, a corner tile and its neighbour are never far apart in Dekking's scanning order: there are at most three tiles between them. Note that among $A$, $B$, $D$ and $E$ there must also be pair that is consecutive in the scanning order. Therefore $Q$ can be covered with three fragments containing the six tiles $A$, $B$, $C$, $D$, $E$, and $F$, and at most three additional tiles. This results in a cover ratio of at most only $9w^2 / \frac\pi4 w^2 = 36/\pi$ (for this particular choice of $Q$). A more detailed case analysis that takes the scanning order into account may thus give much better bounds on the cover ratios of large tilings than the bounds presented in this paper.

\paragraph{Worst-case versus average case.}
Arrwwid numbers only consider the \emph{worst-case} number of tiles or curve fragments that are needed to cover a query range. Therefore lower Arrwwid numbers do not necessarily give better disk efficiency on average. It is possible that optimising the worst case actually has an adverse effect on the average-case performance. In the case of square curves of size four there is some intuition to support this concern: getting an Arrwwid number of three requires the use of diagonal connections in this case. For any tile $A$, let $X(A)$ be the set of query ranges $Q$ such that $Q$ intersects $A$, but not the next tile of the same size in the scanning order. The total of $|X(A)|$ over all tiles $A$ gives an indication of the probability that after scanning a tile, we must either move the disk head to the next tile that intersects the query range, or we scan a tile with false answers only. Since $X(A)$ is larger when $A$ is connected diagonally to the next tile than when it is connected orthogonally, diagonal connections may lead to decreased performance on average.

\paragraph{Theory versus practice.}
In two dimensions, most widely known curves have Arrwwid number four (Peano's curve~\cite{Peano1890}, Hilbert's curve~\cite{Hilbert1891}, Lebesgue/Z-order~\cite{Lebesgue1904}, Sierpi\'nski/Knopp/H-order~\cite{Niedermeier2002,Sagan1994}), while Arrwwid-optimal curves have Arrwwid number three (the \ARRWW-curve, the Kochel curve, and Dekking's curve, all discussed in Section~\ref{sec:squarewithA3}). One may wonder if there are any practical settings in which it matters much whether three of four fragments are used to cover a query range in the worst case. Average-case performance seems to be more relevant when comparing these curves: how much time does it cost to scan a query range? I did some rough preliminary experiments for certain ratios between seek time (the cost of ``jumping'' over a curve fragment outside the query range) and scanning time (the cost of scanning a fragment, relative to its length), but I did not find big differences between the curves. Performance varied by at most 10\%, with the simple coil order (Figure~\ref{fig:coil}) giving the best performance, and no apparent advantage for curves with Arrwwid number three. Proper experiments would be needed to verify these observations and assess their validity.

\begin{figure}
\centering
\includegraphics[height=4cm]{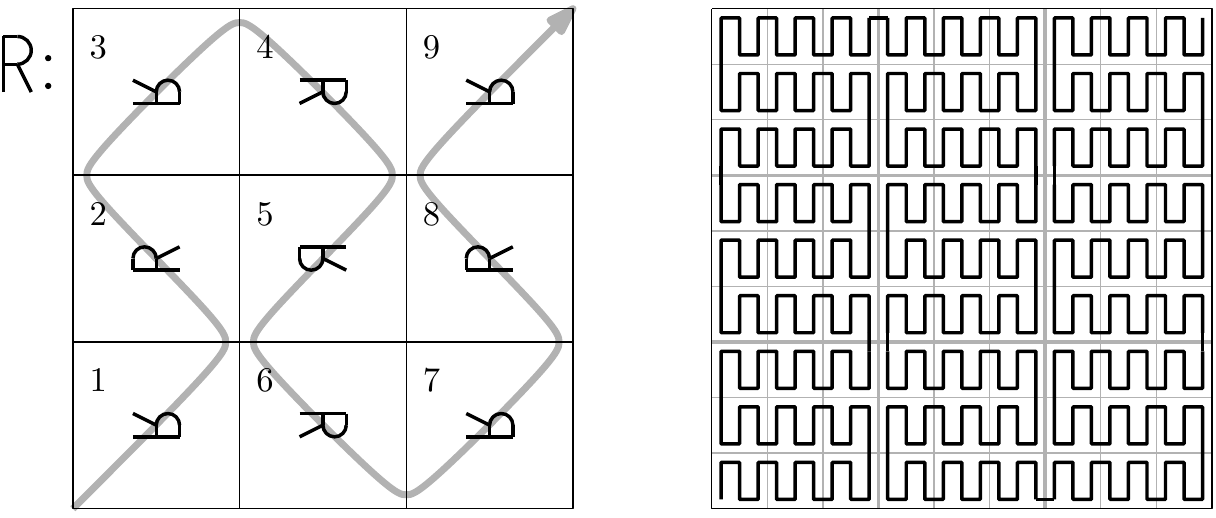}
\caption{Coil order.}
\label{fig:coil}
\end{figure}

In higher dimensions, the stakes are higher. The recursified shifted cube tiling might seem complicated to put to practical use, but one might be able to determine in which tile a point lies by analysing the digits of the coordinates in base-5 notation. This approach may seem interesting to explore because in higher dimensions, there is an exponentially increasing gap in Arrwwid number between regular hypercube tilings and recursified shifted hypercube tilings. On the other hand there is probably also an exponentially increasing gap in cover ratio, in favour of the regular hypercube tilings. Further analysis and possibly experiments would be needed to make a proper comparison.

It is therefore unclear yet whether the research described in this paper led to anything that may be useful in practice. This research was an investigation into where optimisation of the Arrwwid number leads us: What does it take to optimise the Arrwwid number? How limiting is it to consider only regular square or cube curves? What are tilings and curves with low Arrwwid numbers like? I hope that this work opened some new perspectives on how recursive tilings and space-filling curves can be constructed.

\section*{Acknowledgements}
I thank Riko Jacob for inviting me for a short visit to TU Munich, during which discussion with Michael Bader, Riko Jakob and Tobias Scholl about the merits of the \ARRWW curve provided the motivation to start the research described in this paper. I thank Elena Mumford for many inspiring discussions, and in particular for her contribution to the proof of Theorem~\ref{th:3dsfclb}. I found the Kochel curve during a hiking weekend in the Bavarian Alps, while staying in the town of Kochel am See. The Daun tiling was discovered during a hiking trip in the Eifel while staying in the town of Daun.

\bibliographystyle{abbrv}

\end{document}